\newcommand{\tr}{{\rm tr}}
\newcommand{\p}{\partial}
\newcounter{note}
\newcounter{notelist}
\newtheorem{theorem}{Theorem}[section]
\newtheorem{lemma}{Lemma}[section]
\newtheorem{corollary}{Corollary}[section]
\newtheorem{remark}{Remark}[section]
\newtheorem{proposition}{Proposition}[section]
\newtheorem{definition}{Definition}[section]
\numberwithin{equation}{section}
\begin{document}

\title{The Riemann-Hilbert approach to double scaling limit of random matrix eigenvalues near the "birth of a cut"
transition}
\author{M. Y. Mo}
\date{}
\maketitle

\begin{abstract}
In this paper we studied the double scaling limit of a random
unitary matrix ensemble near a singular point where a new cut is
emerging from the support of the equilibrium measure. We obtained
the asymptotic of the correlation kernel by using the
Riemann-Hilbert approach. We have shown that the kernel near the
critical point is given by the correlation kernel of a random
unitary matrix ensemble with weight $e^{-x^{2\nu}}$. This provides a
rigorous proof of the previous results in \cite{Ey}.
\end{abstract}
\section{Introduction}
In this paper we studied a double scaling limit of the unitary
random matrix model with the probability distribution
\begin{equation}\label{eq:rm}
Z_{n,N}^{-1}\exp(-N\tr (V(M)))dM,\quad
Z_{n,N}=\int_{\mathcal{H}_n}\exp(-N\tr (V(M)))dM
\end{equation}
defined on the space $\mathcal{H}_n$ of Hermitian $n\times n$
matrices $M$, where $V$ is real analytic and satisfies
\begin{equation*}
\lim_{x\rightarrow\pm\infty}\frac{V(x)}{\log(x^2+1)}=+\infty.
\end{equation*}
The eigenvalues $x_1,\ldots,x_n$ of the matrices in this ensemble is
distributed according to the probability distribution (See, e.g.
\cite{M}, \cite{D})
\begin{equation}\label{eq:proeig}
\mathcal{P}^{(n,N)}(x_1,\ldots,x_n)d^nx=\hat{Z}_{n,N}^{-1}e^{-N\sum_{j=1}^nV(x_i)}\prod_{j<k}(x_j-x_k)^2dx_1\ldots
dx_n,
\end{equation}
where $\hat{Z}_{n,N}$ is the normalization constant.

A particular important object is the $m$-point correlation function
$\mathcal{R}_m^{(n,N)}(x_1,\ldots,x_m)$
\begin{equation}\label{eq:corre}
\mathcal{R}_m^{(n,N)}(x_1,\ldots,x_m)=\frac{n!}{(n-m)!}\int_{\mathbb{R}}\cdots\int_{\mathbb{R}}
P^{(n,N)}(x_1,\ldots,x_n)dx_{m+1}\ldots dx_n.
\end{equation}
The correlation function is a very useful quantity in the
calculation of probabilities. In fact the 1-point correlation
function $\mathcal{R}_1^{(n,N)}(x)$ gives the probability density of
finding an eigenvalue at the point $x$. (Note that, however, the
$m$-point correlation function $\mathcal{R}_m^{(n,N)}$ is not a
probability density in general.)

A well-known result concerning the $m$-point correlation function is
that it admits a determinantal expression with a kernel constructed
from orthogonal polynomials. (See e.g. \cite{Dy}, \cite{M})

To be precise, let $\pi_n(x)$ be the degree $n$ monic orthogonal
polynomials with weight $e^{-NV(x)}$ on $\mathbb{R}$. \cite{Szego}
\begin{equation}\label{eq:op}
\int_{\mathbb{R}}\pi_n(x)\pi_m(x)e^{-NV(x)}dx=h_n\delta_{nm}.
\end{equation}
Let us construct the correlation kernel by
\begin{equation*}
K_{n,N}(x,x^{\prime})=e^{-\frac{1}{2}N(V(x)+V(x^{\prime}))}\sum_{j=0}^{n-1}\frac{\pi_j(x)\pi_j(x^{\prime})}{h_j}.
\end{equation*}
By the Christoffel-Darboux formula, this kernel can be expressed in
terms of the two orthogonal polynomials $\pi_n(x)$ and
$\pi_{n-1}(x)$ instead of the whole sum:
\begin{equation}\label{eq:kernel}
K_{n,N}(x,x^{\prime})=e^{-\frac{1}{2}N(V(x)+V(x^{\prime}))}\frac{\pi_n(x)\pi_{n-1}(x^{\prime})-\pi_n(x^{\prime})
\pi_{n-1}(x)}{h_{n-1}(x-x^{\prime})}
\end{equation}
Then the $m$-point correlation function can be written as the
determinant of the kernel (\ref{eq:kernel}) \cite{Dy}, \cite{M},
\cite{Po}
\begin{equation*}
\mathcal{R}_{m}^{(n,N)}(x_1,\ldots,x_m)=\det\left(K_{n,N}(x_j,x_k)\right)_{1\leq
j,k\leq m}
\end{equation*}

In the limit $n,N\rightarrow\infty$, $\frac{n}{N}\sim 1$, the
1-point correlation function $\mathcal{R}^{(n,N)}_1(x)$ of the
ensemble (\ref{eq:rm}) is asymptotic to the \it equilibrium measure
\rm $\rho(x)$ \cite{D}, \cite{Jo}, \cite{TS}:
\begin{equation*}
\lim_{n,N\rightarrow\infty,\frac{n}{N}\rightarrow1}\mathcal{R}_1^{(n,N)}(x)=\rho(x),
\end{equation*}
where the $\rho(x)dx=d\mu_{min}(x)$ is the density of the unique
measure $\mu_{min}(x)$ that minimizes the energy
\begin{equation*}
I(\mu)=-\int_{\mathbb{R}}\int_{\mathbb{R}}\log|x-y|d\mu(x)d\mu(y)+\int_{\mathbb{R}}V(x)d\mu(x)
\end{equation*}
among all Borel probability measures $\mu$ on $\mathbb{R}$. The fact
that $\mu_{min}(x)$ admits a probability density follows from the
assumption that $V(x)$ is real and analytic \cite{DKP}. Moreover, it
was shown in \cite{DKP} that for real and analytic $V(x)$, the
equilibrium measure is supported on a finite union of intervals.

The following conditions are satisfied by the equilibrium density
$\rho(x)$ \cite{D}, \cite{TS}
\begin{equation}\label{eq:ineq}
\begin{split}
&2\int_{\mathbb{R}}\log |x-s|\rho(s)ds-V(x)= l, \quad
x\in\textrm{Supp}(\rho(x)),\\
&2\int_{\mathbb{R}}\log|x-s|\rho(s)ds-V(x)\leq l, \quad
x\in\mathbb{R}/\textrm{Supp}(\rho(x)).
\end{split}
\end{equation}
For a generic potential $V(x)$, the inequality in (\ref{eq:ineq}) is
satisfied strictly
\begin{equation*}
\begin{split}
&2\int_{\mathbb{R}}\log|x-s|\rho(s)ds-V(x)> l, \quad
x\in\mathbb{R}/\textrm{Supp}(\rho(x)).
\end{split}
\end{equation*}
However, for some special potential $V(x)$, this inequality may not
be strict and we may have
\begin{equation*}
\begin{split}
&2\int_{\mathbb{R}}\log|x-s|\rho(s)ds-V(x)= l, \quad x=x^{\ast}
\end{split}
\end{equation*}
at some point $x^{\ast}\notin\textrm{Supp}(\rho(x))$. In this case,
if we change the potential slightly, a new interval may emerge from
the support of the equilibrium measure. This is the `birth of new
cut' critical limit that we are going to consider in this paper.

According to \cite{DKV}, the `birth of new cut' critical limit is a
singularity of type I for the random matrix model (\ref{eq:rm}).
Other singular cases include:
\begin{enumerate}
\item Type II singularity is the case where the equilibrium density
vanishes at a point $x^{\ast}$ inside the support.

\item Type III singularity is the case where the equilibrium density
vanishes faster than a square-root at an edge point $x^{\ast}$ of
the support. (Generically it vanishes like a square-root at the
edge)
\end{enumerate}

The asymptotic behavior of a random matrix ensemble near singular
points has been studied extensively \cite{BE}, \cite{BI2},
\cite{BK}, \cite{C}, \cite{CK}, \cite{CK2}, \cite{CKV}, \cite{CV},
\cite{DK}, \cite{KI}, \cite{Sc}. In these studies, one considers a
one or multi-parameter family of potential $V_{t_j}(x)$ in which the
singular point is achieved at $t_j=t_j^c$. One then studies the
asymptotic behavior of the random matrix model (\ref{eq:rm}) when
$t_j$ is close to $t_j^c$. The `double scaling limit' is the study
of the these asymptotic behavior when the differences between $t_j$
and $t_j^c$ are coupled with $n$ and $N$. A remarkable feature is
that in the double scaling limit, a universality can be observed.
Upon a suitable scaling of the variables $x$ and $x^{\prime}$, the
asymptotic behavior of the kernel (\ref{eq:kernel}) near the
critical point $x^{\ast}$ depends only on the type of singularity
rather than the potential $V(x)$ itself.

In many cases, the behavior of the kernel in a double scaling limit
is described by integrable hierarchies such as the Painlev\'e
equations. In the case of the type II singularity, \cite{BI},
\cite{CK}, \cite{CKV} and \cite{Sc} has shown that the kernel can be
described by the Hastings-McLeod solution of the Painlev\'e II
equation in the double scaling limit. While for the type III
singularity, the kernel can be described by the Painlev\'e I
transcendent \cite{CV}, \cite{DK}. In \cite{Ey}, the double scaling
limit of the `birth of new cut' was studied and the kernel was
described by the orthogonal polynomials with weight $e^{-x^{2\nu}}$
on the real axis. However, the formulae derived in \cite{Ey} have
not been rigorously proven and it is the purpose of this paper to
provide a rigorous proof of these results.

\subsection{Statement of results}

We should now introduce some notations and state the results in this
paper.

In this paper, we should consider a one parameter family of
potential $V_t(x)=\frac{V(x)}{t}$ parametrised by $t=\frac{n}{N}$.
We should consider the double scaling limit of $t\rightarrow 1$ and
$n$, $N\rightarrow\infty$ such that
\begin{subequations}
\begin{align}
\lim_{n,N\rightarrow\infty}\frac{\log
n}{n}\left(\frac{n}{N}-1\right)&=U_+> 0,\quad n> N\label{eq:scale00}\\
\lim_{n,N\rightarrow\infty}n^{k}\left(\frac{n}{N}-1\right)&=U_-\leq
0,\quad n\leq N, \quad k\in
\left[1-\frac{1}{2\nu},\infty\right)\label{eq:scalep}
\end{align}
\end{subequations}
exist. In particular, for $t\leq 1$, we considered the regime where
$t-1$ is of order $n^{-k}$ for any $k$ greater than or equal to
$1-\frac{1}{2\nu}$, while the scaling for $t>1$ is fixed.

Let us now state the assumptions that are used in this study. Since
the main point of this study is the treatment of the critical point
$x^{\ast}$, we will assume the followings:
\begin{enumerate}
\item The support of the equilibrium density $\rho(x)$ consists of one
interval only, that is, the first equation of (\ref{eq:ineq}) holds
precisely on a single interval $(a,b)$. Without lost of generality,
we will assume that $a=-2$ and $b=2$.

\item The equilibrium measure does not vanish at any interior point
of $(-2,2)$.

\item The point $x^{\ast}$ is the only point outside
$\textrm{Supp}(\rho)(x)$ where the inequality in (\ref{eq:ineq}) is
not strict and we assume that $x^{\ast}>2$.

\item As pointed out in \cite{DKV2}, the function
$2\int_{\mathbb{R}}\log|x-s|\rho(x)ds-V(x)-l$ vanishes to an even
order at $x^{\ast}$. We will assume that this order of vanishing is
$2\nu$.
\end{enumerate}
Let the equilibrium measure of $V_t(x)$ be $\rho^t(x)$ such that
\begin{equation}\label{eq:rhot}
\begin{split}
&2\int_{\mathbb{R}}\log |x-s|\rho^t(s)ds-V_t(x)= l_t, \quad
x\in\textrm{Supp}(\rho^t(x)),\\
&2\int_{\mathbb{R}}\log|x-s|\rho^t(s)ds-V_t(x)\leq l_t, \quad
x\in\mathbb{R}/\textrm{Supp}(\rho^t(x)),
\end{split}
\end{equation}
and denote by $c_{x^{\ast}}$ the following
\begin{equation}\label{eq:cstar}
c_{x^{\ast}}=n\left(\frac{V_t(x^{\ast})}{2}+\frac{l_{t}}{2}-\int_{\mathbb{R}}\rho^t(s)\log|x^{\ast}-s|ds\right)\geq
0.
\end{equation}
It is known that both $t\rho^t(x)$ and the support of $\rho^t(x)$
are increasing with $t$ \cite{KM}, \cite{DK}, \cite{TS}, \cite{T}.
In particular, for $t\leq 1$, the equilibrium measure is supported
on one interval while for $t$ slightly greater than 1, the
equilibrium measure is supported on 2 intervals.

Let $\mathcal{S}_t$ be the support of $\rho^t(x)$. Then in
\cite{BR}, it was shown that the equilibrium measure
$d\mu_t(x)=\rho^t(x)dx$ satisfies the Buyarov-Rakhmanov equation
\begin{equation}\label{eq:br}
\mu_t=\frac{1}{t}\int_0^t\omega_{\mathcal{S}_{\tau}}d\tau,
\end{equation}
where $\omega_{\mathcal{S}_{\tau}}$ is the equilibrium measure of
the set $\mathcal{S}_{\tau}$. Namely, it is the unique probability
measure supported on $\mathcal{S}_{\tau}$ that minimizes the
logarithmic potential
\begin{equation*}
\begin{split}
I(\tilde{\mu})=\int\int -\log|s-t|d\tilde{\mu}(s)d\tilde{\mu}(t)
\end{split}
\end{equation*}
among all the Borel probability measures $\tilde{\mu}$ supported on
$\mathcal{S}_{\tau}$.

If $\mathcal{S}_{\tau}$ consists of one interval only, then
$\omega_{\mathcal{S}_{\tau}}(x)$ is given by
\begin{equation*}
\omega_{\mathcal{S}_{\tau}}=\frac{1}{\pi\sqrt{(b_{\tau}-x)(x-a_{\tau})}}dx,\quad
x\in (a_{\tau},b_{\tau}).
\end{equation*}
In particular, we have, at $t=1$
\begin{equation}\label{eq:br1}
\lim_{t\rightarrow
1}\frac{t\mu_t(x)-\mu(x)}{t-1}=\frac{1}{\pi\sqrt{4-x^2}}dx=w(x)dx.
\end{equation}
The fact that $w(x)dx$ is the equilibrium measure on the interval
$[-2,2]$ means that
\begin{equation}\label{eq:wineq}
\begin{split}
&\int_{-2}^2w(s)\log|x-s|ds=\frac{\varsigma}{2},\quad x\in [-2,2],\\
&\int_{-2}^2w(s)\log(x-s)ds=\log x+O(1),\quad x\rightarrow\infty
\end{split}
\end{equation}
for some constant $\varsigma$.

Let us defined a function $\phi(x)$ that is closely related to
$w(x)dx$.
\begin{equation}\label{eq:phix}
\phi(x)=\frac{\varsigma}{2}+\int_{-2}^2w(s)\log(x^{\ast}-s)ds.
\end{equation}

In this paper, we will use an anzatz in \cite{Ey} to construct an
approximated equilibrium density $\tilde{\rho}^t(x)$ for $t>1$ and
use it to modify the Riemann-Hilbert problem of the orthogonal
polynomials (\ref{eq:op}).

We shall denote the correlation kernel for the random matrix model
\begin{equation}\label{eq:fintierm}
Z_{m,\nu}^{-1}\exp(-\tr (M^{2\nu}))dM,\quad
Z_{m,\nu}=\int_{\mathcal{H}_{m}}\exp(-\tr M^{2\nu})dM
\end{equation}
by $K_{m}^{\nu}(x,x^{\prime})$. That is,
\begin{equation}\label{eq:kfinite}
K_{m}^{\nu}(x,x^{\prime})=e^{-\frac{(x^{\prime})^{2\nu}+x^{2\nu}}{2}}\frac{\pi_{m}^{\nu}(x)\pi_{m-1}^{\nu}(x^{\prime})
-\pi_{m}^{\nu}(x^{\prime})\pi_{m-1}^{\nu}(x)}{h_{m-1}^{\nu}(x-x^{\prime})},
\end{equation}
where $\pi_{m}^{\nu}(x)$ is the degree $m$ monic orthogonal
polynomial on $\mathbb{R}$ with respect to the weight
$e^{-x^{2\nu}}$ and $h_{m}^{\nu}$ is the corresponding normalization
constant as in (\ref{eq:op}).

We can now state our main result.
\begin{theorem}\label{thm:main}
Let $V(x)$ be real and analytic on $\mathbb{R}$ such that
$\lim_{x\rightarrow\pm\infty}\frac{V(x)}{\log (x^2+1)}=+\infty$. Let
$\rho(x)$ be the density of the equilibrium measure of $V(x)$
supported on the interval $[-2,2]$. Then
\begin{equation*}
\rho(x)=\frac{\sqrt{4-x^2}Q(x)(x-x^{\ast})^{2\nu-1}}{2\pi},x\in[-2,2],
\end{equation*}
where $x^{\ast}>2$ and $Q(x)$ is real analytic on $\mathbb{R}$ with
$Q(x^{\ast})>0$.

Let $n$, $N\rightarrow\infty$ such that (\ref{eq:scale00}) and
(\ref{eq:scalep}) hold and let $u$, $\overline{u}$ be the following
\begin{equation}
\begin{split}
u&=2\nu\phi(x^{\ast})U_+\\
\overline{u}&=\left[2\nu\phi(x^{\ast})U_++\frac{1}{2}\right]
\end{split}
\end{equation}
where $\phi(x^{\ast})$ is defined in (\ref{eq:phix}) and $[x]$ is
the greatest integer that is smaller than or equal to $x$.

Let $K_{n,N}$ be the correlation kernel (\ref{eq:kernel}), then for
$u\notin\mathbb{N}+\frac{1}{2}$, the limit of the kernel is given by
\begin{subequations}
\begin{align}
&\lim_{n,N\rightarrow\infty}\frac{1}{\varphi(x^{\ast})n^{\frac{1}{2\nu}}}K_{n,N}\left(x^{\ast}+\frac{z}{\varphi(x^{\ast})n^{\frac{1}{2\nu}}},
x^{\ast}+\frac{z^{\prime}}{\varphi(x^{\ast})n^{\frac{1}{2\nu}}}\right)
=K_{\overline{u}}^{\nu}(z,z^{\prime}),\quad n>N\label{eq:critker},\\
\begin{split}
&\lim_{n,N\rightarrow\infty}e^{c_{x^{\ast}}}K_{n,N}\left(x^{\ast}+\frac{z}{\varphi(x^{\ast})n^{\frac{1}{2\nu}}}
,x^{\ast}+\frac{z^{\prime}}{\varphi(x^{\ast})n^{\frac{1}{2\nu}}}\right)\\
&=e^{-\frac{z^{2\nu}+(z^{\prime})^{2\nu}}{2}}\frac{1}{8\pi}
\left(\frac{1}{x^{\ast}-\beta_t}-\frac{1}{x^{\ast}-\alpha_t}\right)
,\quad n\leq N\label{eq:critker1}.
\end{split}
\end{align}
\end{subequations}
where $K_{\overline{u}}^{\nu}(z,z^{\prime})$ is defined in
(\ref{eq:kfinite}) and $c_{x^{\ast}}$ is defined in (\ref{eq:cstar})
and $\varphi(x^{\ast})$ is given by
\begin{equation*}
\varphi(x^{\ast})=\left(\frac{Q(x^{\ast})\sqrt{(x^{\ast})^2-4}}{2\nu}\right)^{\frac{1}{2\nu}}.
\end{equation*}
\end{theorem}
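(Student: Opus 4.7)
The plan is to apply the Deift-Zhou steepest descent analysis to the $2\times 2$ Riemann-Hilbert problem of Fokas-Its-Kitaev whose solution $Y(z)$ is built from $\pi_n$ and $\pi_{n-1}$, and from which the kernel (\ref{eq:kernel}) is extracted. Because the critical point $x^{\ast}$ lies just outside the support, the situation splits into two qualitatively different regimes along (\ref{eq:scale00}) and (\ref{eq:scalep}), and in each one I would carry out the usual chain of transformations $Y \to T \to S \to R$, but with a \emph{nonstandard} choice of the $g$-function and a \emph{new} local parametrix centered at $x^{\ast}$.

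For $t=n/N \leq 1$, the support is the single interval $\mathcal{S}_t$ and the first transformation $Y \to T$ uses the ordinary $g$-function $g_t(z) = \int \log(z-s)\,d\mu_t(s)$ associated with the true equilibrium measure $\rho^t$; I would then open lenses in the usual way, use the outside parametrix built from the Szegő function, and Airy parametrices at the soft edges $\alpha_t,\beta_t$. The novelty is that the variational inequality in (\ref{eq:ineq}) is nearly non-strict near $x^{\ast}$, so the residual factor $e^{-2nc_{x^{\ast}}}$ appearing in the jumps is not exponentially negligible on the scale (\ref{eq:scalep}); I would carry this factor through the final $R$-RHP analysis and read off (\ref{eq:critker1}) from a direct Christoffel-Darboux extraction, using the Buyarov-Rakhmanov relation (\ref{eq:br}) together with (\ref{eq:br1}) to identify the two pole contributions at $\alpha_t$ and $\beta_t$ and produce the prefactor $\tfrac{1}{8\pi}\bigl(\tfrac{1}{x^{\ast}-\beta_t}-\tfrac{1}{x^{\ast}-\alpha_t}\bigr)$.

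For $t>1$ the support genuinely splits into two intervals, and here I would not use the true equilibrium measure $\rho^t$ (whose support has emerged far from $x^{\ast}$ on the wrong scale) but the Eynard anzatz $\tilde\rho^t$ introduced before the theorem, which pretends that the infinitesimal new interval is still concentrated near $x^{\ast}$. The corresponding modified $g$-function produces jumps that are exponentially decaying on the lens contours emerging from $[-2,2]$, oscillatory on $[-2,2]$ as usual, and, crucially, \emph{essentially} governed by the weight $e^{-x^{2\nu}}$ inside a small shrinking disk $D$ around $x^{\ast}$. This is where $\phi$ defined in (\ref{eq:phix}) enters: the local coordinate $\zeta = \varphi(x^{\ast}) n^{1/(2\nu)}(z-x^{\ast})$ together with (\ref{eq:wineq}) converts the jumps on $D$ exactly to those of the RH problem for the monic orthogonal polynomials $\pi_m^{\nu}$ with weight $e^{-x^{2\nu}}$, at an integer degree $m$ determined by balancing $n\mu_t$-mass against the growth of $\phi$; the scaling (\ref{eq:scale00}) is precisely what forces this balance to produce $m = \overline{u} = [2\nu\phi(x^{\ast})U_+ + 1/2]$. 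The local parametrix at $x^{\ast}$ is then declared to be the $2\times 2$ matrix built from $\pi_{\overline u}^{\nu}$ and $\pi_{\overline u -1}^{\nu}$, matched to the outside parametrix through the standard Szegő-type factor pulled out of the weight.

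The main obstacle is the construction and matching of this local $\pi^{\nu}$-parametrix to the outside parametrix on $\partial D$: one must verify that with the Eynard anzatz the phase that multiplies the $\pi_m^{\nu}$ building blocks is \emph{holomorphic} on $D$ and uniformly of size $1+O(n^{-\epsilon})$ on $\partial D$, so that the final $R$-RHP has small-norm jumps and $R = I + O(n^{-\epsilon})$. Once this is established, unraveling the transformations $R\to S\to T\to Y$ and inserting the result into (\ref{eq:kernel}) gives (\ref{eq:critker}) after one checks that the exclusion $u \notin \mathbb{N}+1/2$ avoids the degenerate values at which $\overline u$ jumps discontinuously, i.e.\ at which two $h_m^{\nu}$ normalization constants would need to be compared. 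The exceptional jump set is thus an artifact of the integer part appearing in $\overline u$, and away from it the error estimates are uniform, yielding the claimed universality.
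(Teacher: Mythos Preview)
Your overall architecture matches the paper's: Deift--Zhou steepest descent on the Fokas--Its--Kitaev RHP, the true equilibrium measure for $t\leq 1$, the Eynard ansatz $\tilde\rho^t$ for $t>1$, Airy parametrices at the soft edges, and a local parametrix at $x^{\ast}$ built from the $\pi_m^{\nu}$ RHP. But there is a genuine gap in your outer-parametrix construction for $t>1$.

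When the emerging cut is replaced by a point charge $\tfrac{u^t}{n}\delta_{x^{\ast}}$ in the $g$-function (as the paper in fact does), $g^t$ acquires the jump $g^t_+-g^t_-=2\pi i\, u^t/n$ on $(\beta_t,x^{\ast})$, so after the first transformation the jump on that segment is $\mathrm{diag}(e^{-2\pi i u^t},e^{2\pi i u^t})$, which is \emph{not} close to $I$ since $u^t$ is generically non-integer. A ``standard Szego-type factor pulled out of the weight'' does not remove this. The paper constructs a separate scalar $F(x)$---a degenerate Abelian integral, the nodal limit of the holomorphic differential on the elliptic curve $z^2=(x-\alpha_t)(x-\beta_t)((x-x^{\ast})^2-\sigma_t^2)$---with $F_+=-F_-$ on $[\alpha_t,\beta_t]$, $F_+=F_-+2\pi i$ on $[\beta_t,x^{\ast}]$, and $F(x)=\log(x-x^{\ast})+O(1)$ near $x^{\ast}$, and dresses the one-cut parametrix $\Pi(x)$ by $e^{-(u^t-\overline u^t)F(x)\sigma_3}$. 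This simultaneously (i) kills the constant jump on $(\beta_t,x^{\ast})$ and (ii) makes the matching matrix $E(x)=S^{\infty}(x)\zeta^{(u^t-\overline u^t)\sigma_3}e^{Z_t\sigma_3}$ analytic at $x^{\ast}$, because the logarithmic singularity of $F$ cancels against $\log\zeta$. Without $F$ your matching matrix is not holomorphic in the disk and the final $R$-problem is not small-norm. A second correction you omit: since $\tilde\rho^t$ is only an \emph{approximate} equilibrium measure, the jump on $[\alpha_t,\beta_t]$ carries an extra factor $e^{2D_n(x)}$ with $D_n=O(1)$ under (\ref{eq:scale00}), and an actual Szego function $K(x)$ (distinct from $F$) is needed to absorb it into $S^{\infty}$.

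For $t\leq 1$ you also still need a local parametrix at $x^{\ast}$: the off-diagonal jump entry there is $e^{-\zeta^{2\nu}+2Z_t+O(1)}$, which is not uniformly negligible on $\partial B_{\delta}^{x^{\ast}}$ under (\ref{eq:scalep}). The paper uses the trivial parametrix $\Psi=I+(\text{upper-triangular Cauchy transform of }e^{-\zeta^{2\nu}+\tau\zeta})$ in the $\zeta$-plane; the prefactor in (\ref{eq:critker1}) then drops out of $(0\ 1)\,\Pi^{-1}(x^{\ast})\Pi'(x^{\ast})\bigl(\begin{smallmatrix}1\\0\end{smallmatrix}\bigr)$ directly, rather than from a Buyarov--Rakhmanov pole computation.
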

The result shows that for $u\notin\mathbb{N}+\frac{1}{2}$, the
correlation kernel near $x^{\ast}$ for $t>1$ is given by the
correlation kernel of a finite random matrix ensemble
(\ref{eq:fintierm}) with size $[u+\frac{1}{2}]$. This confirms the
results in \cite{Ey}. When $u$ goes pass a half integer, the size of
the finite random matrix ensemble jumps by 1 and a non-trivial
transition takes place. This is due to the non-uniform converges of
(\ref{eq:critker}) in $u$ when $u$ is close to a half integer. When
$u$ is close to a half integer, error terms that depends on
$K_{\overline{u}\pm 1}^{\nu}$ which are not seen in
(\ref{eq:critker}) become significant and start taking over the
$K_{\overline{u}}^{\nu}$ terms, which results in a jump when $u$
goes pass a half integer.

Note that (\ref{eq:critker1}) implies that the leading order term of
the kernel at $x^{\ast}$ is $e^{c_{x^{\ast}}}$. This leading term
tends to zero when $n$, $N\rightarrow\infty$ unless $t=1$. This is
not surprising as for $t< 1$, there is no eigenvalue near the point
$x^{\ast}$ and the correlation kernel should be vanishing near
$x^{\ast}$ in the limit.

\begin{remark} Claeys \cite{C} has simultaneously and independently
used the Riemann-Hilbert method to study the birth of new cut double
scaling limit. In Claeys \cite{C}, the case when $\nu=1$ was studied
and the Hermite polynomials was used to construct the asymptotic
kernel. Despite the similarity of our work to \cite{C}, a very
different treatment to the equilibrium measure was used in \cite{C}.
In \cite{C}, the equilibrium measure with total mass
$1-2\frac{t-1}{\log n}\phi(x^{\ast})$ was used to construct the
`$g$-function' for the Deift-Zhou steepest decent method when $t>1$.
Whereas in this paper, we approximated the equilibrium measure by
solving the Buyarov-Rakhmanov equation (\ref{eq:br}) up to a certain
order in $t-1$. We then use this approximated measure to construct
the `$g$-function' for the Deift-Zhou steepest decent method. Also
worth remarking is that in \cite{C}, the behavior of the kernel when
$u$ is close to a half integer was studied.
\end{remark}

This paper is organized as follows. In section \ref{se:equil} we
will use the ansatz obtained in \cite{Ey} to construct an
approximated equilibrium density for $t>1$. We then show that
conditions of the type (\ref{eq:ineq}) are satisfied for this
approximated density outside some neighborhoods of the edge points
and the critical point. We then study the error terms in these
conditions.

In section \ref{se:RH} we will apply the Deift-Zhou steepest decent
method to the Riemann-Hilbert problem of the orthogonal polynomials
(\ref{eq:op}). We will use the approximated density to construct a
`$g$-function' and use it to modify the Riemann-Hilbert problem. We
then approximate this modified Riemmann-Hilbert problem by a
Riemann-Hilbert problem that can be solved explicitly and construct
parametrices to solve this approximated Riemann-Hilbert problem.
These parametrices then give us the asymptotics of the orthogonal
polynomials (\ref{eq:op}). These asymptotics will then be used to
derive the asymptotics of the kernel (\ref{eq:kernel}) in section
\ref{se:ker}.

\section{Equilibrium measure}\label{se:equil}

We will now study the behavior of the equilibrium measure
$\rho^t(x)$ (\ref{eq:rhot}) when $t$ is close to 1. Let $t$ be a
real parameter and let us define
\begin{equation*}
V_t(x)=\frac{1}{t}V(x), \quad t>0.
\end{equation*}
Then $V_1(x)=V(x)$. We shall consider the case when $t\leq 1$ and
$t>1$ separately. For $t>1$, we will replace the eigenvalues on the
newborn interval by a point charge. Let the support of the
equilibrium measure $\mathcal{S}_t$ be
\begin{equation}\label{eq:supp}
\begin{split}
\mathcal{S}_t&=[a_t,b_t],\quad t\leq 1\\
\mathcal{S}_t&=[a_t,b_t]\cup[c_t,d_t],\quad t>1.
\end{split}
\end{equation}
Let us define the function $h^t$ by
\begin{equation}\label{eq:g}
h^t(x)=\int_{\mathbb{R}}\log (x-s)d\mu_t(s)
\end{equation}
where the principal branch of the logarithm is taken in the above,
\begin{equation*}
\begin{split}
\log(x-s)&=\log|x-s|+i\arg(x-s)\\
0&<\arg(x-s)<\pi,\quad s\in\mathbb{R},\quad \Im x>0,\\
-\pi&<\arg(x-s)<0,\quad s\in\mathbb{R},\quad \Im x,0.
\end{split}
\end{equation*}
The boundary values of $h^t(x)$ on the real axis are then
\begin{equation*}
h^t_{\pm}(x)=\int_{\mathbb{R}}\log |x-s|d\mu_t(s)\pm\pi
i\int_{a_t}^xd\mu_t(s)
\end{equation*}
In particular, the function $h^t$ is analytic on
$\mathbb{C}/[a_t,\infty)$ and it satisfies the following
\begin{equation}\label{eq:hineq}
\begin{split}
&h_+^t(x)+h_-^t(x)-V_t(x)+l_t=0,\quad x\in [a_t,b_t]\cup[c_t,d_t]\\
&h_+^t(x)+h_-^t(x)-V_t(x)+l_t<0,\quad x\in\mathbb{R}/\left([a_t,b_t]\cup[c_t,d_t]\cup\{x^{\ast}\}\right)\\
&h_+^t(x)-h_-^t(x)=2\pi i\int_{x}^{b_t}d\mu_t(s),\quad x\in
\mathbb{R}\\
&h^t(x)=\log x+O(x^{-1})\quad x\rightarrow\infty
\end{split}
\end{equation}
In \cite{DKP}, it was shown that for a real analytic potential
$V(x)$ on $\mathbb{R}$, the equilibrium measure $d\mu_t(s)$ can be
expressed in terms of the negative part of an analytic function
$q_t(x)$.
\begin{theorem}\cite{DKP}\label{thm:curve}
Let $V(x)$ be real analytic in a neighborhood $\mathcal{V}$ of the
real axis and let $q_t(x)$ be the following function
\begin{equation}\label{eq:qt}
q_t(x)=\left(\frac{V^{\prime}(x)}{2t}\right)^2-\frac{1}{t}\int_{\mathbb{R}}\frac{V^{\prime}(x)-V^{\prime}(y)}{x-y}d\mu_t(y),\quad
x\in\mathcal{V}.
\end{equation}
Then the equilibrium measure has a density $\rho^t(x)$ which can be
written as
\begin{equation*}
\rho^t(x)=\frac{1}{\pi}\sqrt{-q_t^-(x)},
\end{equation*}
where $q_t^-(x)$ is the negative part of $q_t(x)$, that is,
\begin{equation*}
q_t(x)=q_t^+(x)+q_t^-(x),\quad q_t^+(x)\geq 0,\quad q_t^-(x)\leq 0.
\end{equation*}
Moreover, we have the following
\begin{equation}\label{eq:curve}
q_t(x)=\left(\int_{\mathbb{R}}\frac{\rho^t(y)}{y-x}dy+\frac{V^{\prime}(x)}{2t}\right)^2,\quad
x\in\mathcal{V}.
\end{equation}
\end{theorem}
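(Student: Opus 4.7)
The plan is to derive the identity directly from the variational characterization (\ref{eq:rhot}) of $\rho^t$ by combining its differentiated form with Sokhotski--Plemelj boundary values and a symmetrized partial-fraction identity. First I would differentiate the first line of (\ref{eq:rhot}) in $x\in\mathrm{supp}(\rho^t)$ to obtain the Euler--Lagrange singular integral equation
\begin{equation*}
\mathrm{p.v.}\!\int_{\mathbb{R}}\frac{\rho^{t}(y)}{y-x}\,\rd y+\frac{V'(x)}{2t}=0,\qquad x\in\mathrm{supp}(\rho^{t}).
\end{equation*}
This identifies the principal-value Cauchy transform of $\rho^{t}$ on the support with $-V'(x)/(2t)$, which is the analytic input that drives everything else.

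Next, introduce the Cauchy transform $G(x)=\int \rho^{t}(y)/(x-y)\,\rd y$ and
\begin{equation*}
F(x)=-G(x)+\frac{V'(x)}{2t}=\int\frac{\rho^{t}(y)}{y-x}\,\rd y+\frac{V'(x)}{2t},
\end{equation*}
which is analytic on $\mathcal{V}\setminus\mathrm{supp}(\rho^{t})$. The Plemelj formulas give $G_{\pm}=G^{\mathrm{p.v.}}\mp \ri\pi\rho^{t}$ on the support, so by the variational equation $F_{+}(x)+F_{-}(x)=0$ and $F_{\pm}(x)=\pm\ri\pi\rho^{t}(x)$ for $x\in\mathrm{supp}(\rho^{t})$. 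In particular $F^{2}$ has identical boundary values from both sides, hence by Morera extends analytically across the support to all of $\mathcal{V}$.

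The heart of the proof is the identification $F^{2}=q_{t}$ on $\mathcal{V}$. Expanding the square of $F$ and comparing with (\ref{eq:qt}), this reduces to
\begin{equation*}
G(x)^{2}=\frac{1}{t}\int_{\mathbb{R}}\frac{V'(y)\rho^{t}(y)}{x-y}\,\rd y,\qquad x\in\mathbb{C}\setminus\mathrm{supp}(\rho^{t}),
\end{equation*}
which I would establish by writing $G(x)^{2}$ as a double integral against the symmetric kernel $1/[(x-y)(x-z)]$, applying the partial fraction $\tfrac{1}{(x-y)(x-z)}=\tfrac{1}{y-z}\bigl(\tfrac{1}{x-y}-\tfrac{1}{x-z}\bigr)$, symmetrizing in $y\leftrightarrow z$, and using the variational relation on the inner $z$-integral to replace $\mathrm{p.v.}\!\int\rho^{t}(z)/(y-z)\,\rd z$ by $V'(y)/(2t)$. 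The main obstacle is precisely the analytic justification of this manipulation: the split kernel $1/[(y-z)(x-y)]$ is not integrable across the diagonal $y=z$, so the argument must proceed by restricting to $|y-z|>\varepsilon$, symmetrizing, and passing to the limit $\varepsilon\to 0$ via dominated convergence while showing that the cutoff contribution near the diagonal cancels out to produce the principal-value prescription; one also needs enough regularity of $\rho^{t}$ at the edges (Hölder continuity off the endpoints, as furnished by \cite{DKP}) to apply Plemelj pointwise in the inner integration.

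Once $F^{2}=q_{t}$ is proven off the support, analytic continuation extends it throughout $\mathcal{V}$. On $\mathrm{supp}(\rho^{t})$ one then reads $q_{t}(x)=F_{+}(x)^{2}=-\pi^{2}\rho^{t}(x)^{2}\leq 0$, whereas on $\mathcal{V}\setminus\mathrm{supp}(\rho^{t})$ the function $F$ is real and so $q_{t}(x)=F(x)^{2}\geq 0$. Therefore $q_{t}^{-}(x)=-\pi^{2}\rho^{t}(x)^{2}$ on the support and $q_{t}^{-}\equiv 0$ elsewhere, which yields $\rho^{t}(x)=\pi^{-1}\sqrt{-q_{t}^{-}(x)}$ as well as formula (\ref{eq:curve}), completing the proof of Theorem \ref{thm:curve}.
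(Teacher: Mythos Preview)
The paper does not actually prove Theorem~\ref{thm:curve}: it is quoted verbatim from \cite{DKP} and invoked as a known result, with no argument supplied. So there is no ``paper's own proof'' to compare against.

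That said, your sketch is the standard route to this identity and is essentially correct. The reduction of $F^{2}=q_{t}$ to
\[
G(x)^{2}=\frac{1}{t}\int_{\mathbb{R}}\frac{V'(y)\rho^{t}(y)}{x-y}\,\rd y
\]
via the symmetrized partial-fraction argument and the differentiated Euler--Lagrange relation is exactly how \cite{DKP} proceeds. Your identification of the one genuinely delicate point---justifying the interchange of order and the principal-value limit near the diagonal $y=z$, which requires the H\"older regularity of $\rho^{t}$ away from the endpoints---is on target; this is precisely where \cite{DKP} does the analytic work. One small addition you might make explicit: the analytic continuation of $F^{2}$ across the support relies not only on matching boundary values but also on $F^{2}$ being locally bounded near the endpoints $a_{t},b_{t}$, so that the removable-singularity argument goes through there as well.
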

\subsection{Approximated equilibrium measure for $t>1$}
For $t>1$, a new cut in the support of the equilibrium measure is
emerging at $x=x^{\ast}$. We would like to find an approximation to
the equilibrium measure and study its properties.

The Buyarov-Rakhmanov equation (\ref{eq:br}) for the equilibrium
measure is a nonlinear ODE which is difficult to solve. In
\cite{Ey}, an ansatz was used to solve this differential equation up
to some leading order terms in $t-1$. As this ODE becomes singular
at $t=1$, it is difficult to prove rigorously that the solution in
\cite{Ey} does indeed give the equilibrium measure for $t$ slightly
greater than 1.

Instead of showing that the solution obtained in \cite{Ey} gives the
correct equilibrium measure for $t>1$, we would use the ansatz in
\cite{Ey} to construct an approximated density $\tilde{\rho}^t(x)$,
together with a function $\tilde{h}^t(x)$ analogue to the function
$h^t(x)$ defined in (\ref{eq:g}). We will then show that this
approximated density satisfies conditions of the type
(\ref{eq:ineq}) up to a certain order in $t-1$.

First note that the function $q_t(x)$ defined in (\ref{eq:qt}) has
the following form at $t=1$.
\begin{equation}\label{eq:q}
\sqrt{q(x)}=\frac{1}{2}Q(x)(x-x^{\ast})^{2\nu-1}\sqrt{x^2-4},
\end{equation}
where $Q(x)$ is analytic in a neighborhood $\mathcal{V}$ of the real
axis.

We can now define a function $\tilde{q}^t(x)$ analogous to $q_t(x)$.
\begin{definition}\label{de:q}
Let $\delta t=t-1>0$. Then the function $\tilde{q}^t(x)$ is defined
by
\begin{equation}\label{eq:tqt}
\sqrt{\tilde{q}^t(x)}=\frac{\sqrt{(x-\alpha_t)(x-\beta_t)}}{2}\left(Q(x)H_t(x)\sqrt{(x-x^{\ast})^2-4y^2\left(-\frac{\delta
t}{\log \delta t}\right)^{\frac{1}{\nu}}}+\eta(x)\delta t\right),
\end{equation}
where $\alpha_t$ and $\beta_t$ are,
\begin{equation}\label{eq:alp}
\alpha_t=-2+\frac{\delta t}{(2+x^{\ast})^{2\nu-1}Q(-2)},\quad
\beta_t=2-\frac{\delta t}{(x^{\ast}-2)^{2\nu-1}Q(2)}.
\end{equation}
while $H_t(x)$ is a monic polynomial defined by
\begin{equation}\label{eq:H}
\begin{split}
H_t(x)=(x-x^{\ast})^{2\nu-2}\sum_{k=0}^{\nu-1}\frac{(2k)!}{k!k!}y^{2k}(x-x^{\ast})^{-2k}\left(-\frac{\delta
t}{\log \delta t}\right)^{\frac{k}{\nu}},
\end{split}
\end{equation}
The function $\eta(x)$ is defined by
\begin{equation}\label{eq:eta}
\begin{split}
\eta(x)&=\frac{Q(x)(x-x^{\ast})^{2\nu-1}}{2Q(2)(2-x^{\ast})^{2\nu-1}(x-2)}+\frac{Q(x)(x-x^{\ast})^{2\nu-1}}{2Q(-2)(2+x^{\ast})^{2\nu-1}(x+2)}
-\frac{2}{x^2-4},
\end{split}
\end{equation}
and the constant $y$ is defined by
\begin{equation}\label{eq:y}
y=\left(\frac{4\nu^2\phi(x^{\ast})(\nu-1)!\nu!}{Q(x^{\ast})\sqrt{(x^{\ast})^2-4}(2\nu)!}\right)^{\frac{1}{2\nu}}
\end{equation}
and $\phi(x^{\ast})$ is defined in (\ref{eq:phix}).
\end{definition}
\begin{remark}The function $\eta(x)$ is analytic in the neighborhood
$\mathcal{V}$ of the real axis.
\end{remark}

We will now show that the density defined by the function
$\sqrt{\tilde{q}_t(x)}$ satisfies the Buyarov-Rakhmanov equation
outside a fixed neighborhood of $x^{\ast}$.
\begin{proposition}\label{pro:br}
Let $B_{\delta}^{s}$ be the set
\begin{equation*}
B_{\delta}^{s}=\{x|\quad |x-s|\leq\delta\}
\end{equation*}
and let $r_1=-2$, $r_2=2$ and $r_3=x^{\ast}$. Then for sufficiently
small $\delta t$, there exist compact subset
$\mathcal{K}\subset\mathcal{V}$ and $\delta>0$ independent on $t$,
such that the function $\tilde{q}^t(x)$ satisfies
\begin{equation}\label{eq:qbr}
\frac{\sqrt{\tilde{q}^t(x)}-\sqrt{q(x)}}{t-1}=-\frac{1}{\sqrt{x^2-4}}+O\left(\frac{\delta
t}{\log\delta t}\right),\quad
 x\in\mathcal{K}/\left(\bigcup_{j=1}^3B_{\delta}^{r_i}\cup[-2,2]\right)
\end{equation}
uniformly in
$\mathcal{V}/\left(\bigcup_{j=1}^3B_{\delta}^{r_i}\cup[-2,2]\right)$,
where $\delta t=t-1$.
\end{proposition}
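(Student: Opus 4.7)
The plan is to substitute the ansatz (\ref{eq:tqt}) into the left-hand side and Taylor-expand each of its three factors in $\delta t=t-1$. The ansatz is engineered so that the leading variation at $t=1$ reproduces the formal $t$-derivative of the Buyarov--Rakhmanov relation (\ref{eq:br}), whose right-hand side is the equilibrium measure of $[-2,2]$; the analytic continuation of its density outside $[-2,2]$ is exactly $-1/\sqrt{x^2-4}$.

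Concretely, the first factor expands from (\ref{eq:alp}) as
\begin{equation*}
\sqrt{(x-\alpha_t)(x-\beta_t)}=\sqrt{x^2-4}\left[1+\frac{\delta t}{2}\!\left(\frac{b}{x-2}-\frac{a}{x+2}\right)+O\!\bigl((\delta t)^2\bigr)\right],
\end{equation*}
with $a=[(2+x^{\ast})^{2\nu-1}Q(-2)]^{-1}$, $b=[(x^{\ast}-2)^{2\nu-1}Q(2)]^{-1}$, uniformly on compacts bounded away from $\pm 2$. For the second factor, set $w=u/(x-x^{\ast})^2$ with $u=y^2(-\delta t/\log\delta t)^{1/\nu}$. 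The generating-function identity $\sum_{k\geq 0}\binom{2k}{k}w^{k}=(1-4w)^{-1/2}$ applied to the truncated sum in (\ref{eq:H}) gives
\begin{equation*}
H_t(x)\sqrt{(x-x^{\ast})^2-4u}=(x-x^{\ast})^{2\nu-1}\!\left[\sqrt{1-4w}\sum_{k=0}^{\nu-1}\binom{2k}{k}w^{k}\right]=(x-x^{\ast})^{2\nu-1}\bigl[1+O(w^{\nu})\bigr],
\end{equation*}
so this factor equals $Q(x)(x-x^{\ast})^{2\nu-1}$ plus an $O(\delta t/|\log\delta t|)$ error on compacts bounded away from $x^{\ast}$, provided $\delta t$ is small enough that $|w|<1/4$ there. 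Multiplying the three factors, subtracting $\sqrt{q(x)}=\tfrac{1}{2}Q(x)(x-x^{\ast})^{2\nu-1}\sqrt{x^2-4}$, and dividing by $\delta t$ reduces the proposition to the algebraic identity
\begin{equation*}
\frac{\sqrt{x^2-4}}{4}Q(x)(x-x^{\ast})^{2\nu-1}\!\left(\frac{b}{x-2}-\frac{a}{x+2}\right)+\frac{\sqrt{x^2-4}}{2}\eta(x)=-\frac{1}{\sqrt{x^2-4}}.
\end{equation*}

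Using $(2-x^{\ast})^{2\nu-1}=-(x^{\ast}-2)^{2\nu-1}$, the first two summands in the definition (\ref{eq:eta}) of $\eta(x)$ exactly cancel the edge contributions coming from the $a$- and $b$-terms, leaving only $\tfrac{\sqrt{x^2-4}}{2}\cdot\bigl(-\tfrac{2}{x^2-4}\bigr)=-1/\sqrt{x^2-4}$, as required. The main obstacle is precisely this bookkeeping: the coefficients $a,b$ in (\ref{eq:alp}) and the first two terms of $\eta(x)$ must line up to kill the poles at $x=\pm 2$, and this is exactly the point of their definitions. A minor technical step is to fix compatible branches of the two square roots on a fixed compact set $\mathcal{K}\subset\mathcal{V}$ avoiding $[-2,2]\cup\bigcup_{i}B_{\delta}^{r_i}$; choosing $\delta$ so that $2\sqrt{|u|}<\delta/2$ for small $\delta t$ keeps the branch points of $\sqrt{(x-x^{\ast})^{2}-4u}$ inside $B_{\delta}^{x^{\ast}}$, so both roots are holomorphic and single-valued on the relevant region. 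The remaining $O((\delta t)^2)$ and $O(\delta t/|\log\delta t|)$ terms are then absorbed uniformly into the error after division by $\delta t$.
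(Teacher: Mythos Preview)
Your proof is correct and follows essentially the same approach as the paper: Taylor-expand $\sqrt{(x-\alpha_t)(x-\beta_t)}$ in $\delta t$, show $H_t(x)\sqrt{(x-x^{\ast})^2-4u}=(x-x^{\ast})^{2\nu-1}+O(\delta t/|\log\delta t|)$ away from $x^{\ast}$, and then verify the algebraic cancellation between the edge-expansion terms and the first two summands of $\eta(x)$. Your use of the closed-form generating function $\sum_{k\ge 0}\binom{2k}{k}w^k=(1-4w)^{-1/2}$ to handle the second factor is slightly cleaner than the paper's route, which multiplies the two truncated Taylor series for $(1-4w)^{\pm 1/2}$ explicitly (its equations (2.15)--(2.18)); both give the same $O(w^{\nu})$ remainder. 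One small remark: after dividing by $\delta t$, the $O(\delta t/|\log\delta t|)$ contribution from the second factor becomes $O(1/|\log\delta t|)$, not $O(\delta t/\log\delta t)$ as stated in the proposition; the paper's own proof has the same slip, and the weaker bound is all that is used downstream.
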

\begin{proof} We will expand (\ref{eq:tqt}) in terms of $\delta t$ and
$-\frac{\delta t}{\log\delta t}$. Let us first consider the product
$H_t(x)\sqrt{(x-x^{\ast})^2-4y^2\left(-\frac{\delta t}{\log\delta
t}\right)^{\frac{1}{\nu}}}$. Let $\delta>0$ be fixed. Then for small
enough $\delta t$, the following Taylor series expansion is valid
outside of $B_{\delta}^{x^{\ast}}$.
\begin{equation}\label{eq:series}
\begin{split}
\sqrt{(x-x^{\ast})^2-4y^2\left(-\frac{\delta t}{\log\delta
t}\right)^{\frac{1}{\nu}}}&=\sum_{j=0}^{\infty}\frac{(2j)!}{j!j!(1-2j)}y^{2j}(x-x^{\ast})^{-2j+1}\left(-\frac{\delta
t}{\log\delta t}\right)^{\frac{j}{\nu}}
\end{split}
\end{equation}
Now from the Taylor series expansion of
$\left((x-x^{\ast})^2-4y^2\left(-\frac{\delta t}{\log\delta
t}\right)^{\frac{1}{\nu}}\right)^{-\frac{1}{2}}$,
\begin{equation*}
\left((x-x^{\ast})^2-4y^2\left(-\frac{\delta t}{\log\delta
t}\right)^{\frac{1}{\nu}}\right)^{-\frac{1}{2}}=\sum_{j=0}^{\infty}\frac{(2j)!}{j!j!}y^{2j}(x-x^{\ast})^{-2j-1}\left(-\frac{\delta
t}{\log\delta t}\right)^{\frac{j}{\nu}},
\end{equation*}
we see that (c.f. \cite{Ey})
\begin{equation*}
H_t(x)=\mathrm{Pol}\left((x-x^{\ast})^{2\nu-1}\left((x-x^{\ast})^2-4y^2\left(-\frac{\delta
t}{\log\delta t}\right)^{\frac{1}{\nu}}\right)^{-\frac{1}{2}}\right)
\end{equation*}
where $\mathrm{Pol}(X)$ denotes the polynomial part of $X$.

Therefore we have
\begin{equation}\label{eq:prod}
\begin{split}
&H_t(x)\sqrt{(x-x^{\ast})^2-4y^2\left(-\frac{\delta t}{\log\delta
t}\right)^{\frac{1}{\nu}}}=(x-x^{\ast})^{2\nu-1}\\
&+\sum_{j=0}^{\infty}y^{2j+2\nu}(x-x^{\ast})^{-2j-1}\left(-\frac{\delta
t}{\log\delta t}\right)^{1+\frac{j}{\nu}}L_j,\\
&L_j=\sum_{p=0}^{\nu-1}\frac{(2p)!}{p!p!}\frac{(2(j+\nu-p)!)}{(j+\nu-p)!(j+\nu-p)!(1-2(j+\nu-p))}
\end{split}
\end{equation}
Then, for a small enough $\delta t$, we have, for
$|x-x^{\ast}|>\delta$,
\begin{equation*}
\sum_{j=0}^{\infty}y^{2j+2\nu}(x-x^{\ast})^{-2j-1}\left(-\frac{\delta
t}{\log\delta t}\right)^{1+\frac{j}{\nu}}L_j=O\left(\frac{\delta
t}{(\log\delta t)}\right).
\end{equation*}
This means that, for $x\in\mathcal{K}/ B_{\delta}^{x^\ast}$, we have
\begin{equation}\label{eq:conv}
H_t(x)\sqrt{(x-x^{\ast})^2-4y^2\left(-\frac{\delta t}{\log\delta
t}\right)^{\frac{1}{\nu}}}=(x-x^{\ast})^{2\nu-1}+O\left(\frac{\delta
t}{\log\delta t}\right),\quad x\notin B_{\delta}^{x^{\ast}}
\end{equation}
Now let us look at the terms of order $\delta t$. Again, for small
enough $\delta t$, the following Taylor series expansions are valid
outside $B_{\delta}^{2}\cup B_{\delta}^{-2}$.
\begin{equation}\label{eq:rt}
\begin{split}
\sqrt{x-\alpha_t}&=\sqrt{x+2}\sum_{j=0}^{\infty}\frac{(-1)^j(2j)!}{j!j!(1-2j)4^j}\left(\frac{\Xi(-2)\delta
t}{x+2}\right)^j,\\
\sqrt{x-\beta_t}&=\sqrt{x-2}\sum_{j=0}^{\infty}\frac{(2j)!}{j!j!(1-2j)4^j}\left(\frac{\Xi(2)\delta
t}{x-2}\right)^j,
\end{split}
\end{equation}
where the function $\Xi(x)$ is defined by
\begin{equation*}
\Xi(x)=\frac{1}{(x-x^{\ast})^{2\nu-1}Q(x)}.
\end{equation*}
The identity (\ref{eq:rt}) implies that, for small enough $\delta
t$, we have, for $x\in\mathcal{K}/\left( B_{\delta}^{-2}\cup
B_{\delta}^{2}\right)$,
\begin{equation}\label{eq:conv2}
\begin{split}
\sqrt{(x-\alpha_t)(x-\beta_t)}&=\sqrt{x^2-4}+\delta
t\left(-\frac{\sqrt{x+2}\Xi(2)}{2\sqrt{x-2}}+\frac{\sqrt{x-2}\Xi(-2)}{2\sqrt{x+2}}\right)
+O((\delta t)^2).
\end{split}
\end{equation}
Combining this with (\ref{eq:conv}) and (\ref{eq:eta}), we see that,
outside of $B_{\delta}^{x^{\ast}}$, the limit (\ref{eq:qbr}) is
given by
\begin{equation}\label{eq:lim}
\begin{split}
\frac{\sqrt{\tilde{q}^t(x)}-\sqrt{q(x)}}{t-1}&=\Bigg(\frac{\sqrt{x+2}\Xi(2)}{4\Xi(x)\sqrt{x-2}}-\frac{\sqrt{x+2}\Xi(2)}{4\Xi(x)\sqrt{x-2}}
\\&+\frac{\sqrt{x-2}\Xi(-2)}{4\Xi(x)\sqrt{x+2}}-\frac{\sqrt{x-2}\Xi(-2)}{4\Xi(x)\sqrt{x+2}}
-\frac{1}{\sqrt{x^2-4}}\Bigg)+O\left(\frac{\delta t}{\log\delta
t}\right),
\end{split}
\end{equation}
which is just
\begin{equation*}
\begin{split}
\frac{\sqrt{\tilde{q}^t(x)}-\sqrt{q(x)}}{t-1}=-\frac{1}{\sqrt{x^2-4}}+O\left(\frac{\delta
t}{\log\delta t}\right),\quad x\in\mathcal{K}/\left(
\bigcup_{j=1}^3B_{\delta}^{r_i}\cup[-2,2]\right).
\end{split}
\end{equation*}
This gives the assertion of the proposition.
\end{proof}
Let us now define the approximated equilibrium density to be
\begin{equation}\label{eq:tildrho}
\begin{split}
\tilde{\rho}^t(x)&=\frac{1}{t\pi}\left(\sqrt{-\tilde{q}^t(x)}\right)_+,\quad
x\in
[\alpha_t,\beta_t]\\
\tilde{\rho}^t(x)&=\frac{\sqrt{(x-\alpha_t)(x-\beta_t)}}{2t\pi}Q(x)H_t(x)\sqrt{\sigma_t^2-(x-x^{\ast})^2},
\\x&\in[x^{\ast}-\sigma_t,x^{\ast}+\sigma_t], \quad
\sigma_t^{\pm}=2y\left(-\frac{\delta t}{\log \delta
t}\right)^{\frac{1}{2\nu}},\\
\tilde{\rho}^t(x)&=0,\quad
x\in\mathbb{R}/[\alpha_t,\beta_t]\cup[x^{\ast}-\sigma_t,x^{\ast}+\sigma_t],
\end{split}
\end{equation}
and let $\tilde{h}^t(x)$ be the following
\begin{equation}\label{eq:tildh}
\tilde{h}^t(x)=\int_{\alpha_t}^{\beta_t}\tilde{\rho}^t(s)\log(x-s)ds+
\int_{x^{\ast}-\sigma_t}^{x^{\ast}+\sigma_t}\tilde{\rho}^t(s)\log(x-s)ds.
\end{equation}
Then we have the following analogue of (\ref{eq:br1}) for
$\tilde{h}^t(x)$.
\begin{proposition}\label{pro:hineq}
For sufficiently small $\delta t$, there exists $\delta>0$ such that
the following is satisfied for $\tilde{h}^t(x)$
\begin{equation}\label{eq:tild}
\begin{split}
\tilde{h}^t(x)&=\frac{h(x)}{t}+\frac{\delta
t}{t}\left(\int_{-2}^2w(s)\log(x-s)ds\right)+O\left(\frac{\delta
t\log(x+2)}{\log\delta t}\right),\\
x&\in\mathbb{C}/\bigcup_{j=1}^3B_{\delta}^{r_j}\cup\mathrm{Supp}(\tilde{\rho}^t(x)),
\end{split}
\end{equation}
where $h(x)$ is the following
\begin{equation*}
h(x)=\int_{-2}^2\rho(s)\log(x-s)ds
\end{equation*}
and $w(s)$ is the equilibrium measure of the interval $[-2,2]$
(\ref{eq:br1}).
\end{proposition}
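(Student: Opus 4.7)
The plan is to differentiate both sides of the claimed identity, use Proposition 2.1 to control the derivative of the difference, and then integrate back.

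First, for $x$ outside $\mathrm{Supp}(\tilde\rho^t)$, I would establish an approximate analogue of equation (2.8) of Theorem 2.1 for the ansatz density:
\[
\frac{d\tilde h^t(x)}{dx}=\frac{V'(x)}{2t}-\frac{\sqrt{\tilde q^t(x)}}{t}+E_1(x,t),
\]
with $E_1(x,t)=O(\delta t/\log\delta t)$ locally uniformly. To verify this, I would check that both sides have identical Plemelj jumps across $[\alpha_t,\beta_t]\cup[x^\ast-\sigma_t,x^\ast+\sigma_t]$: on $[\alpha_t,\beta_t]$ the LHS jump is $-2\pi i\tilde\rho^t(x)=-(2i/t)\sqrt{-\tilde q^t(x)}$ by (2.18), matching $-$ times the jump $(2i/t)\sqrt{-\tilde q^t(x)}$ of $\sqrt{\tilde q^t}/t$ with the branch from (2.14); a parallel calculation using the factor $S(x)=\sqrt{(x-x^\ast)^2-\sigma_t^2}$ handles the newborn cut. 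The resulting difference is analytic on $\mathbb{C}$ minus a finite set, bounded by the square-root structure at the endpoints and controlled at infinity by the small discrepancy $\int\tilde\rho^t\,ds-1=O(\delta t/\log\delta t)$; a Liouville/maximum-principle argument then pins down $E_1$. The corresponding exact identity at $t=1$, $h'(x)=V'(x)/2-\sqrt{q(x)}$, is immediate from Theorem~2.1.

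Second, I would combine the two identities and apply Proposition 2.1. Using that $w(s)\,ds=ds/(\pi\sqrt{4-s^2})$ is the equilibrium measure on $[-2,2]$, one gets (from (1.19)) the standard Stieltjes transform $\int_{-2}^{2}w(s)/(x-s)\,ds=1/\sqrt{x^2-4}$ for $x$ outside $[-2,2]$. Then
\[
\frac{d\tilde h^t}{dx}-\frac{1}{t}\frac{dh}{dx}-\frac{\delta t}{t}\cdot\frac{1}{\sqrt{x^2-4}}
=\frac{\sqrt{q(x)}-\sqrt{\tilde q^t(x)}}{t}-\frac{\delta t}{t\sqrt{x^2-4}}+E_1(x,t),
\]
and substituting $\sqrt{q}-\sqrt{\tilde q^t}=\delta t/\sqrt{x^2-4}+O((\delta t)^2/\log\delta t)$ from Proposition 2.1, the first two terms cancel, leaving the whole right-hand side of size $O(\delta t/\log\delta t)$ uniformly on the specified domain.

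Third, I would integrate this derivative bound along a path in $\mathcal{V}\setminus(\bigcup_j B^{r_j}_\delta\cup\mathrm{Supp}(\tilde\rho^t))$ starting from $+\infty$. The boundary condition at infinity is matched to leading order because $\tilde h^t(x)\sim(\text{total mass})\log x$ and $(h/t)(x)+(\delta t/t)L(x)\sim(1/t+\delta t/t)\log x=\log x$, and the residual mass mismatch $O(\delta t/\log\delta t)$ produces exactly a $\log(x+2)\sim\log x$ term upon integration, which is the source of the $\log(x+2)$ factor in the stated error $O(\delta t\log(x+2)/\log\delta t)$.

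The principal technical obstacle is Step 1: establishing the approximate Cauchy-transform identity with a quantitative $O(\delta t/\log\delta t)$ error, since $\tilde\rho^t$ is only an ansatz and the exact identity of Theorem~2.1 fails. One must carefully track the $\eta(x)\delta t$ correction in (2.14), which contributes a real part to the boundary values of $\sqrt{\tilde q^t}$ on the newborn cut that does not appear in the imaginary part defining $\tilde\rho^t$, and verify that the induced discrepancy in the entire remainder is genuinely of the claimed order --- a fact that ultimately reflects the design of the ansatz (2.14) as a leading-order solution of the Buyarov--Rakhmanov equation (2.12).
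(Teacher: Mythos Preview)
Your approach is genuinely different from the paper's: rather than differentiating and appealing to a global Cauchy-transform identity for the ansatz, the paper works directly at the level of the logarithmic integrals. It partitions the support $[\alpha_t,\beta_t]\cup[x^\ast-\sigma_t,x^\ast+\sigma_t]$ into six pieces $\mathbb R_1,\dots,\mathbb R_6$ (tiny end-intervals of length $\sim\delta t$, boundary layers of fixed length $\delta/2$, the bulk, and the newborn cut), and estimates $\int_{\mathbb R_j}\bigl(\tilde\rho^t-\rho/t+(\delta t/t)\,w\bigr)\log(x-s)\,ds$ on each. Proposition~2.1 enters \emph{only} on the bulk $\mathbb R_3$, after shifting to a contour $\Gamma$ just above the real axis; near the endpoints and on $\mathbb R_6$ the paper uses explicit Taylor expansions of $\sqrt{s-\alpha_t}$, $\sqrt{\beta_t-s}$, $H_t$, $\eta$ and a direct computation of $\int_{\mathbb R_6}\tilde\rho^t\log(x-s)\,ds$ via the scaling form of $H_t$. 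No global identity for $\tilde\rho^t$ is ever invoked.

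There is a genuine gap in your Step~1, and it propagates to Step~3. Your Liouville argument does show that $E_1(x):=\frac{d\tilde h^t}{dx}-\frac{V'(x)}{2t}+\frac{\sqrt{\tilde q^t(x)}}{t}$ is entire (the jumps match exactly, as you observe). A contour-deformation computation then identifies $E_1$ explicitly: $E_1=\frac{1}{t}\bigl[\mathrm{Pol}_\infty\sqrt{\tilde q^t}-V'/2\bigr]=\frac{1}{t}\,\mathrm{Pol}_\infty\bigl(\sqrt{\tilde q^t}-\sqrt{q}\bigr)$, the polynomial part of the Laurent expansion of $\sqrt{\tilde q^t}-\sqrt{q}$ at infinity. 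But the ansatz (2.14) was engineered so that the \emph{local} relation of Proposition~2.1 holds on compacta; that is precisely the cancellation $C\,Q(x-x^\ast)^{2\nu-1}+\sqrt{x^2-4}\,\eta\equiv -2/\sqrt{x^2-4}$ built into the choice of $\eta$. This does not force the polynomial part at infinity to vanish: expanding (2.14) at $x\to\infty$ leaves a nonzero polynomial of degree $\deg V'-1$ with coefficients of size $O(\delta t)$ (from the $O((\delta t)^2)$-remainder in $\sqrt{(x-\alpha_t)(x-\beta_t)}$ multiplied against $Q(x)(x-x^\ast)^{2\nu-1}$, and from $A_0Q\cdot(H_tS-(x-x^\ast)^{2\nu-1})$). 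Hence $E_1$ is a nonzero polynomial, not $O(\delta t/\log\delta t)$ uniformly; in particular your integration ``from $+\infty$'' in Step~3 diverges, and integration from a finite base point produces an error growing like a polynomial in $x$, not like $\log(x+2)$ as the Proposition requires. To rescue the scheme you would have to compute all $\deg V'$ coefficients of $\mathrm{Pol}_\infty(\sqrt{\tilde q^t}-\sqrt{q})$ explicitly and treat $E_1$ separately, which is at least as much work as the paper's direct decomposition; and for $V$ merely real analytic (the hypothesis of the paper) the Liouville step is unavailable altogether.
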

\begin{proof}
Let us first divide the real axis in to different parts
\begin{equation*}
\mathbb{R}=\bigcup_{j=1}^6\mathbb{R}_j
\end{equation*}
where the $\mathbb{R}_j$ are the following intervals, that is,
\begin{equation}\label{eq:Rno}
\begin{split}
\mathbb{R}_1&=\left[\alpha_t,-2-2\left(2+\alpha_t\right)\right],\quad
\mathbb{R}_2=\left[-2-2\left(2-\alpha_t\right),-2+\frac{\delta}{2}\right],\\
\mathbb{R}_3&=\left[-2+\frac{\delta}{2},2-\frac{\delta}{2}\right],\quad
\mathbb{R}_4=\left[2-\frac{\delta}{2},2-2\left(\beta_t-2\right)\right],\\
\mathbb{R}_5&=\left[2-2\left(\beta_t-2\right),\beta_t\right],\quad
\mathbb{R}_6=\left[x-\sigma_t,x+\sigma_t\right].
\end{split}
\end{equation}
Let us now define $\Gamma$ to be the line right above
$\mathbb{R}_3$,
\begin{equation}\label{eq:Gamma}
\Gamma=\left\{x\Bigg|\quad x=u+i\varepsilon,\quad
u\in\left[-2+\frac{\delta}{2},2-\frac{\delta}{2}\right],\quad
\varepsilon\rightarrow 0^{+}\right\}.
\end{equation}
Then we have
\begin{equation}\label{eq:Rgamma}
\begin{split}
\int_{\mathbb{R}_3}\tilde{\rho}^t(s)\log(x-s)ds&=\int_{\Gamma}\frac{\sqrt{-\tilde{q}^t(s)}}{t\pi}\log(x-s)ds,\\
\int_{\mathbb{R}_3}\rho(s)\log(x-s)ds&=\int_{\Gamma}\frac{\sqrt{-q(s)}}{\pi}\log(x-s)ds,\\
\int_{\mathbb{R}_3}w(s)\log(x-s)ds&=\int_{\Gamma}\frac{1}{\pi\sqrt{4-s^2}}\log(x-s)ds.
\end{split}
\end{equation}
Let $\delta>0$ be such that the power series expansion of $Q(x)$ and
$\eta(x)$ around $\pm 2$ are valid inside $B_{\frac{\delta}{4}}^{\pm
2}$.

First let us consider the integral on $\mathbb{R}_1$. On
$\mathbb{R}_1$, the following power series expansions are valid.
\begin{equation}\label{eq:ser2}
\begin{split}
&Q(s)=\sum_{j=0}^{\infty}Q_{(j,-2)}(s+2)^j,\quad
\eta(s)=\sum_{j=0}^{\infty}\eta_{j}(s+2)^j\\
&\left(\sqrt{\beta_t-s}\right)_+=
\sqrt{\beta_t+2}\sum_{j=0}^{\infty}\lambda_j(s+2)^j,\\
\log(x-s)&=\log(x+2)-\sum_{j=1}^{\infty}\frac{1}{j}\left(\frac{s+2}{x+2}\right)^j,
\end{split}
\end{equation}
where the branch of $\log(x+2)$ is chosen to be the principal
branch.

It is not difficult to check that the coefficients in the above
series remain finite as $\delta t\rightarrow 0$. Moreover, from
(\ref{eq:conv}), we have
\begin{equation*}
H_t(x)\sqrt{(x-x^{\ast})^2-4y^2\left(-\frac{\delta t}{\log\delta
t}\right)^{\frac{1}{\nu}}}=(x-x^{\ast})^{2\nu-1}+O\left(\frac{\delta
t}{-\log\delta t}\right),\quad x\in \mathbb{R}_1.
\end{equation*}
In particular, this means that on $\mathbb{R}_1$ the functions have
the following estimates
\begin{equation*}
\begin{split}
&Q(s)=Q(-2)+O(\delta t),\quad \eta(s)=\eta(-2)+O(\delta t),\\
&\left(\sqrt{\beta_t-s}\right)_+=\sqrt{\beta_t+2}+O(\delta t),\quad
\log(x-s)=\log(x+2)+O(\delta t).
\\
&H_t(s)\sqrt{(s-x^{\ast})^2-4y^2\left(-\frac{\delta t}{\log\delta
t}\right)^{\frac{1}{\nu}}}=-(2+x^{\ast})^{2\nu-1}+O\left(\frac{\delta
t}{\log\delta t}\right).
\end{split}
\end{equation*}
Therefore the integral on $\mathbb{R}_1$ can be evaluated as
\begin{equation}\label{eq:1stcon}
\begin{split}
\int_{\mathbb{R}_1}\tilde{\rho}^t(s)\log(x-s)ds&=
\frac{\sqrt{\beta_t+2}\log(x+2)}{2\Xi(-2)\pi}\int_{\mathbb{R}_1}
\sqrt{s-\alpha_t}ds\left(1+O\left(\delta t\right)\right)\\
&=\frac{\sqrt{\beta_t+2}\log(x+2)(\Xi(-2))^{\frac{1}{2}}}{3\pi}\left(3\delta
t\right)^{\frac{3}{2}}\left(1+O\left(\delta t\right)\right).
\end{split}
\end{equation}
Similarly, the following integrals for $\rho(x)$ and the equilibrium
measure on $[-2,2]$ are given by
\begin{equation}\label{eq:1equilcon}
\begin{split}
\int_{-2}^{-2-2(2+\alpha_t)}\frac{\rho(s)}{t}\log(x-s)ds&=\frac{2^{\frac{5}{2}}\log(x+2)\Xi(-2)^{\frac{1}{2}}}{3t\pi}(\delta
t)^{\frac{3}{2}}\left(1+O(\delta t)\right),\\
\int_{-2}^{-2-2(2+\alpha_t)}\frac{w(s)\log(x-s)}{t}ds&=\frac{\log(x+2)}{t\pi}\sqrt{2\Xi(-2)\delta
t}\left(1+O(\delta t)\right).
\end{split}
\end{equation}
Therefore we have
\begin{equation}\label{eq:order1}
\begin{split}
\int_{\mathbb{R}_1}\tilde{\rho}^t(s)\log(x-s)&-\int_{-2}^{-2-2(2+\alpha_t)}\left(\frac{\rho(s)}{t}-\frac{\delta
t}{t}w(s)\right)\log(x-s)ds\\
&=O\left((\delta t)^{\frac{3}{2}}\log(x+2)\right).
\end{split}
\end{equation}
Next let us consider the integral on $\mathbb{R}_2$. Since
$|x+2|>\delta$, for $s\in \mathbb{R}_2$, we can find constants
independent on $t$ and $s$ such that
\begin{equation*}
\begin{split}
&|Q(s)|<M_Q,\quad |\sqrt{s+2}|<|\sqrt{s-\alpha_t}|<M_{\alpha},\quad
|\eta(s)|<M_{\eta},\\
&|s-x^{\ast}|^{2\nu-1}<M_{x^{\ast}},\quad
|\sqrt{2-s}|<|\sqrt{\beta_t-s}|<M_{\beta},\quad
\left|\frac{\Xi(2)}{s-2}\right|<M_1,\\
&\left|\sqrt{\beta_t-s}-\sqrt{2-s}+\frac{\Xi(2)}{\sqrt{2-s}}\delta
t\right|<M_2(\delta
t)^2,\\
&|H_t(s)-(s-x^{\ast})^{2\nu-1}|<M_H\left(-\frac{\delta t}{\log\delta
t}\right)\\
&\log|x-s|<M_3\log|x+2|.
\end{split}
\end{equation*}
Then, by using the the Taylor series expansion of
$\sqrt{s-\alpha_t}$ in (\ref{eq:rt}), we see that
\begin{equation}\label{eq:ord}
\begin{split}
 \Bigg|t\tilde{\rho}^t(s)&-\rho(s)+\delta
t\frac{1}{\pi\sqrt{4-s^2}}\Bigg|\log|x-s|\leq
\Bigg(E_1\sum_{j=2}^{\infty}\frac{\sqrt{s+2}(2j)!}{j!j!(2j-1)4^j}\left(\frac{|\Xi(-2)|\delta
t}{s+2}\right)^{j}\\
&+E_2\delta
t\sum_{j=1}^{\infty}\frac{\sqrt{s+2}(2j)!}{j!j!(2j-1)4^j}\left(\frac{|\Xi(-2)|\delta
t}{s+2}\right)^j\\
&+E_3\sum_{j=0}^{\infty}\frac{\sqrt{s+2}(2j)!}{j!j!(2j-1)4^j}\left(\frac{|\Xi(-2)|\delta
t}{s+2}\right)^j\left(\frac{-\delta t}{\log\delta
t}\right)\Bigg)\log|x+2|,
\end{split}
\end{equation}
for some positive constants $E_1$, $E_2$ and $E_3$. One needs to be
careful about the terms that contains negative power of $x+2$ as
they may become large in $\mathbb{R}_2$. If we integrate
(\ref{eq:ord}) and consider the leading order term in $\delta t$, we
see that
\begin{equation*}
\Bigg|\int_{\mathbb{R}_2}\left(t\tilde{\rho}^t(s)-\rho(s)+\delta t
w(s)\right)\log|x-s|ds\Bigg|\leq E\frac{\delta t}{-\log\delta
t}\log|x+2|.
\end{equation*}
for some positive constant $E$.

This implies that
\begin{equation*}
\int_{\mathbb{R}_2}\left(t\tilde{\rho}^t(s)-\rho(s)+\delta
tw(s)\right)\log|x-s|ds=O\left(\frac{\delta t\log(x+2)}{\log\delta
t}\right).
\end{equation*}
We then see that
\begin{equation}\label{eq:order2}
\int_{\mathbb{R}_2}\left(\tilde{\rho}^t(s)-\frac{\rho(s)}{t}+\frac{\delta
t}{t}w(s)\right)\log|x-s|ds=O\left(\frac{\delta
t\log(x+2)}{\log\delta t}\right)
\end{equation}
To compute the integral on $\mathbb{R}_3$, observe that for small
enough $\delta t$, the relation (\ref{eq:qbr}) holds uniformly on
$\Gamma$. Therefore by (\ref{eq:Rgamma}), the integral on
$\mathbb{R}_3$ is given by
\begin{equation}\label{eq:3term}
\int_{\mathbb{R}_3}\left(\tilde{\rho}^t(s)-\frac{\rho(s)}{t}+
\frac{\delta t}{t}w(s)\right)\log|x-s|ds=O\left(\frac{\delta
t\log(x+2)}{\log\delta t}\right).
\end{equation}
By applying the argument used for $\mathbb{R}_1$ and $\mathbb{R}_2$
to $\mathbb{R}_4$ and $\mathbb{R}_5$, we obtain
\begin{equation}\label{eq:order4}
\begin{split}
\int_{\mathbb{R}_j}\left(\tilde{\rho}^t(s)-\frac{\rho(s)}{t}+\frac{\delta
t}{t}w(s)\right)\log|x-s|ds=O\left(\frac{\delta
t\log(x+2)}{\log\delta t}\right),\quad j=4,5.
\end{split}
\end{equation}
Let us now consider the contribution from the interval
$[x-\sigma_t,x+\sigma_t]$. From the power series expansions on
$\mathbb{R}_6$, we have the following estimates,
\begin{equation}\label{eq:est}
\begin{split}
&Q(s)=Q(x^{\ast})+O\left(\left(\frac{\delta t}{\log\delta
t}\right)^{\frac{1}{2\nu}}\right),\quad
\eta(s)=\eta(x^{\ast})+O\left(\left(\frac{\delta t}{\log\delta
t}\right)^{\frac{1}{2\nu}}\right)\\
&\sqrt{(s-\alpha_t)(s-\beta_t)}=\sqrt{(x^{\ast})^2-4}+O\left(\left(\frac{\delta
t}{\log\delta
t}\right)^{\frac{1}{2\nu}}\right),\\
&\log(x-s)=\log(x-x^{\ast})+O\left(\left(\frac{\delta t}{\log\delta
t}\right)^{\frac{1}{2\nu}}\right).
\end{split}
\end{equation}
Therefore, the integral on $\mathbb{R}_6$ satisfies the following
estimate.
\begin{equation}\label{eq:3int}
\begin{split}
\int_{\mathbb{R}_6}\tilde{\rho}^t(s)\log(x-s)ds&=\frac{Q(x^{\ast})\sqrt{(x^{\ast})^2-4}\log(x-x^{\ast})}{2\pi}\\
&\times\int_{\mathbb{R}_6}H_t(s)\sqrt{\sigma_t^2-(s-x^{\ast})^2}ds\left(1+O\left(\left(\frac{\delta
t}{\log\delta t}\right)^{\frac{1}{2\nu}}\right)\right).
\end{split}
\end{equation}
To evaluate the integral on the right, let us note that $H_t(x)$ can
be written in the following form \cite{Ey},
\begin{equation}\label{eq:Hform}
H_t(z)=\left(-\frac{\delta t}{\log\delta
t}\right)^{1-\frac{1}{\nu}}P\left((z-x^{\ast})\left(-\frac{\delta
t}{\log\delta t}\right)^{-\frac{1}{2\nu}}\right),
\end{equation}
where $P(s)$ is the following polynomial of degree $2\nu-2$,
\begin{equation}\label{eq:Ppol}
P(s)=\sum_{j=0}^{\nu-1}\frac{(2j)!}{j!j!}y^{2j}s^{2(\nu-1-j)}.
\end{equation}
Then by a change of variable
\begin{equation*}
\xi=(s-x^{\ast})\left(-\frac{\delta t}{\log\delta
t}\right)^{-\frac{1}{2\nu}}
\end{equation*}
in the integral on the right hand side of (\ref{eq:3int}), we have
\begin{equation}\label{eq:integral}
\begin{split}
\int_{\mathbb{R}_6}\tilde{\rho}^t(s)\log(x-s)ds=-\frac{\delta
t}{\log\delta
t}\frac{Q(x^{\ast})\sqrt{(x^{\ast})^2-4}\log(x-x^{\ast})}{2\pi}\\
\times\int_{-2y}^{2y}P(\xi)\left(\sqrt{4y^2-\xi^2}\right)_+d\xi\left(1+O\left(\left(\frac{\delta
t}{\log\delta t}\right)^{\frac{1}{2\nu}}\right)\right).
\end{split}
\end{equation}
To evaluate this integral, we will use the following differential
equation for $P(\xi)$ in \cite{Ey}.
\begin{equation*}
(2\nu-2)P(\xi)-\xi
P^{\prime}(\xi)=\frac{4y^2}{\xi^2-4y^2}\left(P(\xi)-P(2y)\right).
\end{equation*}
Using this and integration by parts, we find that the integral in
(\ref{eq:integral}) is given by
\begin{equation}\label{eq:filling}
\int_{-2y}^{2y}P(\xi)\sqrt{4y^2-\xi^2}d\xi=\frac{2\pi
y^2P(2y)}{\nu}.
\end{equation}
Hence the integral (\ref{eq:integral}) is
\begin{equation*}
\begin{split}
\int_{\mathbb{R}_6}\tilde{\rho}^t(s)\log(x-s)ds&=-\frac{\delta
t}{\log\delta
t}\frac{y^2P(2y)Q(x^{\ast})\sqrt{(x^{\ast})^2-4}\log(x-x^{\ast})}{\nu}\\
&\times\left(1+O\left(\left(\frac{\delta t}{\log\delta
t}\right)^{\frac{1}{2\nu}}\right)\right).
\end{split}
\end{equation*}
Now by the use of induction, one can compute $P(2y)$ easily
\cite{Ey},
\begin{equation*}
P(2y)=(2y)^{2\nu-2}\sum_{j=0}^{\nu-1}\frac{(2k)!}{k!k!}4^{-k}=y^{2\nu-2}\frac{(2\nu)!}{2(\nu-1)!\nu!}
\end{equation*}
This, together with the expression (\ref{eq:y}) for $y$ implies that
\begin{equation}\label{eq:integral1}
\int_{\mathbb{R}_6}\tilde{\rho}^t(s)\log(x-s)ds=-\frac{\delta
t}{\log\delta
t}2\nu\phi(x^{\ast})\log(x-x^{\ast})\left(1+O\left(\left(\frac{\delta
t}{\log\delta t}\right)^{\frac{1}{2\nu}}\right)\right),
\end{equation}
which is of order $\frac{\delta t}{\log\delta t}$. That is,
\begin{equation}\label{eq:order6}
\begin{split}
\int_{\mathbb{R}_6}\tilde{\rho}^t(s)\log(x-s)ds=O\left(\frac{\delta
t\log(x+2)}{\log\delta t}\right).
\end{split}
\end{equation}
Now by adding (\ref{eq:order1}), (\ref{eq:order2}),
(\ref{eq:3term}), (\ref{eq:order4}), (\ref{eq:order6}), we arrive at
(\ref{eq:tild}).
\end{proof}
Now from (\ref{eq:wineq}) and (\ref{eq:tild}), we see that
conditions of the type (\ref{eq:hineq}) are satisfied for
$\tilde{h}^t(x)$.
\begin{corollary}\label{cor:hineq}
For sufficiently small $\delta t$, there exist $\delta>0$ such that
$[x^{\ast}-\sigma_t,x^{\ast}+\sigma_t]\subset B_{\delta}^{x^\ast}$
and that
\begin{equation}\label{eq:thineq}
\begin{split}
&\tilde{h}^t_+(x)+\tilde{h}^t_-(x)-\frac{V(x)}{t}-\frac{\tilde{l}}{t}=\upsilon_t^h(x)\left(\frac{\delta
t}{\log\delta t}\right),\quad x\in[\alpha_t,\beta_t]/\left(B_{\delta}^{-2}\cup B_{\delta}^{2}\right)\\
&\Re\left(\tilde{h}^t_+(x)+\tilde{h}^t_-(x)-\frac{V(x)}{t}-\frac{\tilde{l}}{t}\right)<0,\quad
x\in\mathbb{R}/\left(\bigcup_{j=1}^3B_{\delta}^{r_i}\cup[\alpha_t,\beta_t]\right),\\
&\tilde{h}^t(x)=\left(1+\iota_t\left(\frac{\delta t}{\log\delta
t}\right)\right)\log x+O(1),\quad x\rightarrow\infty.
\end{split}
\end{equation}
where $\tilde{l}$ is the constant $l+(\delta t)\varsigma$. The
function $\upsilon_t^h(x)$ remains uniformly bounded in
$[\alpha_t,\beta_t]$ as $\delta t\rightarrow 0$, while the constant
$\iota_t$ remains finite in the limit. That is, if
\begin{equation*}
\lim_{\delta t\rightarrow 0}\upsilon_t^h(x)=\upsilon^h(x),\quad
\lim_{\delta t\rightarrow 0}\iota_t=\iota,
\end{equation*}
then $\upsilon^h(x)$ is uniformly bounded in $[\alpha_t,\beta_t]$
and $\iota^h$ is finite.
\end{corollary}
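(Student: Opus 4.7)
The corollary is essentially a direct consequence of Proposition \ref{pro:hineq}, combined with the variational conditions satisfied by the original equilibrium measure $\rho$ and by the arcsine measure $w$ on $[-2,2]$. I will unpack each of the three assertions in turn.

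First I would take the boundary values of the asymptotic expansion in Proposition \ref{pro:hineq} from above and below the real axis and add them. Since the expansion is valid off the support and the error $O(\delta t \log(x+2)/\log\delta t)$ is uniform, this yields, for $x\in[-2+\delta/2,2-\delta/2]$,
\begin{equation*}
\tilde h^t_+(x)+\tilde h^t_-(x)=\frac{h_+(x)+h_-(x)}{t}+\frac{2\delta t}{t}\int_{-2}^{2}w(s)\log|x-s|\,ds+O\!\left(\frac{\delta t}{\log\delta t}\right).
\end{equation*}
On $[-2,2]$ the variational identity (\ref{eq:hineq}) at $t=1$ gives $h_+(x)+h_-(x)=V(x)+l$, and the first line of (\ref{eq:wineq}) gives $\int_{-2}^{2}w(s)\log|x-s|\,ds=\varsigma/2$. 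Substituting and recalling that $t=1+\delta t$ and $\tilde l=l+(\delta t)\varsigma$, the two $O(1)$ contributions assemble precisely into $V(x)/t+\tilde l/t$, which is the first assertion. A small subtlety here is that $[\alpha_t,\beta_t]\setminus (B_\delta^{-2}\cup B_\delta^{2})$ reduces for small $\delta t$ to $[-2+\delta/2,2-\delta/2]$, because $\alpha_t$ and $\beta_t$ lie inside the $\delta$-balls around $\pm 2$; this is what allows the reduction to the base $h$ expansion.

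For the strict inequality, I would write, for real $x$ in the indicated region,
\begin{equation*}
\Re\!\left(\tilde h^t_+(x)+\tilde h^t_-(x)-\tfrac{V(x)}{t}-\tfrac{\tilde l}{t}\right)=\tfrac{1}{t}\bigl(H(x)-V(x)-l\bigr)+\tfrac{\delta t}{t}\bigl(2\int w(s)\log|x-s|\,ds-\varsigma\bigr)+O\!\left(\tfrac{\delta t}{\log\delta t}\right),
\end{equation*}
where $H(x)=2\int\log|x-s|\rho(s)\,ds$. By assumption 3 and the original inequality (\ref{eq:ineq}), the leading term $H(x)-V(x)-l$ is bounded above by a negative constant $-\epsilon$ on any subset of $\mathbb{R}\setminus([-2,2]\cup B_\delta^{x^\ast})$ with compact closure, and tends to $-\infty$ as $|x|\to\infty$ since $V/\log|x|\to\infty$; meanwhile the remaining two terms are $O(\delta t)$ uniformly. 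Hence for sufficiently small $\delta t$ the whole expression stays strictly negative on the prescribed set.

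Finally, for the behaviour at infinity, I insert the known expansions $h(x)=\log x+O(x^{-1})$ from (\ref{eq:hineq}) and $\int_{-2}^{2}w(s)\log(x-s)\,ds=\log x+O(1)$ from (\ref{eq:wineq}) into the formula of Proposition \ref{pro:hineq}. The prefactors combine to $(1+\delta t)/t=1$ on the pure $\log x$ part, and the error term $O(\delta t\log(x+2)/\log\delta t)$ provides exactly a $\iota_t(\delta t/\log\delta t)\log x$ contribution together with an $O(1)$ remainder, giving the third assertion. The boundedness claims on $\upsilon_t^h$ and $\iota_t$ as $\delta t\to 0$ come directly from the uniform-in-$x$ nature of the error bounds established in Proposition \ref{pro:hineq}; the main technical point I would have to be careful about is that the error bound there has a factor $\log(x+2)$ which would ruin uniformity on $[-2+\delta/2,2-\delta/2]$ if not controlled, but on that compact interval $\log(x+2)$ is bounded, so $\upsilon_t^h$ is indeed uniformly bounded as claimed. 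I expect the inequality off the support, specifically verifying uniformity as $x$ moves toward infinity and as $x$ approaches the edges of the excluded balls, to be the only place requiring genuine care; the equality and the infinity asymptotics are essentially bookkeeping.
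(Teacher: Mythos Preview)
Your proposal is correct and follows exactly the approach the paper intends: the paper states the corollary without a detailed proof, merely noting that it follows ``from (\ref{eq:wineq}) and (\ref{eq:tild}),'' and your argument is precisely the unpacking of that remark --- substituting the variational identities for $h$ and for the arcsine measure $w$ into the expansion of Proposition~\ref{pro:hineq} and checking that the leading terms assemble into $V(x)/t+\tilde l/t$. Your observations about taking boundary values to pass from the off-support expansion to $x\in[\alpha_t,\beta_t]$ and about controlling the $\log(x+2)$ factor on the compact interval are the right technical points to make explicit.
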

Corollary \ref{cor:hineq} suggests that $\tilde{\rho}^t(x)$ is a
good approximation to the actual equilibrium density $\rho^t(x)$.

The following corollary follows immediately from the proof of
proposition \ref{pro:hineq} and the fact that
$h(x^{\ast})-\frac{V(x^{\ast})}{2}-\frac{l}{2}=0$.
\begin{corollary}\label{cor:ge}
Inside $B_{\delta}^{x^{\ast}}$, the following is satisfied.
\begin{equation}\label{eq:ge0}
\int_{\alpha_t}^{\beta_t}\tilde{\rho}^t(s)\log(x-s)ds-\frac{h(x)}{t}=\delta
t\int_{-2}^2w(s)\log(x-s)ds+O\left(\frac{\delta t}{\log\delta
t}\right).
\end{equation}
In particular, by using
$h(x^{\ast})=\frac{V(x^{\ast})}{2}+\frac{l}{2}$, we see that the
following is satisfied at $x=x^{\ast}$.
\begin{equation}\label{eq:ge}
\int_{\alpha_t}^{\beta_t}\tilde{\rho}^t(s)\log(x^{\ast}-s)ds-\frac{V(x^{\ast})}{2t}-\frac{\tilde{l}}{2t}=
\delta t\phi(x^{\ast})+O\left(\frac{\delta t}{\log\delta t}\right),
\end{equation}
where $\phi(x^{\ast})$ is defined in (\ref{eq:phix}).
\end{corollary}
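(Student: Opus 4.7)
The plan is to leverage the machinery already built in the proof of Proposition \ref{pro:hineq}. That proof decomposed $\mathbb{R}$ into the six intervals $\mathbb{R}_1,\ldots,\mathbb{R}_6$, with $\mathbb{R}_1\cup\cdots\cup\mathbb{R}_5=[\alpha_t,\beta_t]$ and $\mathbb{R}_6=[x^\ast-\sigma_t,x^\ast+\sigma_t]$, and produced a separate estimate for each $\int_{\mathbb{R}_j}\tilde{\rho}^t(s)\log(x-s)\,ds$ with $j\leq 5$. Since (\ref{eq:ge0}) integrates $\tilde\rho^t$ only over $[\alpha_t,\beta_t]$, the $\mathbb{R}_6$ contribution is absent, so the potential singularity of $\log(x-s)$ at $s$ near $x^\ast$ plays no role and those five estimates can be reused directly.

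Concretely, I would write
\[
\int_{\alpha_t}^{\beta_t}\tilde\rho^t(s)\log(x-s)\,ds=\sum_{j=1}^{5}\int_{\mathbb{R}_j}\tilde\rho^t(s)\log(x-s)\,ds
\]
and apply the bounds from the proof of Proposition \ref{pro:hineq} piece by piece. Each of those bounds has error of the form $O(\delta t\,\log(x+2)/\log\delta t)$; the only hypothesis they require of $x$ is that it be bounded away from $[-2,2]$, which holds uniformly for $x\in B_\delta^{x^\ast}$ once $\delta<x^\ast-2$. For such $x$ the factor $\log(x+2)$ is bounded, so summing the five pieces and recalling that $\rho$ and $w$ are supported in $[-2,2]$ delivers
\[
\int_{\alpha_t}^{\beta_t}\tilde\rho^t(s)\log(x-s)\,ds=\frac{h(x)}{t}+\frac{\delta t}{t}\int_{-2}^{2}w(s)\log(x-s)\,ds+O\!\left(\frac{\delta t}{\log\delta t}\right).
\]
Replacing $\delta t/t$ by $\delta t$ inflates the error by only $O((\delta t)^{2})$, which is absorbed into the remainder, yielding (\ref{eq:ge0}).

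For (\ref{eq:ge}) I would specialize (\ref{eq:ge0}) to $x=x^\ast$. The criticality assumption on the original equilibrium measure (equality in (\ref{eq:ineq}) at $x^\ast$) reads $2h(x^\ast)-V(x^\ast)=l$, so $h(x^\ast)/t=V(x^\ast)/(2t)+l/(2t)$. Using the definition (\ref{eq:phix}) to rewrite $\int_{-2}^{2}w(s)\log(x^\ast-s)\,ds=\phi(x^\ast)-\varsigma/2$ together with $\tilde l=l+\delta t\,\varsigma$ from Corollary \ref{cor:hineq}, the linear-in-$\delta t$ terms collapse so that only $\delta t\,\phi(x^\ast)$ survives, with residual discrepancy of size $O((\delta t)^{2})\subset O(\delta t/\log\delta t)$. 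The only point that genuinely warrants a careful reread of the Proposition's proof is confirming that every constant appearing there is uniform in $x\in B_\delta^{x^\ast}$; this is straightforward because the expansions (\ref{eq:ser2}) and (\ref{eq:rt}) depend only on the distance from $x$ to $\pm 2$, never on the distance from $x$ to $x^\ast$.
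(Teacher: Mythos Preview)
Your proposal is correct and follows exactly the route the paper intends: the paper states only that the corollary ``follows immediately from the proof of proposition \ref{pro:hineq} and the fact that $h(x^{\ast})-\frac{V(x^{\ast})}{2}-\frac{l}{2}=0$,'' and you have spelled out precisely how---by dropping the $\mathbb{R}_6$ piece (since only $[\alpha_t,\beta_t]$ is integrated over) and observing that the remaining five estimates require merely that $x$ stay bounded away from $[\alpha_t,\beta_t]$, which holds uniformly on $B_\delta^{x^\ast}$. Your handling of the substitution $\delta t/t\mapsto\delta t$ and the algebra at $x=x^\ast$ is also in line with the paper; note in particular that for the interval $[-2,2]$ the Robin constant vanishes, so $\varsigma=0$ and the bookkeeping with $\tilde l$ is trivial.
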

This corollary is essential for the construction of the local
parametrix inside the neighborhood $B_{\delta}^{x^{\ast}}$. (See
section \ref{se:lpara})

\section{Riemann-Hilbert analysis}\label{se:RH}

A result by Fokas, Its and Kitaev \cite{FI} shows that the
orthogonal polynomials (\ref{eq:op}) can be expressed in terms of a
Riemann-Hilbert problem. In this section we will apply the
Deift-Zhou steepest decent method to approximate this
Riemann-Hilbert problem by a Riemann-Hilbert problem that is
solvable explicitly. We will achieve this by using the approximated
equilibrium measure constructed in section \ref{se:equil}. We will
modify the measure $\tilde{\rho}^t(x)dx$ by replacing the charges on
$[x^{\ast}-\sigma_t,x^{\ast}+\sigma_t]$ by a point charge. This then
allows us to construct local parametrix near the critical point
$x^{\ast}$ from orthogonal polynomials with weight $e^{-x^{2\nu}}$
on the real axis.

\subsection{Riemann-Hilbert problem for the orthogonal polynomials}

One important property of the orthogonal polynomials (\ref{eq:op})
is that they can be represented as a solution to a Riemann-Hilbert
problem \cite{FI}.

Consider the following Riemann-Hilbert problem for a matrix-valued
function $Y(x)=Y_{n,N}(x)$.
\begin{equation}\label{eq:RHP}
\begin{split}
&1. \quad \text{$Y(x)$ is analytic on $\mathbb{C}/\mathbb{R}$}\\
&2.\quad Y_+(x)=Y_-(x)\begin{pmatrix} 1 & e^{-NV(x)}  \\ 0 & 1
\end{pmatrix},\quad x\in\mathbb{R}\\
&3. \quad Y(x)=\left(I+O(x^{-1})\right)\begin{pmatrix} x^n & 0
\\ 0 & x^{-n} \end{pmatrix},\quad x\rightarrow\infty
\end{split}
\end{equation}
where $Y_+(x)$ and $Y_-(x)$ denotes the limiting values of $Y(x)$ as
it approaches the left and right hand sides of the real axis. This
Riemann-Hilbert problem has the following unique solution.
\begin{equation*}
Y(x)=\begin{pmatrix} \pi_n(x) & \frac{1}{2\pi i}\int_{\mathbb{R}}\frac{\pi_n(s)e^{-NV(s)}}{s-x}ds  \\
\kappa_{n-1}\pi_{n-1}(x) & \frac{\kappa_{n-1}}{2\pi
i}\int_{\mathbb{R}}\frac{\pi_{n-1}(s)e^{-NV(s)}}{s-x}ds
\end{pmatrix}
\end{equation*}
where $\kappa_{n-1}=-2\pi ih_{n-1}^{-1}$ \cite{D}. The correlation
kernel (\ref{eq:kernel}) can be expressed in terms of the solution
of the Riemann-Hilbert problem $Y(x)$ via \cite{CK}
\begin{equation}\label{eq:corker}
K_{n,N}(x,y)=\frac{e^{-N\frac{V(x)+V(y)}{2}}}{2\pi
i(x-y)}\left(0\quad 1\right)Y_+^{-1}(y)Y_+(x)\begin{pmatrix} 1 \\ 0
\end{pmatrix}
\end{equation}
We shall apply the Deift-Zhou steepest decent method to the
Riemann-Hilbert problem (\ref{eq:RHP}) to obtain the asymptotics for
the orthogonal polynomials and the correlation kernel.

\subsection{Initial transformation of the Riemann-Hilbert problem}
We shall perform a series of transformation to the Riemann-Hilbert
problem (\ref{eq:RHP}) and approximate it with a Riemann-Hilbert
problem that can be solved explicitly. We will than use the solution
of the final model Riemann-Hilbert problem to compute the
asymptotics of the orthogonal polynomials and the correlation
kernel.

\subsubsection{The $g$-function}

To begin with, let us denote $t$ by $t=\frac{n}{N}$ and rewrite the
jump matrix in (\ref{eq:RHP}) as
\begin{equation*}
\begin{pmatrix} 1 & e^{-NV(x)}  \\ 0 & 1
\end{pmatrix}=\begin{pmatrix} 1 & e^{-nV_t(x)}  \\ 0 & 1
\end{pmatrix}
\end{equation*}
where $V_t(x)=\frac{1}{t}V(x)$. We will now define a function
$g^t(x)$ from the function $\tilde{h}^t(x)$ constructed in section
\ref{se:equil} to transform this Riemann-Hilbert problem.

Let $u^t$ be the following
\begin{equation*}
u^t=n\int_{\mathbb{R}_6}\tilde{\rho}^t(s)ds
\end{equation*}
where $\mathbb{R}_6$ is defined in (\ref{eq:Rno}).

For later convenience, let us denote by $\overline{u}^t$ the
non-negative integer closest to $u^t$:
\begin{equation}\label{eq:ubar}
\begin{split}
\overline{u}^t&=\left[u^t+\frac{1}{2}\right],\quad u^t\geq 0\\
\overline{u}^t&=0,\quad u^t <0.
\end{split}
\end{equation}
From (\ref{eq:integral1}), we see that if $u^t>0$, then
\begin{equation}\label{eq:ut}
u^t=-n\frac{\delta t}{\log\delta
t}2\nu\phi(x^{\ast})\left(1+O\left(\left(\frac{\delta t}{-\log\delta
t}\right)^{\frac{1}{2\nu}}\right)\right).
\end{equation}
By inserting the scaling (\ref{eq:scale00}) into (\ref{eq:ut}), we
see that $u^t$ is finite in this limit.

As mentioned before, we would like to replace the charges in the
interval $[x^{\ast}-\sigma_t,x^{\ast}+\sigma_t]$ by a point charge
when $t>1$. We should therefore define the $g$-function to be the
following.
\begin{equation}\label{eq:gfun}
\begin{split}
g^t(x)&=\int_{\alpha_t}^{\beta_t}\left(\tilde{\rho}^t(s)
-\iota_t\frac{\delta t}{\log\delta
t}\frac{8\pi\sqrt{(s-\alpha_t)(\beta_t-s)}}{(\alpha_t+\beta_t)^2}\right)\log(x-s)ds\\
&+\frac{u^t}{n}\log(x-x^{\ast}),\quad
t>1,\\
g^t(x)&=\int_{\alpha_t}^{\beta_t}\rho^t(s)\log(x-s)ds,\quad t\leq1,
\end{split}
\end{equation}
where in the above equations, we have extended the definitions of
the end points $\alpha_t$ and $\beta_t$ (\ref{eq:alp}) to include
the the values of $t$ that are less than or equal to 1:
\begin{equation}\label{eq:alp1}
\begin{split}
\alpha_t&=-2+\frac{\delta t}{(2+x^{\ast})^{2\nu-1}Q(-2)},\quad
\beta_t=2-\frac{\delta t}{(x^{\ast}-2)^{2\nu-1}Q(2)},\quad t>1,\\
\alpha_t&=a_t,\quad \beta_t=b_t,\quad t\leq1.
\end{split}
\end{equation}
Then, from (\ref{eq:ut}) and (\ref{eq:order6}), we see that, for
$x\in\mathbb{C}/B_{\delta}^{x^{\ast}}$,
\begin{equation*}
\begin{split}
\frac{u^t}{n}\log(x-x^{\ast})&=O\left(\frac{\delta
t\log(x-x^{\ast})}{\log\delta
t}\right)\\
\int_{x^{\ast}-\sigma_t}^{x^{\ast}+\sigma_t}\tilde{\rho}^t(s)\log(x-s)ds&=O\left(\frac{\delta
t\log(x-x^{\ast})}{\log\delta t}\right).
\end{split}
\end{equation*}
It then follows from corollary \ref{cor:hineq} and the properties
(\ref{eq:thineq}) that the function $g^t(x)$ satisfies the following
\begin{proposition}\label{pro:gineq}
For sufficiently small $\delta t$, there exist $\delta>0$ such that
$[x^{\ast}-\sigma_t,x^{\ast}+\sigma_t]\subset B_{\delta}^{x^{\ast}}$
and that
\begin{equation}\label{eq:gineq}
\begin{split}
&g^t_+(x)+g^t_-(x)-\frac{V(x)}{t}-\frac{\tilde{l}}{t}=\upsilon_t(x)\left(\frac{\delta
t}{\log\delta t}\right),\quad x\in[\alpha_t,\beta_t]/\left(B_{\delta}^{-2}\cup B_{\delta}^{2}\right)\\
&g^t_+(x)+g^t_-(x)-\frac{V(x)}{t}-\frac{\tilde{l}}{t}<0,\quad
x\in\mathbb{R}/\left(\bigcup_{j=1}^3B_{\delta}^{r_i}\cup[\alpha_t,\beta_t]\right),\\
&g^t(x)=\log x+O(1),\quad x\rightarrow\infty.
\end{split}
\end{equation}
where $\tilde{l}$ is the following constant
\begin{equation*}
\tilde{l}=\left\{
            \begin{array}{ll}
              l-(\delta t)\varsigma, & \hbox{$t>1$;} \\
              t l_t, & \hbox{$t\leq1$.}
            \end{array}
          \right.
\end{equation*}
The function $\upsilon_t(x)$ remains uniformly bounded in
$[\alpha_t,\beta_t]$ as $\delta t\rightarrow 0$. That is, if
\begin{equation*}
\lim_{\delta t\rightarrow 0}\upsilon_t(x)=\upsilon(x),
\end{equation*}
then $\upsilon(x)$ is uniformly bounded in $[\alpha_t,\beta_t]$. In
particular, $\upsilon_t(x)$ is zero when $t\leq 1$.
\end{proposition}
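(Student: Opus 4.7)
The plan is to split the proof into the two regimes $t\le 1$ and $t>1$, since the definition of $g^t(x)$ in (\ref{eq:gfun}) differs between them. For $t\le 1$, the definition gives $g^t(x)=h^t(x)$, so all three conditions follow immediately from (\ref{eq:hineq}) applied to the true (one-cut) equilibrium measure, with $\upsilon_t\equiv 0$ and $\tilde{l}=t\,l_t$. The entire substance is therefore the case $t>1$.

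For $t>1$, the key idea is to write $g^t$ as a perturbation of $\tilde{h}^t$, so that Corollary \ref{cor:hineq} does most of the work. Concretely, from (\ref{eq:tildh}) and (\ref{eq:gfun}),
\begin{equation*}
g^t(x)-\tilde{h}^t(x)
=\underbrace{\tfrac{u^t}{n}\log(x-x^{\ast})-\int_{x^{\ast}-\sigma_t}^{x^{\ast}+\sigma_t}\tilde\rho^t(s)\log(x-s)\,ds}_{=:A(x)}
\;-\;\underbrace{\iota_t\tfrac{\delta t}{\log\delta t}\!\int_{\alpha_t}^{\beta_t}\!\!\!\frac{8\pi\sqrt{(s-\alpha_t)(\beta_t-s)}}{(\alpha_t+\beta_t)^2}\log(x-s)\,ds}_{=:B(x)}.
\end{equation*}
By (\ref{eq:order6}) (and the definition of $u^t$) the term $A(x)$ is of order $\delta t/\log\delta t$ uniformly on compact subsets of $\mathbb{C}\setminus B_\delta^{x^\ast}$, and is in fact analytic there since the point charge and the local integral collectively live in $B_\delta^{x^\ast}$. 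The term $B(x)$ is manifestly analytic on $\mathbb{C}\setminus[\alpha_t,\beta_t]$ and is globally of order $\delta t/\log\delta t$. Both pieces therefore contribute only to the error term $\upsilon_t(x)\,\delta t/\log\delta t$ that appears in (\ref{eq:gineq}); the shift in the constant from $\tilde l_h=l+(\delta t)\varsigma$ in Corollary \ref{cor:hineq} to $\tilde l=l-(\delta t)\varsigma$ in (\ref{eq:gineq}) is absorbed into this same error (and the explicit $B$-contribution on $[\alpha_t,\beta_t]$).

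With this decomposition in hand I would verify the three conditions in turn. For the first, on $[\alpha_t,\beta_t]\setminus(B_\delta^{-2}\cup B_\delta^{2})$, add $A_+(x)+A_-(x)=2A(x)$ (since $A$ is analytic there) and $B_+(x)+B_-(x)$ to the first line of (\ref{eq:thineq}); all extra terms are uniformly $O(\delta t/\log\delta t)$, giving the desired form with a new $\upsilon_t$. For the strict inequality off the support on $\mathbb{R}\setminus(\bigcup_j B_\delta^{r_j}\cup[\alpha_t,\beta_t])$, the strict inequality in Corollary \ref{cor:hineq} leaves a fixed negative gap that dominates the $O(\delta t/\log\delta t)$ perturbation from $A$ and $B$ for $\delta t$ small enough; the only delicate region is a one-sided neighborhood of $x^\ast$, where I would use Corollary \ref{cor:ge} (specifically (\ref{eq:ge}) and the vanishing to order $2\nu$ of $2h(x)-V(x)-l$ at $x^\ast$) to see that the replacement of $\int_{\mathbb{R}_6}\tilde\rho^t\log(x-s)ds$ by $(u^t/n)\log(x-x^\ast)$ is consistent with strict negativity outside the emerging small band. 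For the logarithmic behavior at infinity, the third line of (\ref{eq:thineq}) gives $\tilde h^t(x)=(1+\iota_t\delta t/\log\delta t)\log x+O(1)$; the point-charge substitution in $A$ preserves total mass (by the definition of $u^t$), while $B(x)=\iota_t(\delta t/\log\delta t)\log x+O(1)$ at infinity (using that the subtracted density is a probability density on $[\alpha_t,\beta_t]$), so $g^t(x)=\log x+O(1)$ exactly as required.

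The main obstacle I expect is the second condition—the strict negativity off the support—specifically in the crossover region just outside $B_\delta^{x^\ast}$, where the one-sided limit coming from $x>x^\ast+\sigma_t$ must be reconciled with the point-charge replacement. Here the proof relies on Corollary \ref{cor:ge} to identify the leading $\delta t\,\phi(x^\ast)$ contribution at $x^\ast$ and on the $2\nu$-th order vanishing of $2h(x)-V(x)-l$ to control the sign of the perturbation on either side. Everything else reduces, through the $g^t=\tilde h^t+A-B$ bookkeeping, to uniform bounds already established for $\tilde h^t$ in Proposition \ref{pro:hineq} and its corollaries.
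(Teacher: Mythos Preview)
Your approach is essentially the same as the paper's. The paper's argument (given in the text immediately preceding the proposition rather than as a formal proof) simply observes that both $\frac{u^t}{n}\log(x-x^{\ast})$ and $\int_{\mathbb{R}_6}\tilde\rho^t(s)\log(x-s)\,ds$ are $O(\delta t/\log\delta t)$ outside $B_\delta^{x^\ast}$, then invokes Corollary~\ref{cor:hineq}; your $g^t=\tilde h^t+A-B$ decomposition makes this bookkeeping explicit and adds the (correct) explanation that the $\iota_t$-correction $B$ is precisely what fixes the total mass so that $g^t(x)=\log x+O(1)$ at infinity, a point the paper leaves implicit.
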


The function $g^t(x)$ is analytic on $\mathbb{C}/(-\infty,x^{\ast})$
and has the following jump discontinuities on $(-\infty,\alpha_t)$
and $(\beta_t,x^{\ast})$.
\begin{equation}\label{eq:gd}
\begin{split}
g^t_+(x)&-g^t_-(x)=2\pi i\int_{x}^{\beta_t}\left(\tilde{\rho}^t(s)
-\iota_t\frac{\delta t}{\log\delta
t}\frac{8\pi\sqrt{(s-\alpha_t)(\beta_t-s)}}{(\alpha_t+\beta_t)^2}\right)ds\\
 g^t_+(x)&-g^t_-(x)=2\pi i,\quad x\in
(-\infty,\alpha_t)\\
g^t_+(x)&-g^t_-(x)=2\pi i\frac{u^t}{n},\quad x\in (\beta_t,
x^{\ast}).
\end{split}
\end{equation}
we can now transform the Riemann-Hilbert problem with the function
$g^t(x)$.

Let $T(x)$ be the following function
\begin{equation}\label{eq:Tx}
T(x)=e^{\frac{-n\tilde{l}\sigma_3}{2t}}Y(x)e^{-ng^t(x)\sigma_3}e^{\frac{n\tilde{l}\sigma_3}{2t}},
\end{equation}
where $\sigma_3$ is the Pauli matrix
\begin{equation*}
\sigma_3=\begin{pmatrix} 1 &0  \\ 0 & -1
\end{pmatrix}.
\end{equation*}

Then $T(x)$ is a solution to the following Riemann-Hilbert problem.
\begin{equation*}
\begin{split}
&1. \quad \text{$T(x)$ is analytic in $\mathbb{C}/\mathbb{R}$};\\
&2. \quad T_+(x)=T_-(x)J_T(x),\quad x\in\mathbb{R};\\
&3. \quad T(x)=I+O(x^{-1}),\quad x\rightarrow\infty.
\end{split}
\end{equation*}
where $J_T(x)$ is the following matrix on $\mathbb{R}$.
\begin{equation*}
J_T(x)=\begin{pmatrix} e^{-n(g_+^t(x)-g_-^t(x))} & e^{n\left(g_+^t(x)+g_-^t(x)-V_t(x)-\frac{\tilde{l}}{t}\right)}  \\
0 &e^{n(g_+^t(x)-g_-^t(x))}
\end{pmatrix},\quad x\in\mathbb{R}.
\end{equation*}

\subsubsection{Opening of the lens}
We now perform a standard technique in the steepest decent method
\cite{BI}, \cite{DKV}, \cite{DKV2}. First note that, from
(\ref{eq:gineq}), we see that $J_T(x)$ becomes the following on the
interval $[\alpha_t,\beta_t]$.
\begin{equation*}
J_T(x)=\begin{pmatrix} e^{-n(g_+^t(x)-g_-^t(x))} & e^{2D_n(x)}  \\
0 &e^{n(g_+^t(x)-g_-^t(x))}
\end{pmatrix},\quad x\in [\alpha_t,\beta_t]
\end{equation*}
where $D_n(x)$ is the function
\begin{equation}\label{eq:Dn}
D_n(x)=\upsilon_t(x)\frac{n\delta t}{\log\delta t},\quad t>1,\quad
D_n(x)=0,\quad t\leq 1
\end{equation}
which is bounded on $[\alpha_t,\beta_t]$ under the double scaling
limit (\ref{eq:scale00}).

Then from (\ref{eq:gd}), the jump matrix $J_T(x)$ has the following
factorization on $[\alpha_t,\beta_t]$.
\begin{equation*}
J_T(x)=\begin{pmatrix} 1 & 0  \\
e^{n\left(V_t(x)-2g^t_-(x)+\frac{\tilde{l}}{t}\right)-2D_n(x)} &1
\end{pmatrix}\begin{pmatrix} 0 & e^{2D_n(x)}  \\
-e^{-2D_n(x)} &0
\end{pmatrix}\begin{pmatrix} 1 & 0  \\
e^{n\left(V_t(x)-2g^t_+(x)+\frac{\tilde{l}}{t}\right)-2D_n(x)} &1
\end{pmatrix}
\end{equation*}
\begin{figure}
\centering
\begin{overpic}[scale=.35, unit=1mm]{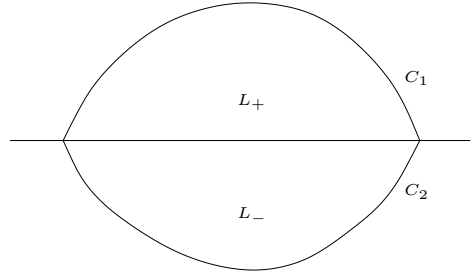}
\put(30,22){\tiny {$L_+$}}\put(30,7){\tiny {$L_-$}}\put(52,25){\tiny
{$C_1$}}\put(52,10){\tiny {$C_2$}}
\end{overpic}
\caption{The opening of the lens and different regions in the
lens.}\label{fig:lens}
\end{figure}
As in \cite{BI}, \cite{DKV}, \cite{DKV2}, we can open a lens around
the interval $[\alpha_t,\beta_t]$ as shown in Figure \ref{fig:lens}
and define the matrix $S(x)$ to be the following
\begin{equation}\label{eq:S}
\begin{split}
S(x)=\left\{
       \begin{array}{ll}
         T(x), & \hbox{$x$ outside the lens;} \\
         T(x)\begin{pmatrix} 1 & 0  \\
-e^{n\left(V_t(x)-2g^t(x)+\frac{\tilde{l}}{t}\right)-2D_n(x)} &1
\end{pmatrix}, & \hbox{$x\in L_+$;} \\
         T(x)\begin{pmatrix} 1 & 0  \\
e^{n\left(V_t(x)-2g^t(x)+\frac{\tilde{l}}{t}\right)-2D_n(x)} &1
\end{pmatrix}, & \hbox{$x\in L_-$.}
       \end{array}
     \right.
\end{split}
\end{equation}
Then the function $S(x)$ will satisfy the following Riemann-Hilbert
problem.
\begin{equation}\label{eq:RHS}
\begin{split}
&1. \quad \text{$S(x)$ is analytic in $\mathbb{C}/\mathbb{R}$};\\
&2. \quad S_+(x)=S_-(x)J_S(x),\quad x\in\mathbb{R};\\
&3. \quad S(x)=I+O(x^{-1}),\quad x\rightarrow\infty.
\end{split}
\end{equation}
where $J_S(x)$ is now defined by the following
\begin{equation}\label{eq:Sjump1}
\begin{split}
J_S(x)&=\begin{pmatrix} 1 & 0  \\
e^{n\left(V_t(x)-2g^t(x)+\frac{\tilde{l}}{t}\right)-2D_n(x)} &1
\end{pmatrix},\quad x\in C_1\cup C_2\\
J_S(x)&=\begin{pmatrix} 0 & e^{2D_n(x)}  \\
-e^{-2D_n(x)} &0\end{pmatrix},\quad x\in (\alpha_t,\beta_t)\\
J_S(x)&=\begin{pmatrix} 1 & e^{n\left(2g^t(x)-V_t(x)-\frac{\tilde{l}}{t}\right)+2D_n(x)}  \\
0 &1\end{pmatrix},\quad x\in (-\infty,\alpha_t)\\
J_S(x)&=\begin{pmatrix} e^{-2\pi iu^t} & e^{n\left(2g^t(x)-V_t(x)-l_t\right)+2D_n(x)}  \\
0 &e^{2\pi iu^t}\end{pmatrix},\quad x\in (\beta_t,x^{\ast})
\end{split}
\end{equation}
Then from (\ref{eq:gineq}), we see that for some large enough $n$
and $t$ close to 1 such that $\delta t\frac{n}{\log n}=o(1)$, we
have
\begin{equation*}
\begin{split}
&1.\quad \text{$D_n(x)$ is uniformly bounded on $[\alpha_t,\beta_t]$.}\\
&2.\quad
\text{$e^{n\left(2g^t(x)-V_t(x)-\frac{\tilde{l}}{t}\right)}\rightarrow
0$
on $\mathbb{R}/\left(\bigcup_{j=1}^3B_{\delta}^{r_j}\cup [\alpha_t,\beta_t]\right)$.}\\
&3. \quad
\text{$e^{n\left(V_t(x)-2g^t(x)+\frac{\tilde{l}}{t}\right)}\rightarrow
0$ on $\left(C_1\cup C_2\right)/\left(B_{\delta}^{-2}\cup
B_{\delta}^2\right)$.}
\end{split}
\end{equation*}
Therefore, in the double scaling limit, the jump matrix $J_S(x)$
behaves as
\begin{equation*}
\begin{split}
J_S(x)\rightarrow I,\quad \left(x\in \mathbb{R}\cup C_1\cup C_2\right)/\left(\bigcup_{j=1}^3B_{\delta}^{r_j}\cup[\alpha_t,\beta_t]\right),\\
J_S(x)=\begin{pmatrix} 0 & e^{2D_n(x)}  \\
-e^{-2D_n(x)} &0\end{pmatrix},\quad x\in(\alpha_t,\beta_t)
\end{split}
\end{equation*}
\subsection{Parametrix outside of the points $\alpha_t$, $\beta_t$ and
$x^{\ast}$}

We will now construct the parametrix outside of the singular points.
We would like to find a solution to the following Riemann-Hilbert
problem.
\begin{equation}\label{eq:sinf}
\begin{split}
&1. \quad \text{$S^{\infty}(x)$ is analytic in $\mathbb{C}/\left(\mathbb{R}\cup B_{\delta}^{x^{\ast}}\right)$};\\
&2. \quad S_+^{\infty}(x)=S_-^{\infty}(x)J^{\infty}(x),\quad x\in\mathbb{R};\\
&3. \quad S^{\infty}(x)=I+O(x^{-1}),\quad x\rightarrow\infty.
\end{split}
\end{equation}
where $J^{\infty}(x)$ is the following matrix-valued function.
\begin{equation}\label{eq:jinf}
\begin{split}
J^{\infty}(x)&=\begin{pmatrix} e^{-2\pi iu^t} & 0  \\
0 &e^{2\pi iu^t}\end{pmatrix},\quad x\in(\beta_t,x^{\ast})\\
J^{\infty}(x)&=\begin{pmatrix} 0 & e^{2D_n(x)}  \\
-e^{-2D_n(x)} &0\end{pmatrix},\quad x\in (\alpha_t,\beta_t).
\end{split}
\end{equation}
We should construct several scalar functions and use them to `dress'
the 1-cut parametrix constructed in \cite{DKV} so that it satisfies
the Riemann-Hilbert problem (\ref{eq:sinf}) with the jumps
(\ref{eq:jinf}).

\subsubsection{The limiting Abelian differential}\label{se:F}

The discussions in sections \ref{se:F} and \ref{se:D} are only
relevant when $t>1$. Let us first construct a function $F(x)$ with
the following jump discontinuities
\begin{equation*}
\begin{split}
F_+(x)&=-F_-(x),\quad x\in [\alpha_t,\beta_t]\\
F_+(x)&=F_-(x)+2\pi i, \quad x\in [\beta_t,x^{\ast}]\\
\end{split}
\end{equation*}
We have the following
\begin{lemma}\label{le:F}
The function $F(x)$ defined by
\begin{equation}\label{eq:F}
F(x)=\int_{\alpha_t}^{x}\frac{\sqrt{(x^{\ast}-\alpha_t)(x^{\ast}-\beta_t)}ds}{\sqrt{(s-\alpha_t)(s-\beta_t)}(s-x^{\ast})}=\log
\frac{\sqrt{\frac{x^{\ast}-\alpha_t}{x^{\ast}-\beta_t}}-\sqrt{\frac{x-\alpha_t}{x-\beta_t}}}
{\sqrt{\frac{x^{\ast}-\alpha_t}{x^{\ast}-\beta_t}}+\sqrt{\frac{x-\alpha_t}{x-\beta_t}}}
\end{equation}
satisfies the following scalar Riemann-Hilbert problem.
\begin{equation}\label{eq:RHF}
\begin{split}
&1. \quad \text{$F(x)$ is analytic in $\mathbb{C}/[\alpha_t,x^{\ast}]$};\\
&2. \quad F_+(x)=-F_-(x),\quad x\in [\alpha_t,\beta_t];\\
&3. \quad F_+(x)=F_-(x)+2\pi i,\quad x\in [\beta_t,x^{\ast}];\\
&4. \quad F(x)=\log (x-x^{\ast})+O(1),\quad x\rightarrow x^{\ast};\\
&5. \quad F(x)=F_0;
+O(x^{-1}),\quad x\rightarrow\infty;\\
&F_0=\log
\frac{\sqrt{\frac{x^{\ast}-\alpha_t}{x^{\ast}-\beta_t}}-1}{\sqrt{\frac{x^{\ast}-\alpha_t}{x^{\ast}-\beta_t}}+1},
\\
 &6. \quad
\text{$F(x)$ is bounded as $x$ approaches $\alpha_t$ or $\beta_t$.}
\end{split}
\end{equation}
where the square roots and logarithm are chosen such that the branch
cut is on the negative real axis. Also, to avoid ambiguity, we set
\begin{equation*}
\sqrt{\frac{x-\alpha_t}{x-\beta_t}}=\frac{\sqrt{(x-\alpha_t)(x-\beta_t)}}{(x-\beta_t)^2}
\end{equation*}
\end{lemma}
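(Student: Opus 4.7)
The plan is to treat the two expressions for $F(x)$ separately: the integral representation is convenient for checking the analyticity, the endpoint behavior and the residue-type jump on $[\beta_t,x^{\ast}]$, while the closed form makes the other jump and the constant $F_0$ essentially automatic. A first step is therefore to verify that the two expressions coincide: differentiate the right-hand side of \eqref{eq:F} with respect to $x$ (using $\log u - \log v$ and the chain rule on $\sqrt{(x-\alpha_t)/(x-\beta_t)}$), observe that the derivative simplifies to $\sqrt{(x^{\ast}-\alpha_t)(x^{\ast}-\beta_t)}/(\sqrt{(x-\alpha_t)(x-\beta_t)}(x-x^{\ast}))$, and check that both expressions vanish at $x=\alpha_t$ with the conventions stated for the square roots.

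Next, using the definition $\sqrt{(x-\alpha_t)(x-\beta_t)}=(x-\beta_t)\sqrt{(x-\alpha_t)/(x-\beta_t)}$ together with the branch cut of the logarithm on the negative real axis, I would track the continuation of $R(x):=\sqrt{(x-\alpha_t)/(x-\beta_t)}$. It is analytic off $[\alpha_t,\beta_t]$, satisfies $R_+(x)=-R_-(x)$ on $(\alpha_t,\beta_t)$, and is real and positive on $(\beta_t,\infty)$. From \eqref{eq:F} in closed form, the jump $F_+(x)=-F_-(x)$ on $[\alpha_t,\beta_t]$ then follows because flipping the sign of $R(x)$ interchanges the numerator and denominator inside the logarithm, producing an extra sign. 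The jump $F_+(x)=F_-(x)+2\pi i$ on $[\beta_t,x^{\ast}]$ is most cleanly seen from the integral representation: deforming the contour from above to below past the simple pole of the integrand at $s=x^{\ast}$ picks up $2\pi i$ times the residue, and the residue equals $\sqrt{(x^{\ast}-\alpha_t)(x^{\ast}-\beta_t)}/\sqrt{(x^{\ast}-\alpha_t)(x^{\ast}-\beta_t)}=1$. Together with analyticity on $\mathbb{C}\setminus[\alpha_t,x^{\ast}]$ (immediate from the closed form and the stated branch conventions), this gives items 1--3.

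For the local behaviour at $x^{\ast}$, I would extract the leading term directly from the integral: write $\sqrt{(s-\alpha_t)(s-\beta_t)}=\sqrt{(x^{\ast}-\alpha_t)(x^{\ast}-\beta_t)}(1+O(s-x^{\ast}))$ near $s=x^{\ast}$ and integrate term-by-term, so that $F(x)=\log(x-x^{\ast})+O(1)$. For item 5, the integrand is $O(s^{-2})$ at infinity, so $F(\infty)$ exists; substituting $R(x)\to 1$ as $x\to\infty$ in the closed form yields exactly $F_0$, and the remainder is $O(x^{-1})$ because the expansion of $R(x)$ to one order contributes only an $O(x^{-1})$ correction inside the logarithm. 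Finally, boundedness at $\alpha_t$ and $\beta_t$ follows because the integrand has only integrable $1/\sqrt{s-\alpha_t}$ and $1/\sqrt{s-\beta_t}$ singularities there.

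The main obstacle I anticipate is bookkeeping of branches: ensuring that the chosen principal branches for the two square roots $\sqrt{(x-\alpha_t)/(x-\beta_t)}$ and $\sqrt{(x^{\ast}-\alpha_t)/(x^{\ast}-\beta_t)}$ are consistent with the logarithm's branch cut on the negative real axis, both near the pole at $x^{\ast}$ (where one must verify that the argument of the logarithm in \eqref{eq:F} crosses $(-\infty,0)$ exactly once as $x$ moves across $(\beta_t,x^{\ast})$ from above to below) and at infinity (to get the correct constant $F_0$ rather than $F_0+2\pi i k$). Once this is done carefully, the six conditions in \eqref{eq:RHF} follow straightforwardly from the two representations.
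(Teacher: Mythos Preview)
Your proposal is correct and, for properties 2, 5 and 6, essentially coincides with the paper's argument. The genuine difference is in how you handle property~3 (the jump on $[\beta_t,x^{\ast}]$) and property~4. The paper works exclusively with the closed form: it shows that the argument $\Phi(x)$ of the logarithm is real on $\mathbb{R}\setminus[\alpha_t,\beta_t]$, then carries out a monotonicity/sign analysis to prove $\Phi(x)<0$ exactly on $(\beta_t,x^{\ast})$ and $\Phi(x)>0$ elsewhere, so that the principal logarithm jumps by $2\pi i$ precisely there; property~4 is obtained by Taylor expanding $\Phi$ at $x^{\ast}$. Your route via the integral representation is shorter and avoids the sign analysis entirely, which is a real gain; conversely, the paper's approach has the advantage that analyticity on $\mathbb{C}\setminus[\alpha_t,x^{\ast}]$ (your ``immediate from the closed form'') is actually established rather than asserted, since one must still check that $\Phi$ never hits $(-\infty,0]$ off $[\beta_t,x^{\ast}]$.

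One point in your residue argument should be tightened. For $x\in(\beta_t,x^{\ast})$ the integration path from $\alpha_t$ to $x$ never approaches $x^{\ast}$, so there is nothing to ``deform past the pole''. What is true is that $F_+(x)-F_-(x)$ equals the integral of $F'(s)$ over the closed contour obtained by running along the $+$ side of $[\alpha_t,\beta_t]$ and back along the $-$ side (the contributions on $(\beta_t,x)$ cancel). That closed contour encircles $[\alpha_t,\beta_t]$; since $F'(s)=O(s^{-2})$ at infinity, deforming it through $\infty$ converts it into a small loop around $x^{\ast}$, and then the residue computation you give (value $1$) yields $2\pi i$. So your mechanism is right, but the geometry should be stated this way rather than as a deformation of the original path.
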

\begin{proof} First let us show that the only singularity of $F(x)$
is at $x^{\ast}$. The function $F(x)$ can become singular when the
argument becomes zero or infinity. The numerator and the denominator
of the argument can become zero if
\begin{equation*}
\begin{split}
\sqrt{\frac{x^{\ast}-\alpha_t}{x^{\ast}-\beta_t}}\pm\sqrt{\frac{x-\alpha_t}{x-\beta_t}}=0,
\end{split}
\end{equation*}
which implies
\begin{equation*}
\begin{split}
&\frac{x^{\ast}-\alpha_t}{x^{\ast}-\beta_t}=\frac{x-\alpha_t}{x-\beta_t},\\
&x=x^{\ast}.
\end{split}
\end{equation*}
Since the denominator is non-zero at $x=x^{\ast}$, we see that the
denominator does not vanish for all $x\in\mathbb{C}$. Near
$x=x^{\ast}$, we can expand the numerator and the denominator in a
power series of $x-x^{\ast}$.
\begin{equation*}
\frac{\sqrt{\frac{x^{\ast}-\alpha_t}{x^{\ast}-\beta_t}}-\sqrt{\frac{x-\alpha_t}{x-\beta_t}}}
{\sqrt{\frac{x^{\ast}-\alpha_t}{x^{\ast}-\beta_t}}+\sqrt{\frac{x-\alpha_t}{x-\beta_t}}}=c_0(x-x^{\ast})+O((x-x^{\ast})^2),
\end{equation*}
where $c_0$ is the following constant
\begin{equation*}
\begin{split}
c_0=-\frac{\frac{d}{dx}\left(\sqrt{\frac{x-\alpha_t}{x-\beta_t}}\right)|_{x=x^{\ast}}}{2\sqrt{\frac{x^{\ast}-\alpha_t}{x^{\ast}-\beta_t}}}
=\frac{\beta_t-\alpha_t}{4(x^{\ast}-\beta_t)(x^{\ast}-\alpha_t)}\neq
0.
\end{split}
\end{equation*}
Therefore near $x=x^{\ast}$, $F(x)$ behaves like
\begin{equation*}
F(x)=\log(x-x^{\ast})+O(1),\quad x\rightarrow x^{\ast},
\end{equation*}
this proves 4.

The other points where $F(x)$ can be singular are the points
$x=\alpha_t$, $\beta_t$ or $x=\infty$, where the argument may become
infinite. However, near $x=\alpha_t$, the function
$\sqrt{\frac{x-\alpha_t}{x-\beta_t}}$ remains finite and therefore
$F(x)$ does not have singularity near it. Let the argument inside
the logarithm of (\ref{eq:F}) be $\Phi(x)$.
\begin{equation}\label{eq:arg}
\begin{split}
\Phi(x)=\frac{\sqrt{\frac{x^{\ast}-\alpha_t}{x^{\ast}-\beta_t}}-\sqrt{\frac{x-\alpha_t}{x-\beta_t}}}
{\sqrt{\frac{x^{\ast}-\alpha_t}{x^{\ast}-\beta_t}}+\sqrt{\frac{x-\alpha_t}{x-\beta_t}}}.
\end{split}
\end{equation}
Then near $x=\beta_t$, $\Phi(x)$ behaves like
\begin{equation*}
\Phi(x)=\frac{\sqrt{\frac{(x^{\ast}-\alpha_t)(x-\beta_t)}{x^{\ast}-\beta_t}}-\sqrt{x-\alpha_t}}
{\sqrt{\frac{(x^{\ast}-\alpha_t)(x-\beta_t)}{x^{\ast}-\beta_t}}+\sqrt{x-\alpha_t}},
\end{equation*}
which is bounded and non-zero as $x\rightarrow \beta_t$.

We now consider the point $x=\infty$. Near $x=\infty$, we can
rewrite $\Phi(x)$ as
\begin{equation*}
\Phi(x)=\frac{\sqrt{\frac{(x^{\ast}-\alpha_t)}{x^{\ast}-\beta_t}}-\sqrt{\frac{1-\frac{\alpha_t}{x}}{1-\frac{\beta_t}{x}}}}
{\sqrt{\frac{(x^{\ast}-\alpha_t)}{x^{\ast}-\beta_t}}+\sqrt{\frac{1-\frac{\alpha_t}{x}}{1-\frac{\beta_t}{x}}}}.
\end{equation*}
Therefore the asymptotic behavior of $\Phi(x)$ near $x=\infty$ is
given by
\begin{equation*}
\Phi(x)=\frac{\sqrt{\frac{(x^{\ast}-\alpha_t)}{x^{\ast}-\beta_t}}-1}
{\sqrt{\frac{(x^{\ast}-\alpha_t)}{x^{\ast}-\beta_t}}+1}+O(x^{-1}).
\end{equation*}
Hence $F(x)$ has a singularity at $x=x^{\ast}$ only and this proves
5 and 6.

We will now study the jump discontinuities of $F(x)$. First note
that $F(x)$ can only have jump discontinuities outside
$[\alpha_t,\beta_t]$ if
\begin{equation*}
\Phi(x)=\frac{\sqrt{\frac{x^{\ast}-\alpha_t}{x^{\ast}-\beta_t}}-\sqrt{\frac{x-\alpha_t}{x-\beta_t}}}
{\sqrt{\frac{x^{\ast}-\alpha_t}{x^{\ast}-\beta_t}}+\sqrt{\frac{x-\alpha_t}{x-\beta_t}}}\in\mathbb{R}.
\end{equation*}
Simple calculations shows that $x\in\mathbb{R}$. Therefore $F(x)$
can only have jumps on the real axis.

We will first consider the jump on $[\alpha_t,\beta_t]$. On
$[\alpha_t,\beta_t]$ the square function
$\sqrt{\frac{x-\alpha_t}{x-\beta_t}}$ has the following jump
discontinuity.
\begin{equation*}
\left(\sqrt{\frac{x-\alpha_t}{x-\beta_t}}\right)_+=-\left(\sqrt{\frac{x-\alpha_t}{x-\beta_t}}\right)_-.
\end{equation*}
Hence $F_-(x)$ is given by
\begin{equation*}
\begin{split}
F_-(x)=\log
\frac{\sqrt{\frac{x^{\ast}-\alpha_t}{x^{\ast}-\beta_t}}-\left(\sqrt{\frac{x-\alpha_t}{x-\beta_t}}\right)_-}
{\sqrt{\frac{x^{\ast}-\alpha_t}{x^{\ast}-\beta_t}}+\left(\sqrt{\frac{x-\alpha_t}{x-\beta_t}}\right)_-}=\log
\frac{\sqrt{\frac{x^{\ast}-\alpha_t}{x^{\ast}-\beta_t}}+\left(\sqrt{\frac{x-\alpha_t}{x-\beta_t}}\right)_+}
{\sqrt{\frac{x^{\ast}-\alpha_t}{x^{\ast}-\beta_t}}-\left(\sqrt{\frac{x-\alpha_t}{x-\beta_t}}\right)_+}=-F_+(x).
\end{split}
\end{equation*}
This proves property 2.

We now consider the jump on $[\beta_t,x^{\ast}]$. The function
$\Phi(x)$ in (\ref{eq:arg}) is real on
$\mathbb{R}/[\alpha_t,\beta_t]$ and we need to show that it is
negative on $(\beta_t,x^{\ast})$ and positive elsewhere. Let us
first consider a point $x\in (\beta_t,\infty)$. The the denominator
of $\Phi(x)$ is positive for $x\in (\beta_t,\infty)$. To study the
signs of the numerator, let us consider its derivative
\begin{equation}\label{eq:der}
\begin{split}
\frac{d}{dx}\left(\sqrt{\frac{x^{\ast}-\alpha_t}{x^{\ast}-\beta_t}}-\sqrt{\frac{x-\alpha_t}{x-\beta_t}}\right)
=\frac{1}{2}\sqrt{\frac{x-\beta_t}{x-\alpha_t}}\frac{\beta_t-\alpha_t}{(x-\beta_t)^2}>0,\quad
x\in (\beta_t,\infty),
\end{split}
\end{equation}
we see that the numerator is a strictly increasing function on
$(\beta_t,\infty)$. Since it vanishes at $x=x^{\ast}$, we see that
it is negative on $(\beta_t,x^{\ast})$ and positive on
$(x^{\ast},\infty)$.

Now let us look at the sign of $\Phi(x)$ on $(-\infty,\alpha_t)$.
First note that, on $(\beta_t,\infty)$, the square root
$\sqrt{\frac{x-\alpha_t}{x-\beta_t}}$ is positive and strictly
decreasing and hence it is greater than 1 on $(\beta_t,\infty)$. In
particular, we have
\begin{equation*}
\sqrt{\frac{x^{\ast}-\alpha_t}{x^{\ast}-\beta_t}}>1.
\end{equation*}
Now let $x\in(-\infty,\alpha_t)$. In this region, the square root
$\sqrt{\frac{x-\alpha_t}{x-\beta_t}}$ is positive and strictly
decreasing from (\ref{eq:der}). Near $-\infty$, it approaches 1
while at $\alpha_t$, it becomes zero. Therefore on
$(-\infty,\alpha_t)$, it takes values between 0 and 1.

Therefore we have
\begin{equation*}
\begin{split}
&\sqrt{\frac{x^{\ast}-\alpha_t}{x^{\ast}-\beta_t}}-\sqrt{\frac{x-\alpha_t}{x-\beta_t}}>1-1=0,\quad
x\in
(-\infty,\alpha_t)\\
&\sqrt{\frac{x^{\ast}-\alpha_t}{x^{\ast}-\beta_t}}+\sqrt{\frac{x-\alpha_t}{x-\beta_t}}>0,\quad
x\in (-\infty,\alpha_t).
\end{split}
\end{equation*}
Hence the $\Phi(x)$ is positive on $(-\infty,\alpha_t)$.
Summarizing, we have
\begin{equation*}
\begin{split}
&\Phi(x)>0,\quad
x\in (-\infty,\alpha_t)\cup(x^{\ast},\infty)\\
&\Phi(x)<0,\quad x\in (\beta_t,x^{\ast}).
\end{split}
\end{equation*}
This proves 3.

Since $F(x)$ cannot have any jump discontinuities and singularities
other than the ones that are considered here, property 1. is true.
\end{proof}
\begin{remark}The function $F(x)$ can be thought of as the limit of
an Abelian integral on an elliptic curve. Let us consider the
following elliptic curve
\begin{equation}\label{eq:ell}
z^2=(x-\alpha_t)(x-\beta_t)((x-x^{\ast})^2-\sigma_t^2)
\end{equation}
\begin{figure}
\centering
\begin{overpic}[scale=1,unit=1mm]{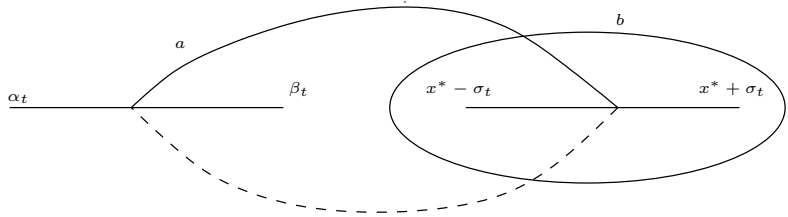}
\put(0,15){\tiny {$\alpha_t$}}\put(37,16){\tiny
{$\beta_t$}}\put(55,16){\tiny
{$x^{\ast}-\sigma_t$}}\put(90,16){\tiny{
$x^{\ast}+\sigma_t$}}\put(80,25){\tiny{$ b$}}\put(22,22){\tiny
{$a$}}
\end{overpic}
\caption{The $a$ and $b$ cycle of the elliptic curve
(\ref{eq:ell}).}\label{fig:curve}
\end{figure}
and define the $a$ and $b$-cycles of this curve as in Figure
\ref{fig:curve}. Then the normalized holomorphic Abelian
differential on this curve is given by
\begin{equation}\label{eq:normform}
\Omega(x)=\frac{Cdx}{\sqrt{(x-\alpha_t)(x-\beta_t)((x-x^{\ast})^2-\sigma_t^2)}}
\end{equation}
for some constant $C$ such that
\begin{equation*}
\oint_{a}\Omega(x)=1
\end{equation*}
In the limit $\sigma_t\rightarrow 0$, the curve becomes degenerate
and the Abelian integral $\int^x\Omega(s)$ degenerates into the
function $F(x)$.
\end{remark}

\subsubsection{Scalar function with jump $D_n(x)$}\label{se:D}
We will now seek a scalar function $K_n(x)$ that is bounded at
infinity and has jump $2D_n(x)$ on $[\alpha_t,\beta_t]$. (cf. the
Szego function used in \cite{KV}, \cite{V})

We shall construct a function $K(x)$ that satisfies the following
Riemann-Hilbert problem.
\begin{equation}\label{eq:RHK}
\begin{split}
&1. \quad \text{$K(x)$ is analytic on $\mathbb{C}/[\alpha_t,\beta_t]$};\\
&2. \quad K_+(x)=-K_-(x)+2D_n(x),\quad x\in [\alpha_t,\beta_t];\\
&3. \quad K(x)=K_0+ O(x^{-1}),\quad x\rightarrow\infty,\\
&K_0=-\frac{1}{2\pi
i}\int_{\alpha_t}^{\beta_t}\frac{2D_n(s)ds}{\left(\sqrt{(s-\alpha_t)(s-\beta_t)}\right)_+}.
\end{split}
\end{equation}
This function can be constructed by the use of Cauchy transform
easily \cite{Mus}.
\begin{equation}\label{eq:K}
\begin{split}
K(x)&=\frac{\sqrt{(x-\alpha_t)(x-\beta_t)}}{2\pi
i}\int_{\alpha_t}^{\beta_t}\frac{2D_n(s)ds}{\left(\sqrt{(s-\alpha_t)(s-\beta_t)}\right)_+(s-x)}
\end{split}
\end{equation}
\begin{lemma}\label{le:K}
The function $K(x)$ defined in (\ref{eq:K}) satisfies the
Riemann-Hilbert problem (\ref{eq:RHK}).
\end{lemma}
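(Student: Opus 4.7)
The plan is to verify the three conditions of the Riemann-Hilbert problem (\ref{eq:RHK}) in turn, factoring $K(x)$ as a product $K(x) = R(x)\,C(x)$, where
\[
R(x) = \sqrt{(x-\alpha_t)(x-\beta_t)},\qquad C(x) = \frac{1}{2\pi i}\int_{\alpha_t}^{\beta_t}\frac{2D_n(s)\,ds}{R_+(s)(s-x)},
\]
and handling the behavior of each factor separately.

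For property 1, analyticity of $R(x)$ on $\mathbb{C}\setminus[\alpha_t,\beta_t]$ is standard once a branch is fixed so that $R(x)\sim x$ at infinity. The Cauchy-type integral $C(x)$ is analytic off $[\alpha_t,\beta_t]$ because the integrand is integrable at the endpoints (the density $2D_n(s)/R_+(s)$ has the usual integrable $(s-\alpha_t)^{-1/2}(\beta_t-s)^{-1/2}$ singularity inherited from $R_+$, and $D_n$ is bounded on $[\alpha_t,\beta_t]$ by (\ref{eq:Dn}) and Proposition \ref{pro:gineq}). Hence the product $K = R\cdot C$ is analytic on $\mathbb{C}\setminus[\alpha_t,\beta_t]$.

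For property 2, I would combine the two elementary jump relations on $(\alpha_t,\beta_t)$: the branch choice gives $R_+(x) = -R_-(x)$, while the Sokhotski-Plemelj formula applied to $C$ yields
\[
C_+(x) - C_-(x) = \frac{2D_n(x)}{R_+(x)},\qquad x\in(\alpha_t,\beta_t).
\]
Therefore
\[
K_+(x) + K_-(x) = R_+(x)C_+(x) + R_-(x)C_-(x) = R_+(x)\bigl(C_+(x)-C_-(x)\bigr) = 2D_n(x),
\]
which is the desired jump $K_+(x) = -K_-(x) + 2D_n(x)$.

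For property 3, I would expand both factors at infinity. One has $R(x) = x - (\alpha_t+\beta_t)/2 + O(x^{-1})$, and from the geometric series $(s-x)^{-1} = -x^{-1} - sx^{-2} + O(x^{-3})$ uniformly in $s\in[\alpha_t,\beta_t]$, one obtains
\[
C(x) = -\frac{1}{x}\cdot\frac{1}{2\pi i}\int_{\alpha_t}^{\beta_t}\frac{2D_n(s)\,ds}{R_+(s)} + O(x^{-2}).
\]
Multiplying gives $K(x) = K_0 + O(x^{-1})$ with exactly the value of $K_0$ in (\ref{eq:RHK}).

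The computation is essentially routine; the only point requiring a brief check is the integrability of the endpoint singularities of the integrand, which justifies applying Sokhotski-Plemelj with a Hölder density on the open interval $(\alpha_t,\beta_t)$ and shows that the integral defining $K_0$ converges. Since no claim of boundedness of $K$ at $\alpha_t,\beta_t$ is required by (\ref{eq:RHK}), no further endpoint analysis is needed for this lemma.
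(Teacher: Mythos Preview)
Your proof is correct and follows essentially the same route as the paper's: both factor $K(x)$ as the square root $R(x)$ times a Cauchy integral, use the Sokhotski--Plemelj formula together with the sign change $R_+ = -R_-$ to obtain the additive jump, and expand $(s-x)^{-1}$ as a geometric series at infinity to read off $K_0$. Your version is slightly more explicit about analyticity and endpoint integrability, which the paper leaves implicit.
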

\begin{proof}
Let $C(f)$ be the Cauchy transform
\begin{equation*}
C(f)(x)=\frac{1}{2\pi i}\int_{\alpha_t}^{\beta_t}\frac{f(s)ds}{s-x}
\end{equation*}
Then from the Plemelj formula (See, e.g.\cite{Mus}), we have
\begin{equation*}
\begin{split}
C_{\pm}(f)(x)=\pm\frac{1}{2}f+\frac{1}{2\pi
i}\int_{\alpha_t}^{\beta_t}\frac{f(s)ds}{s-x},\quad x\in
[\alpha_t,\beta_t],
\end{split}
\end{equation*}
where the principal value is taken in the integral on the right hand
side. Taking into account the change of sign of
$\sqrt{(x-\alpha_t)(x-\beta_t)}$ across $[\alpha_t,\beta_t]$, we
have
\begin{equation*}
\begin{split}
K_+(x)+K_-(x)&=\left(\sqrt{(x-\alpha_t)(x-\beta_t)}\right)_+\frac{2D_n(x)}{\left(\sqrt{(x-\alpha_t)(x-\beta_t)}\right)_+}\\
&=2D_n(x).
\end{split}
\end{equation*}
To see that $K(x)$ has the desired property at $x=\infty$, let us
write the factor $\frac{1}{s-x}$ in a power series near $x=\infty$.
\begin{equation*}
\frac{1}{s-x}=-\frac{1}{x}\left(1+\frac{s}{x}+O(x^{-2})\right).
\end{equation*}
Therefore the function $K(x)$ behaves as follows as
$x\rightarrow\infty$.
\begin{equation*}
\begin{split}
K(x)&=-\frac{\sqrt{(x-\alpha_t)(x-\beta_t)}}{2\pi
ix}\left(\int_{\alpha_t}^{\beta_t}\frac{2D_n(s)ds}{\left(\sqrt{(s-\alpha_t)(s-\beta_t)}\right)_+}+O(x^{-1})\right)\\
&=-\frac{1}{2\pi
i}\int_{\alpha_t}^{\beta_t}\frac{2D_n(s)ds}{\left(\sqrt{(s-\alpha_t)(s-\beta_t)}\right)_+}+O(x^{-1})
\end{split}
\end{equation*}
which is property 3. in (\ref{eq:RHK}).
\end{proof}
\subsubsection{Parametrix outside of special points}
We are now in a position to construct the parametrix outside of the
special points. First let us consider the following matrix.
\begin{equation}\label{eq:P}
\Pi(x)=\begin{pmatrix} \frac{\gamma(x)+\gamma(x)^{-1}}{2} & \frac{\gamma(x)-\gamma(x)^{-1}}{2i}  \\
-\frac{\gamma(x)-\gamma(x)^{-1}}{2i}
&\frac{\gamma(x)+\gamma(x)^{-1}}{2}\end{pmatrix}
\end{equation}
where $\gamma(x)=\gamma_t(x)$ is defined by
\begin{equation*}
\gamma(x)=\left(\frac{x-\beta_t}{x-\alpha_t}\right)^{\frac{1}{4}}
\end{equation*}
Recall that this matrix satisfies the following Riemann-Hilbert
problem \cite{DKV}, \cite{D}, \cite{BI}.
\begin{equation*}
\begin{split}
&1. \quad \text{$\Pi(x)$ is analytic on $\mathbb{C}/[\alpha_t,\beta_t]$};\\
&2. \quad \Pi_+(x)=\Pi_-(x)\begin{pmatrix} 0 & 1  \\
-1&0\end{pmatrix},\quad x\in [\alpha_t,\beta_t];\\
&3. \quad \Pi(x)=I+ O(x^{-1}),\quad x\rightarrow\infty.
\end{split}
\end{equation*}
We can now combine $\Pi(x)$, $K(x)$ and $F(x)$ to form our
parametrix outside the special points. The main result is the
following.
\begin{proposition}\label{pro:global}
The matrix $S^{\infty}(x)$ defined by
\begin{equation}\label{eq:sinf1}
S^{\infty}(x)=e^{\left(K_0+(u^t-\overline{u}^t)F_0\right)\sigma_3}\Pi(x)e^{-\left(K(x)+(u^t-\overline{u}^t)F(x)\right)\sigma_3}
\end{equation}
satisfies the Riemann-Hilbert problem (\ref{eq:sinf}) and
(\ref{eq:jinf}). In particular, when $t\leq 1$, both
$u^t-\overline{u}^t$ and $K(x)$ vanishes and the above equation
reduces to
\begin{equation}\label{eq:sin2}
S^{\infty}(x)=\Pi(x),\quad t\leq 1.
\end{equation}
\end{proposition}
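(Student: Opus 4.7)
The plan is to verify the four properties of the Riemann--Hilbert problem (\ref{eq:sinf}) with jumps (\ref{eq:jinf}) directly, by tracking how the scalar jumps of $K(x)$ and $F(x)$ interact with the matrix jump of $\Pi(x)$ under the conjugation $e^{C\sigma_3}\Pi(x)e^{-(K(x)+(u^t-\overline{u}^t)F(x))\sigma_3}$, where $C = K_0 + (u^t-\overline{u}^t)F_0$. Since $K$ and $F$ are scalar and $\sigma_3$ is diagonal, the exponential factors commute among themselves, which makes bookkeeping simple. Analyticity of $S^{\infty}(x)$ on $\mathbb{C}\setminus(\mathbb{R}\cup B_\delta^{x^{\ast}})$ is immediate, since $\Pi$ is analytic off $[\alpha_t,\beta_t]$, $K$ is analytic off $[\alpha_t,\beta_t]$, and $F$ is analytic off $[\alpha_t,x^{\ast}]$.

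For the jump on $(\alpha_t,\beta_t)$, I would use the elementary identity
\begin{equation*}
e^{a\sigma_3}\begin{pmatrix} 0 & 1 \\ -1 & 0\end{pmatrix} e^{b\sigma_3} = \begin{pmatrix} 0 & e^{a-b} \\ -e^{b-a} & 0\end{pmatrix}.
\end{equation*}
Writing $S^{\infty}_-(x)^{-1}S^{\infty}_+(x)$ and using $\Pi_-^{-1}\Pi_+ = \begin{pmatrix} 0 & 1 \\ -1 & 0\end{pmatrix}$, the computation reduces to the above with $a=K_-+(u^t-\overline{u}^t)F_-$ and $b=-(K_++(u^t-\overline{u}^t)F_+)$. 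Then $a-b=(K_++K_-)+(u^t-\overline{u}^t)(F_++F_-)=2D_n(x)+0=2D_n(x)$, using $K_++K_-=2D_n$ from (\ref{eq:RHK}) and $F_++F_-=0$ from Lemma \ref{le:F}. This recovers exactly $\begin{pmatrix} 0 & e^{2D_n}\\ -e^{-2D_n} & 0\end{pmatrix}$.

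For the jump on $(\beta_t,x^{\ast})$, the matrix $\Pi$ and the function $K$ are analytic, so only $F$ contributes and $S^{\infty}_-(x)^{-1}S^{\infty}_+(x)=e^{-(u^t-\overline{u}^t)(F_+-F_-)\sigma_3}=e^{-2\pi i(u^t-\overline{u}^t)\sigma_3}$. Here comes the key point of the whole construction: $\overline{u}^t$ was chosen to be an integer in (\ref{eq:ubar}), hence $e^{-2\pi i\overline{u}^t\sigma_3}=I$ and the jump equals $e^{-2\pi iu^t\sigma_3}$, matching (\ref{eq:jinf}). This rounding is what lets us subtract the integer part of the point-charge contribution from the exponent without changing the jump behavior, while keeping $(u^t-\overline{u}^t)F(x)$ bounded and therefore controllable in the later local analysis near $x^{\ast}$.

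Finally, on $(-\infty,\alpha_t)\cup(x^{\ast},\infty)$ none of $\Pi$, $K$, $F$ has a jump, so $S^{\infty}$ is analytic across those intervals. For the normalization, the expansions $\Pi(x)=I+O(x^{-1})$, $K(x)=K_0+O(x^{-1})$, $F(x)=F_0+O(x^{-1})$ give $e^{-(K(x)+(u^t-\overline{u}^t)F(x))\sigma_3}=e^{-C\sigma_3}(I+O(x^{-1}))$, whence $S^{\infty}(x)=e^{C\sigma_3}\Pi(x)e^{-C\sigma_3}(I+O(x^{-1}))=I+O(x^{-1})$, the prefactor $e^{C\sigma_3}$ being designed precisely to kill the leading constants. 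The reduction (\ref{eq:sin2}) for $t\leq 1$ is immediate: $D_n\equiv 0$ gives $K\equiv 0$ and $K_0=0$, and $u^t-\overline{u}^t=0$ by the definition of $\overline{u}^t$ in this regime, so the dressing exponentials collapse to the identity. No step should present a real obstacle; the only delicate matter is the integer-rounding argument on $(\beta_t,x^{\ast})$, which is the structural reason the proposition is stated with $\overline{u}^t$ rather than $u^t$ inside the exponential.
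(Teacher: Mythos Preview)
Your proof is correct and follows essentially the same approach as the paper's: both verify the jump conditions on $(\alpha_t,\beta_t)$ and $(\beta_t,x^{\ast})$ directly by combining the scalar Riemann--Hilbert data for $K$ and $F$ with the jump of $\Pi$, use the integrality of $\overline{u}^t$ to reduce $e^{-2\pi i(u^t-\overline{u}^t)\sigma_3}$ to $e^{-2\pi i u^t\sigma_3}$, and check the normalization at infinity via the constants $K_0$ and $F_0$. Your presentation is slightly more explicit (the $e^{a\sigma_3}\begin{pmatrix}0&1\\-1&0\end{pmatrix}e^{b\sigma_3}$ identity, and the observation that no jump occurs on $(-\infty,\alpha_t)\cup(x^{\ast},\infty)$), but the argument is the same in substance.
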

\begin{proof} First let us consider the asymptotic behavior near
$x=\infty$. We have
\begin{equation*}
\begin{split}
S^{\infty}(x)&=e^{\left(K_0+(u^t-\overline{u}^t)F_0\right)\sigma_3}\left(I+O(x^{-1})\right)e^{-\left(K_0+(u^t-\overline{u}^t)F_0\right)\sigma_3}\\
&=\left(I+O(x^{-1})\right)
\end{split}
\end{equation*}
This proves 3. in (\ref{eq:sinf}). Next we consider the jump
discontinuities. It is given by
\begin{equation}\label{eq:ju}
\begin{split}
S^{\infty}_+(x)&=S^{\infty}_-(x)\begin{pmatrix}e^{A_1(x)} & 0  \\
0&e^{-A_1(x)}\end{pmatrix},\quad x\in
(-\infty,x^{\ast})/[\alpha_t,\beta_t]\\
S^{\infty}_+(x)&=S^{\infty}_-(x)\begin{pmatrix}0 & e^{A_2(x)}  \\
-e^{-A_2(x)}&0\end{pmatrix},\quad x\in (\alpha_t,\beta_t)
\end{split}
\end{equation}
where $A_i(x)$ is given by
\begin{equation*}
\begin{split}
A_1(x)&=(u^t-\overline{u}^t)\left(-F_+(x)+F_-(x)\right),\quad x\in (\beta_t,x^{\ast}),\\
A_2(x)&=(\overline{u}^t-u^t)(F_+(x)+F_-(x))+K_+(x)+K_-(x),\quad x\in
(\alpha_t,\beta_t).
\end{split}
\end{equation*}
From (\ref{eq:RHF}) and (\ref{eq:RHK}), we see that $A_1(x)$ and
$A_2(x)$ are in fact the following
\begin{equation*}
\begin{split}
A_1(x)&=-2\pi i(u^t-\overline{u}^t),\quad x\in (\beta_t,x^{\ast})\\
A_2(x)&=2D_n(x),\quad x\in (\alpha_t,\beta_t).
\end{split}
\end{equation*}
Since $\overline{u}^t$ is an integer, we see that
\begin{equation*}
\begin{split}
e^{A_1(x)}&=e^{-2\pi iu^t},\quad x\in (\beta_t,x^{\ast}).
\end{split}
\end{equation*}
Substituting these back into (\ref{eq:ju}), we see that the matrix
$S^{\infty}(x)$ does indeed satisfy the jump conditions
(\ref{eq:jinf}).
\end{proof}

\begin{remark}The appearance of the degenerate Abelian integral
$F(x)$ (\ref{eq:F}) in the parametrix (\ref{eq:sinf1}) comes from
the appearance of the elliptic theta function in the 1-cut
parametrix. Let the 1-cut parametrix outside of the special points
be $M^{\infty}(x)$. Then $M^{\infty}(x)$ is given by \cite{DKV}:
\begin{equation*}
\begin{split}
M^{\infty}(x)&=H\begin{pmatrix}\frac{\gamma+\gamma^{-1}}{2}\frac{\theta(W(x)-(u^t-\overline{u}^t)+d)}{\theta(W(x)+d)}
&\frac{\gamma-\gamma^{-1}}{-2i}\frac{\theta(-W(x)-(u^t-\overline{u}^t)+d)}{\theta(-W(x)+d)}\\
\frac{\gamma-\gamma^{-1}}{2i}\frac{\theta(W(x)-(u^t-\overline{u}^t)-d)}{\theta(W(x)-d)}&
\frac{\gamma+\gamma^{-1}}{2}\frac{\theta(-W(x)-(u^t-\overline{u}^t)-d)}{\theta(-W(x)+d)}\end{pmatrix},\quad
\Im(z)>0\\
M^{\infty}(x)&=H\begin{pmatrix}\frac{\gamma-\gamma^{-1}}{2i}\frac{\theta(-W(x)-(u^t-\overline{u}^t)+d)}{\theta(-W(x)+d)}
&-\frac{\gamma+\gamma^{-1}}{2}\frac{\theta(W(x)-(u^t-\overline{u}^t)+d)}{\theta(W(x)+d)}\\
\frac{\gamma+\gamma^{-1}}{2}\frac{\theta(-W(x)-(u^t-\overline{u}^t)-d)}{\theta(W(x)+d)}&
-\frac{\gamma-\gamma^{-1}}{2i}\frac{\theta(W(x)-(u^t-\overline{u}^t)-d)}{\theta(W(x)-d)}\end{pmatrix},\quad
\Im(z)<0.
\end{split}
\end{equation*}
for some scalar constant $d$ and constant diagonal matrix $H$, where
$W(x)$ is the Abelian integral $\int^x\Omega(s)$ (\ref{eq:normform})
and $\theta(z)$ is the elliptic theta function. One can think of the
parametrix $S^{\infty}(x)$ as a degenerate version of
$M^{\infty}(x)$ as the branch cut
$[x^{\ast}-\sigma_t,x^{\ast}+\sigma_t]$ in (\ref{eq:ell}) is closing
up and the curve degenerates into a genus zero curve. In this case,
the Abelian integral degenerates into the function $F(x)$ and while
the theta function degenerates into an exponential function. In the
multi-cut case, one could apply the analysis similar to those in
\cite{BBI}, \cite{IMM} to obtain degenerate hyper-elliptic theta
functions and use them to construct the suitable parametrix.
\end{remark}

\subsection{Parametrix near the edge points $\alpha_t$ and $\beta_t$}

At the edge points $\alpha_t$ and $\beta_t$ the approximated density
$\tilde{\rho}^t(x)$ vanishes like a square root and the local
parametrices $S^{\pm 2}(x)$ near these points can be constructed by
the use of Airy-functions. Such construction has been done many
times in the literature and we should not repeat the details here.
An interested reader can consult \cite{DKV}, \cite{DKV2}, \cite{BI}
for example.
\subsection{Local parametrix near the critical point $x^{\ast}$ for
$t>1$}\label{se:lpara} We will now consider the parametrix near the
critical point $x^{\ast}$. As in \cite{Ey}, the parametrix will be
constructed out of the monic orthogonal polynomial
$\pi_{\overline{u}^t}^{\nu}(\zeta)$ of degree $\overline{u}^t$ and
weight $e^{-\zeta^{2\nu}}$, where $2\nu$ is the order of vanishing
of $2h(x)-V(x)+l$ at $x^{\ast}$.

We would like to construct a parametrix $S^{x^{\ast}}(x)$ in
$B_{\delta}^{x^{\ast}}$ such that
\begin{equation}\label{eq:lparast}
\begin{split}
&1.\quad \text{$S^{x^{\ast}}(x)$ is analytic in
$B_{\delta}^{x^{\ast}}/\left(B_{\delta}^{x^{\ast}}\cap\mathbb{R}\right)$};\\
&2. \quad S^{x^{\ast}}_+(x)=S^{x^{\ast}}_-(x)J_S(x),\quad x\in
B_{\delta}^{x^{\ast}}\cap\mathbb{R}; \\
&3. \quad \text{$S^{x^{\ast}}(x)=\left(I+o(1)\right)S^{\infty}(x)$
as $n\rightarrow\infty$, $t\rightarrow 1$, uniformly in $\p
B_{\delta}^{x^{\ast}}$}.
\end{split}
\end{equation}

\subsubsection{Conformal map in
$B_{\delta}^{x^{\ast}}$}\label{se:cont+}

Let us define a conformal map $\zeta=f(x)$ that maps the
neighborhood $B_{\delta}^{x^{\ast}}$ into the complex plane, such
that, as $n\rightarrow\infty$, the boundary of
$B_{\delta}^{x^{\ast}}$ is mapped into infinity. We will define
$\zeta$ as follows.
\begin{equation}\label{eq:zeta}
\zeta=f(x)=\left(-n\left(2h(x)-V(x)-l\right)\right)^{\frac{1}{2\nu}},
\end{equation}
where the $\frac{1}{2\nu}$th-root is chosen such that the intervals
$[x^{\ast}-\delta,x^{\ast}]$ and $[x^{\ast},x^{\ast}+\delta]$ are
mapped onto the negative and positive real axis respectively. This
is possible because $h(x)-\frac{V(x)}{2}-\frac{l}{2}$ vanishes to
order $2\nu$ at $x^{\ast}$ and that it is real and negative on the
interval $[x^{\ast}-\delta,x^{\ast}+\delta]$ due to
(\ref{eq:thineq}).

Since $h(x)-\frac{V(x)}{2}-\frac{l}{2}$ vanishes to order $2\nu$ at
$x^{\ast}$, the function $\zeta$ is of the form
\begin{equation}\label{eq:varphi}
\zeta=n^{\frac{1}{2\nu}}(x-x^{\ast})\varphi(x)
\end{equation}
such that $\varphi(x)$ is independent on $n$ and
$\varphi(x^{\ast})\neq 0$. By choosing $\delta t$ and $\delta$
smaller if necessary, we can assume that $\varphi(x)$ and hence
$\zeta$ is conformal inside the neighborhood
$B_{\delta}^{x^{\ast}}$.

Then $\zeta$ maps the neighborhood $B_{\delta}^{x^{\ast}}$ into the
complex $\zeta$-plane such that the boundary of
$B_{\delta}^{x^{\ast}}$ is mapped into infinity.

Let us now define the constant $Z_t$ and function $\tau_t(x)$ by
\begin{equation}\label{eq:tauc}
\begin{split}
Z_t&=n\left(g_1^t(x^{\ast})-\frac{V(x^{\ast})}{2t}-\frac{\tilde{l}}{2t}\right)
-u^t\left(\log\varphi(x^{\ast})+\frac{1}{2\nu}\log n\right),\\
\tau_t(x)&=\frac{n\left(2g_1^t(x)-\frac{V(x)}{t}-\frac{\tilde{l}}{t}\right)-2Z_t+\zeta^{2\nu}-2u^t\log
\left(n^{\frac{1}{2\nu}}\varphi(x)\right)}{\zeta},\\
g_1^t(x)&=\int_{\alpha_t}^{\beta_t}\left(\tilde{\rho}^t(s)-\iota_t\frac{\delta
t}{\log\delta
t}\frac{8\pi\sqrt{(s-\alpha_t)(\beta_t-s)}}{(\alpha_t+\beta_t)^2}\right)\log(x-s)ds.
\end{split}
\end{equation}
Note that $\tau_t(x)$ does not have a pole at $x=x^{\ast}$ and that
by taking $\delta t$ and $\delta$ smaller if necessary, we can
assume that $\tau_t(x)$ is analytic inside $B_{\delta}^{x^{\ast}}$.

Then it is easy to see that $\zeta$, $\tau_t(x)$ and $Z_t$ satisfy
\begin{equation}\label{eq:jumpz}
n\left(g^t(x)-\frac{V(x)}{2t}-\frac{\tilde{l}}{2t}\right)=-\frac{\zeta^{2\nu}}{2}+\frac{\tau_t(x)\zeta}{2}+u^t\log\zeta+Z_t.
\end{equation}
Moreover, we have the following
\begin{proposition}\label{pro:etau}
As $n\rightarrow\infty$ under the scaling (\ref{eq:scale00}), the
constant $Z_t$ and function $\tau_t(x)$ are of order
\begin{equation}\label{eq:etauord}
\begin{split}
Z_t&=-\frac{u^t}{2\nu}\log\log n+O(1),\\
\tau_t(x)&=O\left(\frac{\log n}{n^{\frac{1}{2\nu}}}\right),
\end{split}
\end{equation}
uniformly in $B_{\delta}^{x^{\ast}}$.
\end{proposition}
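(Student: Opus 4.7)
The plan is to establish the two estimates by combining the near-critical expansion of $g_1^t$ given by Corollary \ref{cor:ge} with the formula (\ref{eq:ut}) for $u^t$ and the scaling (\ref{eq:scale00}), which forces $\log\delta t=-\log n+\log\log n+O(1)$.

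\emph{Estimate for $Z_t$.} Applying (\ref{eq:ge}) and noting that the extra term in the definition of $g_1^t(x)$ (the one with the $\iota_t\delta t/\log\delta t$ prefactor) contributes only an $O(\delta t/\log\delta t)$ error at $x^{\ast}$, I would first obtain
\begin{equation*}
g_1^t(x^{\ast})-\frac{V(x^{\ast})}{2t}-\frac{\tilde{l}}{2t}=\delta t\,\phi(x^{\ast})+O\!\left(\frac{\delta t}{\log\delta t}\right).
\end{equation*}
Multiplying by $n$ and using (\ref{eq:ut}) in the form $n\delta t=-u^t\log\delta t/(2\nu\phi(x^{\ast}))(1+o(1))$ converts $n\delta t\,\phi(x^{\ast})$ into $-u^t\log\delta t/(2\nu)+o(1)$, while $n\cdot O(\delta t/\log\delta t)=O(1)$ since $n\delta t/\log\delta t$ has a finite limit. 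Substituting $\log\delta t=-\log n+\log\log n+O(1)$ then yields
\begin{equation*}
n\!\left(g_1^t(x^{\ast})-\tfrac{V(x^{\ast})}{2t}-\tfrac{\tilde{l}}{2t}\right)=\tfrac{u^t}{2\nu}\log n-\tfrac{u^t}{2\nu}\log\log n+O(1),
\end{equation*}
and subtracting $u^t[\log\varphi(x^{\ast})+\log n/(2\nu)]$ exactly cancels the $\log n$ contribution, leaving $Z_t=-u^t\log\log n/(2\nu)+O(1)$.

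\emph{Estimate for $\tau_t(x)$.} Let $N(x)$ denote the numerator in the definition (\ref{eq:tauc}) of $\tau_t(x)$. By (\ref{eq:zeta}) we may substitute $\zeta^{2\nu}=-n(2h(x)-V(x)-l)$, which rewrites
\begin{equation*}
N(x)=2n\bigl[g_1^t(x)-h(x)\bigr]+\tfrac{n\delta t}{t}V(x)+n\!\left(l-\tfrac{\tilde{l}}{t}\right)-2Z_t-2u^t\log\!\left(n^{1/(2\nu)}\varphi(x)\right).
\end{equation*}
Corollary \ref{cor:ge} (equation (\ref{eq:ge0})) together with $g_1^t(x)-h(x)=(g_1^t(x)-h(x)/t)-\delta t\,h(x)/t$ shows $g_1^t(x)-h(x)=\delta t\bigl[W(x)-h(x)/t\bigr]+O(\delta t/\log\delta t)$ uniformly on $B_{\delta}^{x^{\ast}}$, where $W(x)=\int_{-2}^{2}w(s)\log(x-s)\,ds$. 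Hence each of the $x$-dependent terms of $N(x)$ is $O(n\delta t)=O(\log n)$ uniformly, while $-2Z_t=O(\log\log n)$ by the first part. Therefore $N(x)=O(\log n)$ uniformly on $B_{\delta}^{x^{\ast}}$.

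The definition of $Z_t$ forces $N(x^{\ast})=0$, and since $N$ is analytic on $B_{\delta}^{x^{\ast}}$ (after shrinking $\delta$ so that $\varphi$ does not vanish there), we may write $N(x)=(x-x^{\ast})\tilde{N}(x)$ with $\tilde{N}$ analytic. The maximum principle applied to $\tilde{N}$, whose boundary values satisfy $|\tilde{N}(x)|=|N(x)|/|x-x^{\ast}|\leq\sup_{\partial B_{\delta}^{x^{\ast}}}|N|/\delta=O(\log n)$, gives $\tilde{N}(x)=O(\log n)$ uniformly. Dividing by $\zeta=n^{1/(2\nu)}(x-x^{\ast})\varphi(x)$ and using $\varphi(x^{\ast})\neq 0$ yields $\tau_t(x)=\tilde{N}(x)/[n^{1/(2\nu)}\varphi(x)]=O(\log n/n^{1/(2\nu)})$. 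The main obstacle is the bookkeeping of the error terms: one must verify that the various $O(n\delta t/\log\delta t)$ errors really do collapse to $O(1)$ under the scaling, and that the sign convention for $\tilde{l}$ in the $t>1$ branch of Proposition \ref{pro:gineq} is applied consistently with the one implicitly used to prove Corollary \ref{cor:ge}.
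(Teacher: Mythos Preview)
Your proof is correct and follows essentially the same approach as the paper. For $Z_t$ your argument is step-for-step the paper's: apply (\ref{eq:ge}), convert $n\delta t\,\phi(x^{\ast})$ to $-\tfrac{u^t}{2\nu}\log\delta t$ via (\ref{eq:ut}), and then use $\log n+\log\delta t=\log\log n+O(1)$ from the scaling. For $\tau_t(x)$ the paper simply asserts that the bound ``follows easily'' from the definitions of $\zeta$, $\tau_t$ and relation (\ref{eq:ge0}); your maximum-principle argument (showing $N(x)=O(\log n)$, observing $N(x^{\ast})=0$ by the very definition of $Z_t$, and then bounding $\tilde N=N/(x-x^{\ast})$) is a clean way to make that assertion rigorous, and is exactly the sort of detail the paper omits.
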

\begin{proof} The first part of the proposition follows from
(\ref{eq:ge}). By (\ref{eq:ge}) and (\ref{eq:tauc}), we have
\begin{equation*}
Z_t=n\delta
t\phi(x^{\ast})-u^t\left(\log\varphi(x^{\ast})+\frac{1}{2\nu}\log
n\right)+O\left(n\frac{\delta t}{\log\delta t}\right).
\end{equation*}
Then by using (\ref{eq:ut}) to eliminate $n\delta t$, we obtain the
following,
\begin{equation}\label{eq:eord}
Z_t=-\frac{u^t}{2\nu}\left(\log n+\log\delta
t\right)-u^t\log\varphi(x^{\ast})+O\left(n\frac{\delta t}{\log\delta
t}\right).
\end{equation}
Now by taking the logarithm of (\ref{eq:scale00}), we see that,
\begin{equation*}
\log n+\log\delta t=\log\log n+O(1).
\end{equation*}
Hence we obtain
\begin{equation*}
Z_t=-\frac{u^t}{2\nu}\log\log n+O(1).
\end{equation*}
This proves the 1st equation in (\ref{eq:etauord}).

Given the first equation in (\ref{eq:etauord}), the second equation
in (\ref{eq:etauord}) now follows easily from the definition of
$\zeta$ (\ref{eq:zeta}), $\tau_t(x)$ (\ref{eq:tauc}) and the
relation (\ref{eq:ge0}) from corollary \ref{cor:ge}.
\end{proof}
\subsubsection{Construction of the parametrix}

We should now construct the parametrix by using orthogonal
polynomials with weight
$\exp\left(-\zeta^{2\nu}+\tau_t(x)\zeta\right)$.

Let $\pi_{k}^{\nu}(\zeta,\tau)$ be the monic orthogonal polynomial
of degree $k$ with respect to the weight $-\zeta^{2\nu}+\tau\zeta$,
\begin{equation*}
\int_{\mathbb{R}}\pi_k^{\nu}(\zeta)\pi_j^{\nu}(\zeta)\exp\left(-\zeta^{2\nu}+\tau\zeta\right)d\zeta=h_k^{\nu}(\tau)\delta_{kj},
\end{equation*}
where $h_k^{\nu}(\tau)$ is the normalization constant as a function
of $\tau$. Let us denote by $\Psi^{\nu}(\zeta,s)$ the following
matrix constructed from the orthogonal polynomial
$\pi_{\overline{u}^t}^{\nu}(\zeta,\tau)$.
\begin{equation}\label{eq:psi}
\Psi^{\nu}(\zeta,\tau)=\begin{pmatrix}
\pi_{\overline{u}^t}^{\nu}(\zeta,\tau) & \frac{1}{2\pi
i}\int_{\mathbb{R}}\frac{\pi_{\overline{u}^t}^{\nu}(s,\tau)
\exp\left(-\zeta^{2\nu}+\tau\zeta\right)}{s-x}ds  \\
\kappa_{\overline{u}^t-1}^{\nu}(\tau)\pi_{\overline{u}^t-1}^{\nu}(\zeta,\tau)
& \frac{\kappa_{\overline{u}^t-1}^{\nu}(\tau)}{2\pi
i}\int_{\mathbb{R}}\frac{\pi_{\overline{u}^t-1}^{\nu}(s,\tau)\exp\left(-\zeta^{2\nu}+\tau\zeta\right)}{s-x}ds
\end{pmatrix},
\end{equation}
where $\kappa_{\overline{u}^t-1}^{\nu}(\tau)=-\frac{2\pi
i}{h_{\overline{u}^t-1}^{\nu}(\tau)}$.

Then the matrix $\Psi^{\nu}(\zeta,\tau)$ satisfies the following
Riemann-Hilbert problem.
\begin{equation}\label{eq:psiRHP}
\begin{split}
&1. \quad \text{$\Psi^{\nu}(\zeta,\tau)$ is analytic on $\mathbb{C}/\mathbb{R}$};\\
&2.\quad \Psi^{\nu}_+(\zeta,\tau)=\Psi^{\nu}_-(\zeta,\tau)\begin{pmatrix} 1 & \exp\left(-\zeta^{2\nu}+\tau\zeta\right)  \\
0 & 1
\end{pmatrix},\quad \zeta\in\mathbb{R};\\
&3. \quad
\Psi^{\nu}(\zeta,\tau)=\left(I+O(\zeta^{-1})\right)\begin{pmatrix}
\zeta^{\overline{u}^t} & 0
\\ 0 & \zeta^{-\overline{u}^t} \end{pmatrix},\quad
\zeta\rightarrow\infty.
\end{split}
\end{equation}
Let $E(x)$ be the following matrix-valued function,
\begin{equation}\label{eq:Ex}
E(x)=S^{\infty}(x)\zeta^{(u^t-\overline{u}^t)\sigma_3}e^{Z_t\sigma_3}
\end{equation}
Then from (\ref{eq:sinf1}) we see that
\begin{equation}\label{eq:Ean}
E(x)=e^{\left(K_0+(u^t-\overline{u}^t)F_0\right)\sigma_3}\Pi(x)e^{-K(x)\sigma_3}\left(e^{-(u^t-\overline{u}^t)F(x)\sigma_3}
\zeta^{(u^t-\overline{u}^t)\sigma_3}\right)e^{Z_t\sigma_3}.
\end{equation}
From property 4. of (\ref{eq:RHF}), we see that the factor
\begin{equation*}
e^{-(u^t-\overline{u}^t)F(x)\sigma_3}
\zeta^{(u^t-\overline{u}^t)\sigma_3}
\end{equation*}
is analytic inside $B_{\delta}^{x^{\ast}}$. Then, since both $K(x)$
and $\Pi(x)$ are analytic inside $B_{\delta}^{x^{\ast}}$, the
function $E(x)$ is analytic inside $B_{\delta}^{x^{\ast}}$. Hence we
have
\begin{proposition}\label{pro:lpara}
Let the matrix $S^{x^{\ast}}(x)$ be
\begin{equation}\label{eq:sxast}
S^{x^{\ast}}(x)=E(x)\Psi^{\nu}(\zeta,\tau_t(x))\zeta^{-u^t\sigma_3}e^{-Z_t\sigma_3},\quad
x\in B_{\delta}^{x^{\ast}},
\end{equation}
Then, under the double scaling limit (\ref{eq:scale00}),
$S^{x^{\ast}}(x)$ satisfies the conditions
\begin{equation}\label{eq:lRH}
\begin{split}
&1.\quad \text{$S^{x^{\ast}}(x)$ is analytic in
$B_{\delta}^{x^{\ast}}/\left(B_{\delta}^{x^{\ast}}\cap\mathbb{R}\right)$};\\
&2. \quad S^{x^{\ast}}_+(x)=S^{x^{\ast}}_-(x)J_S(x),\quad x\in
B_{\delta}^{x^{\ast}}\cap\mathbb{R}; \\
&3. \quad \text{$S^{x^{\ast}}(x)=\left(I+O\left(\frac{(\log
n)^{-\frac{u^t}{2\nu}}}{n^{\frac{1-2|u^t-\overline{u}^t|}{2\nu}}}\right)\right)S^{\infty}(x)$
as $n\rightarrow\infty$, uniformly in $\p B_{\delta}^{x^{\ast}}$}.
\end{split}
\end{equation}
\end{proposition}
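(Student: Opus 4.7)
The three claims in the statement split cleanly, and each follows by a short calculation using the building blocks already assembled. I would verify them in order.

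For analyticity (item 1), the key point is that $E(x)$ was shown, just above the statement, to be analytic throughout $B_{\delta}^{x^{\ast}}$: the apparent singularity of $\Pi$, $K$, $e^{Z_t\sigma_3}$ inside the neighbourhood is absent, and the combination $e^{-(u^t-\overline{u}^t)F(x)\sigma_3}\zeta^{(u^t-\overline{u}^t)\sigma_3}$ is analytic there by property~4 of Lemma~\ref{le:F}. Thus the only possible jumps of $S^{x^{\ast}}(x)=E(x)\Psi^{\nu}(\zeta,\tau_t(x))\zeta^{-u^t\sigma_3}e^{-Z_t\sigma_3}$ come from $\Psi^{\nu}(\zeta,\tau_t(x))$ and the scalar factor $\zeta^{-u^t\sigma_3}$. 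Because $\zeta=f(x)$ is conformal on $B_{\delta}^{x^{\ast}}$ and maps $B_{\delta}^{x^{\ast}}\cap\mathbb{R}$ into $\mathbb{R}$, both of these jumps sit on $B_{\delta}^{x^{\ast}}\cap\mathbb{R}$, giving item~1.

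For the jump (item 2), I would split $B_{\delta}^{x^{\ast}}\cap\mathbb{R}$ at $x^{\ast}$. On $(x^{\ast},x^{\ast}+\delta)$ one has $\zeta>0$, so $\zeta^{-u^t\sigma_3}$ has no jump and the jump of $S^{x^{\ast}}$ is upper triangular with $(1,2)$-entry $\zeta^{2u^t}e^{-\zeta^{2\nu}+\tau_t(x)\zeta+2Z_t}$; by the master identity (\ref{eq:jumpz}) (multiplied by $2$) the exponent equals $n(2g^t(x)-V(x)/t-\tilde l/t)$, matching $J_S(x)$ on this piece. On $(x^{\ast}-\delta,x^{\ast})$ the map $\zeta$ lands on the negative real axis, so $\zeta^{-u^t\sigma_3}$ contributes the diagonal jump $\mathrm{diag}(e^{-2\pi iu^t},e^{2\pi iu^t})$; combining this with the jump of $\Psi^{\nu}$ and using the same branch of $\log\zeta$ that makes (\ref{eq:jumpz}) single-valued reproduces exactly the jump $J_S(x)$ on $(\beta_t,x^{\ast})$ listed in (\ref{eq:Sjump1}).

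For the matching condition (item 3), on $\partial B_{\delta}^{x^{\ast}}$ one has $|x-x^{\ast}|=\delta$ and hence $|\zeta|\asymp n^{1/2\nu}$. Using the standard large-$\zeta$ expansion $\Psi^{\nu}(\zeta,\tau)=(I+\Psi_1(\tau)\zeta^{-1}+O(\zeta^{-2}))\zeta^{\overline{u}^t\sigma_3}$, uniform for $\tau$ in a bounded neighbourhood of $0$ (which is where $\tau_t(x)$ lives by Proposition~\ref{pro:etau}), and substituting the definition (\ref{eq:Ex}) of $E$ gives
\begin{equation*}
S^{x^{\ast}}(x)(S^{\infty}(x))^{-1}=I+S^{\infty}(x)\,\zeta^{c\sigma_3}e^{Z_t\sigma_3}\bigl(\Psi_1(\tau_t)\zeta^{-1}+O(\zeta^{-2})\bigr)e^{-Z_t\sigma_3}\zeta^{-c\sigma_3}(S^{\infty}(x))^{-1},
\end{equation*}
with $c=u^t-\overline{u}^t\in[-\tfrac12,\tfrac12]$. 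The diagonal entries of the bracketed factor are simply $O(\zeta^{-1})=O(n^{-1/2\nu})$, while its off-diagonal entries pick up a factor $\zeta^{\pm 2c}e^{\pm 2Z_t}/\zeta$; using $|\zeta|\asymp n^{1/2\nu}$ and $e^{\pm 2Z_t}=(\log n)^{\mp u^t/\nu}(1+o(1))$ from Proposition~\ref{pro:etau}, the worst off-diagonal contribution is of size $(\log n)^{-u^t/(2\nu)}n^{(2|c|-1)/(2\nu)}$. Since $S^{\infty}$ and its inverse are uniformly bounded on $\partial B_{\delta}^{x^{\ast}}$, this yields the error estimate claimed in item~3.

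The only real obstacle is to secure the uniform-in-$\tau$ large-$\zeta$ asymptotic $\Psi^{\nu}(\zeta,\tau)=(I+\Psi_1(\tau)\zeta^{-1}+O(\zeta^{-2}))\zeta^{\overline{u}^t\sigma_3}$ used in the matching step. Since the weight $e^{-\zeta^{2\nu}+\tau\zeta}$ is a smooth compactly-analytic perturbation of $e^{-\zeta^{2\nu}}$ and $\tau_t(x)=O((\log n)/n^{1/2\nu})\to 0$ uniformly in $x\in B_{\delta}^{x^{\ast}}$, this follows from a standard stability argument for the small-norm Riemann-Hilbert problem associated with (\ref{eq:psiRHP}); the only care needed is to track that the implicit constant in $O(\zeta^{-2})$ remains bounded as $\tau\to 0$. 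All remaining details, including consistent branch selection for $\zeta^{u^t}$ and $F(x)$ inside $B_{\delta}^{x^{\ast}}$, are routine.
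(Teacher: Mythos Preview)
Your argument is essentially the paper's: items 1 and 2 are reduced to the analyticity of $E$ together with the identity (\ref{eq:jumpz}) and the jump of $\Psi^{\nu}$, and item 3 is obtained by inserting the large-$\zeta$ expansion $\Psi^{\nu}=(I+O(\zeta^{-1}))\zeta^{\overline{u}^t\sigma_3}$ and conjugating the $O(\zeta^{-1})$ block by $\zeta^{(u^t-\overline{u}^t)\sigma_3}e^{Z_t\sigma_3}$. One arithmetic slip to fix: since $e^{-2Z_t}\asymp(\log n)^{+u^t/\nu}$ grows, the dominant off-diagonal term carries a \emph{positive} power of $\log n$, not $(\log n)^{-u^t/(2\nu)}$; the paper's proof in fact arrives at $(\log n)^{+u^t/(2\nu)}$, which is the exponent used downstream in (\ref{eq:Jx}) and (\ref{eq:Rest}), so the minus sign in the displayed statement of item 3 is a typo you should not try to reproduce.
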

\begin{proof} The properties 1. and 2. follows immediately from
(\ref{eq:jumpz}) and property 2. of (\ref{eq:psiRHP}). We should now
prove property 3.

At the boundary of $B_{\delta}^{x^{\ast}}$, we have
$\zeta\rightarrow\infty$ and hence the function $S^{x^{\ast}}(x)$
behaves as
\begin{equation*}
S^{x^{\ast}}(x)=S^{\infty}(x)\zeta^{(u^t-\overline{u}^t)\sigma_3}e^{Z_t\sigma_3}\left(I+O(\zeta^{-1})\right)e^{-Z_t\sigma_3}\zeta^{(\overline{u}^t-u^t)\sigma_3}
,\quad \zeta\rightarrow\infty.
\end{equation*}
From (\ref{eq:zeta}), we see that
$\zeta^{-1}=O(n^{-\frac{1}{2\nu}})$ at the boundary of
$B_{\delta}^{x^{\ast}}$, hence the above equation becomes
\begin{equation*}
\begin{split}
S^{x^{\ast}}(x)&=S^{\infty}(x)\left(I+O\left(\frac{(\log
n)^{\frac{u^t}{2\nu}}}{n^{\frac{1-2|u^t-\overline{u}^t|}{2\nu}}}\right)\right),\\
&=\left(I+O\left(\frac{(\log
n)^{\frac{u^t}{2\nu}}}{n^{\frac{1-2|u^t-\overline{u}^t|}{2\nu}}}\right)\right)S^{\infty}(x).
\end{split}
\end{equation*}
where the second equality follows from the fact that $S^{\infty}(x)$
is bounded in $B_{\delta}^{x^{\ast}}$ as $n\rightarrow\infty$. This
proves the proposition.
\end{proof}

\subsection{Local parametrix near the critical point $x^{\ast}$ for $t\leq1$}

We will now construct the parametrix in $B_{\delta}^{x^{\ast}}$ when
$t\leq1$. In this case, the parametrix can be constructed from the
Cauchy transform \cite{DKV2}.

\subsubsection{Conformal map in $B_{\delta}^{x^{\ast}}$}

We will use the same conformal map (\ref{eq:zeta}) defined in
section \ref{se:cont+}. However, the function $\tau_t(x)$ and the
constant $Z_t$ are now defined to be
\begin{equation}\label{eq:Ztau1}
\begin{split}
Z_t&=n\left(g^t(x^{\ast})-\frac{V_t(x^{\ast})}{2}-\frac{l_t}{2}\right)\\
\tau_t(x)&=\frac{n\left(2g^t(x)-V_t(x)-l_t\right)-2Z_t+\zeta^{2\nu}}{\zeta}.
\end{split}
\end{equation}
Then by (\ref{eq:scalep}) and the Buyarov-Rakhmanov equation
(\ref{eq:br1}), we see that $Z_t$ is of order
\begin{equation}\label{eq:Zt}
Z_t=O(n\delta t)=O(n^{1-k})
\end{equation}
As we see in (\ref{eq:critker1}), the limiting kernel will be of
order $e^{Z_t}$. However, one should bear in mind that $Z_t$ is
negative and therefore the term $e^{Z_t}$ is bounded as
$n\rightarrow\infty$.

We can now deduce the order of $\tau_t(x)$ from (\ref{eq:br1})
,(\ref{eq:scalep}) and (\ref{eq:Zt}).
\begin{equation}\label{eq:ordtau}
\tau_t(x)=2n^{1-k-\frac{1}{2\nu}}\frac{U_-\left(\phi(x)-\phi(x^{\ast})\right)}{(x-x^{\ast})\varphi(x)}+o(1)
\end{equation}
From the definition (\ref{eq:Ztau1}), we see that $\zeta$, $Z_t$ and
$\tau_t(x)$ together satisfies the following
\begin{equation}\label{eq:jumpz-}
n\left(g^t(x)-\frac{V_t(x)}{2}-\frac{l_t}{2}\right)=-\frac{\zeta^{2\nu}}{2}+\frac{\tau_t(x)\zeta}{2}+Z_t
\end{equation}

\subsubsection{Construction of the parametrix}
Let us now construct the local parametrix by using Cauchy transform
(cf. \cite{DKV2}). Let $\Psi(\zeta,\tau_t(x))$ be the following
matrix.
\begin{equation}\label{eq:psi-}
\Psi(\zeta,\tau_t(x))=\begin{pmatrix} 1 & \frac{1}{2\pi
i}\int_{\mathbb{R}}\frac{\exp\left(-\zeta^{2\nu}+\tau_t(x)\zeta\right)}{s-x}ds  \\
0 & 1
\end{pmatrix}.
\end{equation}
Then $\Psi(\zeta,\tau_t(x))$ is the unique solution to the following
Riemann-Hilbert problem.
\begin{equation}\label{eq:psi-RHP}
\begin{split}
&1. \quad \text{$\Psi(\zeta,\tau_t(x))$ is analytic on $\mathbb{C}/\mathbb{R}$};\\
&2.\quad \Psi_+(\zeta,\tau_t(x))=\Psi_-(\zeta,\tau_t(x))\begin{pmatrix} 1 & \exp\left(-\zeta^{2\nu}+\tau\zeta\right)  \\
0 & 1
\end{pmatrix},\quad \zeta\in\mathbb{R};\\
&3. \quad \Psi(\zeta,\tau_t(x))=\left(I+O(\zeta^{-1})\right),\quad
\zeta\rightarrow\infty.
\end{split}
\end{equation}
We can use $\Psi(\zeta,\tau_t(x))$ to construct the local parametrix
$S^{x^{\ast}}(x)$. Let us define the matrix $E(x)$ to be
\begin{equation}\label{eq:E-}
E(x)=S^{\infty}(x)e^{Z_t\sigma_3}
\end{equation}
then $E(x)$ is analytic and invertible inside of
$B_{\delta}^{x^{\ast}}$. From (\ref{eq:jumpz-}) and
(\ref{eq:psi-RHP}), we have the following.
\begin{proposition}\label{pro:lpara-}
Let the matrix $S^{x^{\ast}}(x)$ be
\begin{equation}\label{eq:sxast1}
S^{x^{\ast}}(x)=E(x)\Psi(\zeta,\tau_t(x))e^{-Z_t\sigma_3},\quad x\in
B_{\delta}^{x^{\ast}},
\end{equation}
Then, under the double scaling limit (\ref{eq:scalep}),
$S^{x^{\ast}}(x)$ satisfies the conditions
\begin{equation}\label{eq:lRH1}
\begin{split}
&1.\quad \text{$S^{x^{\ast}}(x)$ is analytic in
$B_{\delta}^{x^{\ast}}/\left(B_{\delta}^{x^{\ast}}\cap\mathbb{R}\right)$};\\
&2. \quad S^{x^{\ast}}_+(x)=S^{x^{\ast}}_-(x)J_S(x),\quad x\in
B_{\delta}^{x^{\ast}}\cap\mathbb{R} ;\\
&3. \quad
\text{$S^{x^{\ast}}(x)=\left(I+O\left(n^{-\frac{1}{2\nu}}\right)\right)S^{\infty}(x)$
as $n\rightarrow\infty$, uniformly in $\p B_{\delta}^{x^{\ast}}$}.
\end{split}
\end{equation}
\end{proposition}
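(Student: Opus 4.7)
The plan is to mirror the proof of Proposition \ref{pro:lpara}, but the construction is simpler because in the $t\leq 1$ regime the approximate equilibrium support is a single interval, so by (\ref{eq:sin2}) we have $S^\infty=\Pi$, and the auxiliary factors $F$, $K$ and the quantized index $u^t-\overline{u}^t$ are all absent. I will verify the three properties of (\ref{eq:lRH1}) in turn.

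For property 1, analyticity of $E(x)=S^\infty(x)e^{Z_t\sigma_3}$ inside $B_\delta^{x^\ast}$ follows from (\ref{eq:sin2}): since $x^\ast>\beta_t$ and $\Pi(x)$ has its only branch cut on $[\alpha_t,\beta_t]$, the matrix $E$ extends analytically across $B_\delta^{x^\ast}$. The factor $\Psi(\zeta,\tau_t(x))$ from (\ref{eq:psi-}) is analytic in $\zeta$ off the real axis by (\ref{eq:psi-RHP}), and because $\zeta=f(x)$ is conformal on $B_\delta^{x^\ast}$ and preserves $\mathbb{R}$, and $\tau_t(x)$ is analytic there, $\Psi(\zeta,\tau_t(x))$ is analytic in $B_\delta^{x^\ast}\setminus\mathbb{R}$. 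Multiplication by the constant matrix $e^{-Z_t\sigma_3}$ does not affect analyticity.

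For property 2, I use (\ref{eq:psi-RHP}) to compute
\begin{equation*}
S^{x^\ast}_+(x)=S^{x^\ast}_-(x)\,e^{Z_t\sigma_3}\begin{pmatrix}1 & e^{-\zeta^{2\nu}+\tau_t(x)\zeta}\\ 0 & 1\end{pmatrix}e^{-Z_t\sigma_3}
=S^{x^\ast}_-(x)\begin{pmatrix}1 & e^{-\zeta^{2\nu}+\tau_t(x)\zeta+2Z_t}\\ 0 & 1\end{pmatrix},
\end{equation*}
and then invoke (\ref{eq:jumpz-}) to rewrite the exponent as $n(2g^t(x)-V_t(x)-l_t)$. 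For $t\leq 1$ we have $u^t=0$ and $D_n(x)=0$, so this is exactly $J_S(x)$ on $\mathbb{R}\cap B_\delta^{x^\ast}$ as given by (\ref{eq:Sjump1}) (taking into account that on $(x^\ast,\infty)$ the same triangular form holds by direct computation from the $Y$-jump and the analyticity of $g^t$ there).

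For property 3, on $\partial B_\delta^{x^\ast}$ one has $|\zeta|\sim \delta\,\varphi(x^\ast)n^{1/(2\nu)}\to\infty$. Writing $\Psi=I+R$ with $R=\Bigl(\begin{smallmatrix}0 & r \\ 0 & 0\end{smallmatrix}\Bigr)$, the Cauchy integral in (\ref{eq:psi-}) admits the expansion $r=O(\zeta^{-1})=O(n^{-1/(2\nu)})$; the implicit constant is controlled by $\int_\mathbb{R}e^{-s^{2\nu}+\tau_t(x)s}ds$, which is uniformly bounded because (\ref{eq:ordtau}) gives $\tau_t(x)=O(1)$ under the scaling (\ref{eq:scalep}). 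Then
\begin{equation*}
S^{x^\ast}(x)\bigl(S^\infty(x)\bigr)^{-1}
=S^\infty(x)\begin{pmatrix}1 & r\,e^{2Z_t}\\ 0 & 1\end{pmatrix}\bigl(S^\infty(x)\bigr)^{-1},
\end{equation*}
and uniform boundedness of $S^\infty$ and its inverse on $\partial B_\delta^{x^\ast}$ gives the claimed $I+O(n^{-1/(2\nu)})$ estimate. The main obstacle is controlling the conjugation by $e^{Z_t\sigma_3}$: by (\ref{eq:Zt}), $Z_t=O(n^{1-k})$ and could a priori grow, so one must observe that under (\ref{eq:scalep}) we have $\delta t\leq 0$ and hence by the strict inequality in the second line of (\ref{eq:hineq}) applied at $x^\ast$, $Z_t\leq 0$. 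Thus $|e^{2Z_t}|\leq 1$ and the off-diagonal $r\,e^{2Z_t}$ remains $O(n^{-1/(2\nu)})$, as required.
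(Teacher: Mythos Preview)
Your proof is correct and follows essentially the same approach as the paper's own proof: properties 1 and 2 are deduced from (\ref{eq:jumpz-}) and (\ref{eq:psi-RHP}), while for property 3 both you and the paper identify the key obstacle as controlling the factor $e^{2Z_t}$ in the off-diagonal entry and resolve it by observing that $Z_t\leq 0$ from the variational inequality for the equilibrium measure, so that $|e^{2Z_t}|\leq 1$. Your write-up is in fact somewhat more detailed than the paper's (you spell out the analyticity of $E$, the form of $J_S$ on both sides of $x^\ast$, and the uniform boundedness of the Cauchy-integral constant via (\ref{eq:ordtau})), but the argument is the same.
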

\begin{proof}As in the proof of proposition \ref{pro:lpara},
properties 1. and 2. are clear from (\ref{eq:jumpz-}) and
(\ref{eq:psi-RHP}). Let us take a look at the condition at the
boundary of $B_{\delta}^{x^{\ast}}$. We need to be careful as $Z_t$
may contain powers of $n$ in it. At the boundary of
$B_{\delta}^{x^{\ast}}$, we have $\zeta\rightarrow\infty$, and
\begin{equation*}
S^{x^{\ast}}(x)=S^{\infty}(x)\begin{pmatrix} 1 & O\left(e^{2Z_t}\zeta^{-1}\right)  \\
0 & 1
\end{pmatrix}
\end{equation*}
but from the expression of $Z_t$ (\ref{eq:Ztau1}) and the property
of the equilibrium measure (\ref{eq:hineq}), we see that $Z_t<0$ and
hence $e^{2Z_t}$ is bounded. Hence we have
\begin{equation*}
S^{x^{\ast}}(x)=S^{\infty}(x)\left(I+O\left(n^{-\frac{1}{2\nu}}\right)\right)=
\left(I+O\left(n^{-\frac{1}{2\nu}}\right)\right)S^{\infty}(x)
\end{equation*}
This completes the proof of the proposition.
\end{proof}

\subsection{Last transformation of the Riemann-Hilbert problem}

Let us now define $R(x)$ to be the following function.
\begin{equation}\label{eq:Rx}
\begin{split}
R(x)=\left\{
       \begin{array}{ll}
         S(x)\left(S^{r_i}(x)\right)^{-1}, & \hbox{$x$ inside $B_{\delta}^{r_i}$;} \\
         S(x)\left(S^{\infty}(x)\right)^{-1}, & \hbox{$x$ outside of $B_{\delta}^{r_i}$.}
       \end{array}
     \right.
\end{split}
\end{equation}
where $r_1=-2$, $r_2=2$, $r_3=x^{\ast}$ and $S^{\pm 2}(x)$ are the
local parametrices near the edge points $\alpha_t$ and $\beta_t$.
Then the function $R(x)$ has jump discontinuities on the contour
$\Sigma$ shown in Figure \ref{fig:sigma}.

\begin{figure}
\centering
\begin{overpic}[scale=.5,unit=1mm]{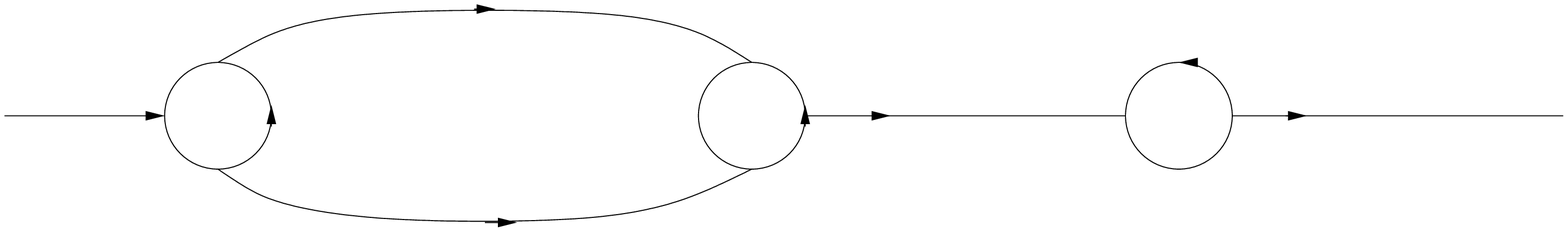}
\end{overpic}
\caption{The contour $\Sigma$ on which $R(x)$ is not
analytic.}\label{fig:sigma}
\end{figure}

In particular, $R(x)$ satisfies the Riemann-Hilbert problem
\begin{equation}\label{eq:RHR}
\begin{split}
&1. \quad \text{$R(x)$ is analytic on $\mathbb{C}/\Sigma$}\\
&2.\quad R_+(x)=R_-(x)J_R(x)\\
&3. \quad R(x)=I+O(x^{-1}),\quad x\rightarrow\infty\\
&4. \quad \text{$R(x)$ is bounded}.
\end{split}
\end{equation}
From the definition of $R(x)$ (\ref{eq:Rx}), it is easy to see that
the jumps $J_R(x)$ has the following order of magnitude.
\begin{equation}\label{eq:Jx}
\begin{split}
J_R(x)=\left\{
       \begin{array}{ll}
         I+O(n^{-1}), & \hbox{$x\in\p B_{\delta}^{-2}\cup B_{\delta}^{2}$ ;} \\
         I+O\left(n^{\frac{2|u^t-\overline{u}^t|-1}{2\nu}}(\log n)^{\frac{u^t}{2\nu}}\right), &
\hbox{$x\in\p B_{\delta}^{x^{\ast}},\quad t>1$;}\\
I+O\left(n^{-\frac{1}{2\nu}}\right), &
\hbox{$x\in\p B_{\delta}^{x^{\ast}},\quad t\leq1$;}\\
I+O\left(e^{-n\gamma}\right), & \hbox{for some fixed $\gamma>0$ on
the rest of $\Sigma$.}
       \end{array}
     \right.
\end{split}
\end{equation}
Since $|u^t-\overline{u}^t|<\frac{1}{2}$, for sufficiently large
$n$, $n^{\frac{2|u^t-\overline{u}^t|-1}{2\nu}}(\log
n)^{\frac{u^t}{2\nu}}$ and $n^{-\frac{1}{2\nu}}$ are small. Then by
the standard theory, \cite{D}, \cite{DKV}, \cite{DKV2}, we have
\begin{equation}\label{eq:Rest}
\begin{split}
R(x)&=I+O\left(n^{\frac{2|u^t-\overline{u}^t|-1}{2\nu}}(\log
n)^{\frac{u^t}{2\nu}}\right),\quad t>1\\
R(x)&=I+O\left(n^{-\frac{1}{2\nu}}\right),\quad t\leq 1.
\end{split}
\end{equation}
uniformly in $\mathbb{C}$.

In particular, the solution $S(x)$ of the Riemann-Hilbert problem
(\ref{eq:RHS}) can be approximated by $S^{\infty}(x)$ and
$S^{r_i}(x)$ as
\begin{equation}\label{eq:approxS}
\begin{split}
S(x)=\left\{
       \begin{array}{ll}
         R(x)S^{r_i}(x), & \hbox{$x$ inside $B_{\delta}^{r_i}$;} \\
         R(x)S^{\infty}(x), & \hbox{$x$ outside of
$B_{\delta}^{r_i}$.}
       \end{array}
     \right.
\end{split}
\end{equation}
When $t>1$, this approximation becomes poor as
$|u^t-\overline{u}^t|$ gets close to $\frac{1}{2}$. However, if we
restrict our attention to a small neighborhood of $x^{\ast}$ such
that
\begin{equation}\label{eq:scale0}
z=\left(x-x^{\ast}\right)n^{\frac{1}{2\nu}}\varphi(x^{\ast}),
\end{equation}
is finite, then we can still use this approximation to obtain the
asymptotic kernel (\ref{eq:critker}).

\section{Asymptotics of the correlation kernel}\label{se:ker}
We should now compute the kernel using the the asymptotics obtained
in section \ref{se:RH}. Recall that the kernel and the solution
$Y(x)$ of the Riemann-Hilbert problem (\ref{eq:RHP}) are related by
(\ref{eq:corker}).

\subsection{Asymptotics of the kernel when $t>1$}

First let us recover the asymptotics of $Y(x)$ from that of $S(x)$.
By reversing the series of transformations (\ref{eq:S}) and
(\ref{eq:Tx}), we find that, for $x\in B_{\delta}^{x^{\ast}}$, the
matrix $S(x)$ and $Y(x)$ are related by
\begin{equation*}
Y(x)=e^{n\frac{\tilde{l}}{2t}\sigma_3}S(x)e^{n\left(g^t(x)-\frac{\tilde{l}}{2t}\right)\sigma_3},\quad
x\in B_{\delta}^{x^{\ast}}.
\end{equation*}
We now use the estimate (\ref{eq:approxS}) and the expression of
(\ref{eq:sxast}) to obtain
\begin{equation*}
Y(x)=e^{n\frac{\tilde{l}}{2t}\sigma_3}R(x)E(x)\Psi^{\nu}(x,\tau_t(x))e^{n\left(g^t(x)-\frac{\tilde{l}}{2t}-u^t\log\zeta-Z_t\right)\sigma_3},\quad
x\in B_{\delta}^{x^{\ast}}.
\end{equation*}
Now from (\ref{eq:jumpz}), we see that the above is equal to
\begin{equation}\label{eq:Ypsi}
Y(x)=e^{n\frac{\tilde{l}}{2t}\sigma_3}R(x)E(x)\Psi^{\nu}(x,\tau_t(x))e^{\left(n\frac{V(x)}{2t}-\frac{\zeta^{2\nu}}{2}+\frac{\tau_t(x)\zeta}{2}\right)\sigma_3}
,\quad x\in B_{\delta}^{x^{\ast}}.
\end{equation}
Let us now study the behavior of $E(x)$ and $R(x)$ in the vicinity
of $x^{\ast}$ when $z$ defined by (\ref{eq:scale0}) is finite. First
let us consider $E(x)$. From (\ref{eq:Ean}), we see that $E(x)$ is
analytic inside the neighborhood $B_{\delta}^{x^{\ast}}$. Then from
the power series expansion of $E(x)$ inside $B_{\delta}^{x^{\ast}}$,
we obtain
\begin{equation}\label{eq:Eord}
E(x)=\left(E^0+E^1zn^{-\frac{1}{2\nu}}+O\left(n^{-\frac{1}{4\nu}}\right)\right)n^{\frac{u^t-\overline{u}^t}{2\nu}\sigma_3}\left(\log
n\right)^{-\frac{u^t}{2\nu}\sigma_3},
\end{equation}
for some constants $E^0$ and $E^1$ that are bounded as
$n\rightarrow\infty$.

Now consider $R(x)$. Let $m$ be the biggest integer such that
\begin{equation*}
m\left(1-2|u^t-\overline{u}^t|\right)< 1.
\end{equation*}
Then from the Riemann-Hilbert problem (\ref{eq:Rx}), we see that
$R(x)$ is analytic inside $B_{\delta}^{x^{\ast}}$, hence in terms of
$z$, we have the following estimate
\begin{equation}\label{eq:Rord}
\begin{split}
R(x)&=I+\sum_{j=1}^m\Lambda_j\left(n^{\frac{2|u^t-\overline{u}^t|-1}{2\nu}}(\log
n)^{\frac{u^t}{2\nu}}\right)^j\\
&+ O\left(zn^{-\frac{1}{2\nu}}(\log
n)^{\frac{u^t}{2\nu}}\right)+O\left(n^{-\frac{1}{2\nu}}\right),
\end{split}
\end{equation}
where $O\left(zn^{-\frac{1}{2\nu}}(\log
n)^{\frac{u^t}{2\nu}}\right)$ denotes $z$ dependent terms that are
of order $n^{-\frac{1}{2\nu}}(\log n)^{\frac{u^t}{2\nu}}$. The
constants $\Lambda_j$ are finite in the limit $n\rightarrow\infty$.

In particular, from (\ref{eq:Eord}) we see that
$E^{-1}(x^{\prime})E(x)$ satisfies the following
\begin{equation}\label{eq:Eord2}
E^{-1}(x^{\prime})E(x)=n^{-\frac{u^t-\overline{u}^t}{2\nu}\sigma_3}\left(\log
n\right)^{\frac{u^t}{2\nu}\sigma_3}\left(I+O\left(\frac{z^{\prime}-z}{n^{\frac{1}{2\nu}}}\right)\right)n^{\frac{u^t-\overline{u}^t}{2\nu}\sigma_3}\left(\log
n\right)^{-\frac{u^t}{2\nu}\sigma_3},
\end{equation}
while $R^{-1}(x^{\prime})R(x)$ satisfies the following estimate
\begin{equation}\label{eq:Rord2}
R^{-1}(x^{\prime})R(x)=I+O\left((z^{\prime}-z)n^{-\frac{1}{2\nu}}(\log
n)^{\frac{u^t}{\nu}}\right)
\end{equation}
Hence the product $E^{-1}(x^{\prime})R^{-1}(x^{\prime})R(x)E(x)$
satisfies the following estimate
\begin{equation}\label{eq:ERord}
E^{-1}(x^{\prime})R^{-1}(x^{\prime})R(x)E(x)=I+O\left((z^{\prime}-z)n^{\frac{2|u^t-\overline{u}^t|-1}{2\nu}}(\log
n)^{\frac{u^t}{\nu}}\right)
\end{equation}
If we now substitute (\ref{eq:Ypsi}) and (\ref{eq:ERord}) back to
(\ref{eq:corker}), we obtain the following estimate for the kernel
\begin{equation}\label{eq:kest}
\begin{split}
K_{n,N}(x,x^{\prime})&=\frac{\varphi(x^{\ast})n^{\frac{1}{2\nu}}}{2\pi
i(z-z^{\prime})}\left(0\quad 1\right)\left(\Psi^{\nu}_+(\zeta^{\prime},\tau_t(x^{\prime}))\right)^{-1}\Psi^{\nu}_+(\zeta,\tau_t(x))\begin{pmatrix} 1 \\
0
\end{pmatrix}\\
&\times\exp\left(\frac{\tilde{V}(\zeta)}{2}+\frac{\tilde{V}(\zeta^{\prime})}{2}\right)\left(1+O\left(n^{\frac{2|u^t-\overline{u}^t|-1}{2\nu}}(\log
n)^{\frac{u^t}{\nu}}\right)\right)
\end{split}
\end{equation}
where $\tilde{V}(\zeta)=-\zeta^{2\nu}+\tau_t(x)\zeta$.

Now recall that from (\ref{eq:etauord}) and (\ref{eq:varphi}), we
have
\begin{equation*}
\lim_{n\rightarrow\infty}\zeta=z,\quad
\lim_{n\rightarrow\infty}\tau_t(x)=0
\end{equation*}
If we now take these into account and substitute (\ref{eq:psi}) into
(\ref{eq:kest}), then we obtain the limit of the kernel as
\begin{equation}\label{eq:klim}
\begin{split}
\lim_{n,N\rightarrow\infty}\frac{1}{\varphi(x^{\ast})n^{\frac{1}{2\nu}}}K_{n,N}(x,x^{\prime})
&=\kappa_{\overline{u}-1}^{\nu}e^{-\frac{(z^{\prime})^{2\nu}+z^{2\nu}}{2}}\frac{\pi_{\overline{u}}^{\nu}(z^{\prime})\pi_{\overline{u}-1}^{\nu}(z)
-\pi_{\overline{u}}^{\nu}(z)\pi_{\overline{u}-1}^{\nu}(z^{\prime})}{2\pi
i(z-z^{\prime})}\\
&= K_{\overline{u}}^\nu(z,z^{\prime}),
\end{split}
\end{equation}
where $u=\lim_{n\rightarrow\infty}u^t$ and
$\kappa_{\overline{u}}^{\nu}=\kappa_{\overline{u}}^{\nu}(0)$.

To complete the proof of the first part of theorem \ref{thm:main},
we need to show that
$\varphi(x^{\ast})=\left(\frac{Q(x^{\ast})\sqrt{(x^{\ast})^2-4}}{2\nu}\right)^{\frac{1}{2\nu}}$.
This can be seen from the expression (\ref{eq:curve}). From
(\ref{eq:curve}), we have
\begin{equation*}
\sqrt{q(x)}=\int_{\mathbb{R}}\frac{\rho(y)}{y-x}dy+\frac{V^{\prime}(x)}{2}=-h^{\prime}(x)+\frac{V^{\prime}(x)}{2},
\end{equation*}
then from the fact that $2h(x^{\ast})-V(x^{\ast})-l=0$ and the
expressions of of $q(x)$ (\ref{eq:q}) and $\zeta$ (\ref{eq:zeta}),
we see that, upon integration, we have
\begin{equation*}
n(x-x^{\ast})^{2\nu}\varphi^{2\nu}(x^{\ast})=n(x-x^{\ast})^{2\nu}\frac{Q(x^{\ast})\sqrt{(x^{\ast})^2-4}}{2\nu}
\end{equation*}
this completes the proof of theorem \ref{thm:main} for the case
$t>1$.

\subsection{Asymptotics of the kernel when $t\leq 1$}

We will now use the local parametrix $S^{x^{\ast}}(x)$
(\ref{eq:sxast1}) constructed for $t\leq 1$ to compute the kernel.
In this case, the solution $Y(x)$ to (\ref{eq:RHP}) is given by
(\ref{eq:Ypsi}) with $\Psi^{\nu}(x,\tau_t(x))$ replaced by
$\Psi(x,\tau_t(x))$ and $\tau_t(x)$ defined by (\ref{eq:Ztau1}).

Let $z$ be the variable defined by (\ref{eq:scale0}) and assume that
$z$ is finite. Then by using the power series expansion of $E(x)$
and $R(x)$ inside $B_{\delta}^{x^{\ast}}$, we obtain
\begin{equation}\label{eq:REord1}
\begin{split}
E(x)&=\left(\Pi(x^{\ast})+\Pi^{\prime}(x^{\ast})\frac{z}{\varphi(x^{\ast})n^{\frac{1}{2\nu}}}
+O\left(zn^{-\frac{1}{4\nu}}\right)\right)e^{Z_t\sigma_3},\\
R(x)&=I+R^0n^{-\frac{1}{2\nu}}+O\left(n^{-\frac{1}{4\nu}}\right)+O\left(zn^{-\frac{1}{4\nu}}\right)
\end{split}
\end{equation}
where we have used (\ref{eq:sin2}) to replace $S^{\infty}(x)$ by
$\Pi(x)$ and $O\left(zn^{-\frac{1}{4\nu}}\right)$ denotes $z$
dependent terms with order $n^{-\frac{1}{4\nu}}$. Therefore the
product $E^{-1}(x^{\prime})R(x^{\prime})R(x)E(x)$ is of order
\begin{equation}\label{eq:ordprod2}
E^{-1}(x^{\prime})R(x^{\prime})R(x)E(x)=e^{-Z_t\sigma_3}\left(I+\frac{\Pi^{-1}(x^{\ast})\Pi(x^{\ast})(z-z^{\prime})}{\varphi(x^{\ast})n^{\frac{1}{2\nu}}}
+O\left(\frac{z-z^{\prime}}{n^{\frac{1}{2\nu}}}\right)\right)e^{Z_t\sigma_3}.
\end{equation}
From (\ref{eq:psi-}), one can easily check that
\begin{equation}
\begin{split}
&\left(0\quad
1\right)e^{\left(-N\frac{V(x^{\prime})}{2}-\frac{\tilde{V}(\zeta^{\prime})}{2}\right)\sigma_3}\Psi_+^{-1}(\zeta^{\prime},\tau_t(x^{\prime}))e^{-Z_t\sigma_3}=e^{\left(Z_t+N\frac{V(x^{\prime})}{2}+\frac{\tilde{V}(\zeta^{\prime})}{2}\right)}\left(0\quad
1\right)
,\\
&e^{Z_t\sigma_3}\Psi_+(\zeta,\tau_t(x))e^{\left(N\frac{V(x)}{2}+\frac{\tilde{V}(\zeta)}{2}\right)\sigma_3}\begin{pmatrix} 1 \\
0
\end{pmatrix}=e^{\left(Z_t+N\frac{V(x)}{2}+\frac{\tilde{V}(\zeta)}{2}\right)}\begin{pmatrix} 1 \\
0
\end{pmatrix}.
\end{split}
\end{equation}
where $\tilde{V}(\zeta)=-\zeta^{2\nu}+\tau_t(x)\zeta$.

We then substitute (\ref{eq:ordprod2}) and (\ref{eq:psi-}) into
(\ref{eq:corker}) and arrive at
\begin{equation*}
\begin{split}
K_{n,N}(x,x^{\prime})&=\frac{e^{\left(2Z_t+\frac{\tilde{V}(\zeta)}{2}+\frac{\tilde{V}(\zeta^{\prime})}{2}\right)}}{2\pi
i}\left(0\quad 1\right)\Pi^{-1}(x^{\ast})\Pi(x^{\ast})\begin{pmatrix} 1 \\
0
\end{pmatrix}\left(1+O\left(n^{-\frac{1}{2\nu}}\right)\right).
\end{split}
\end{equation*}
This gives the double scaling limit of the kernel
\begin{equation*}
\lim_{n,N\rightarrow\infty}e^{-2Z_{t}}K_{n,N}(x,x^{\prime})=e^{-\frac{z^{2\nu}+(z^{\prime})^{2\nu}}{2}}\frac{1}{8\pi}
\left(\frac{1}{x^{\ast}-\beta_t}-\frac{1}{x^{\ast}-\alpha_t}\right).
\end{equation*}
This completes the proof of theorem \ref{thm:main}.

\vspace{.25cm}

\noindent\rule{16.2cm}{.5pt}

\vspace{.25cm}

{\small

\noindent {\sl School of Mathematics \\
                       University of Bristol\\
                       Bristol BS8 1TW, UK  \\
                       Email: {\tt m.mo@bristol.ac.uk}

                       \vspace{.25cm}

                       \noindent  20 November  2007}}

\end{document}